\DeclareRobustCommand{\rchi}{{\mathpalette\irchi\relax}}
\newcommand{\irchi}[2]{\raisebox{\depth}{$#1\chi$}}
\tikzset{node distance=2cm, auto}
\newcommand{\lep}{\lambda \epsilon}
\newcommand{\ol}[1]{\overline{#1}}
\newcommand{\harpoon}{\overset{\rightharpoonup}}
\DeclareMathOperator{\diam}{diam}
\theoremstyle{definition}
\theoremstyle{plain}
\newtheorem{theorem}{Theorem}
\newtheorem{lemma}{Lemma}
\newtheorem{proposition}[lemma]{Proposition}
\newtheorem{corollary}[lemma]{Corollary}
\theoremstyle{remark}
\newtheorem{remark}{Remark}
\newcommand{\tkpr}[1]{\ensuremath{\tau_{#1}\left(B\left(p, 2R\right)\right)}}
\newcommand{\rB}[1]{\ensuremath{\left(#1\right)}}
\newcommand{\sB}[1]{\ensuremath{\left[#1\right]}}
\newcommand{\cB}[1]{\ensuremath{\left\{#1\right\}}}
\newcommand{\scal}[2]{\ensuremath{\left\langle#1,#2\right\rangle}}
\newcommand{\colvect}[2]{\ensuremath{\left(\begin{matrix} #1 \\ #2 \end{matrix}\right)}}
\newcommand{\rowvect}[2]{\ensuremath{\left(\begin{matrix} #1 & #2 \end{matrix}\right)}}
\newcommand{\norm}[1]{\left|\left|#1\right|\right|}
\newcommand{\modulus}[1]{\left|#1\right|}
\DeclareMathOperator{\Lip}{Lip}
\renewcommand\epsilon{\varepsilon}
\renewcommand\rho{\varrho}
\newcommand{\de}{\ensuremath{\mathrm{d}}}
\DeclareMathOperator{\Div}{Div}
\DeclareMathOperator{\Curl}{Curl}
\DeclareMathOperator{\spt}{spt}
\DeclareMathOperator{\dist}{dist}
\DeclareMathOperator{\so}{SO(2)}
\DeclareMathOperator{\id}{Id}
\newcommand{\ta}{\widetilde{A}}
\renewcommand{\tilde}{\widetilde}
\newcommand{\SOn}{SO(n)}
\newcommand{\good}{\ensuremath{\Omega \setminus B_{\lambda \epsilon}\rB{\spt \Curl A}}}
\newcommand{\eel}{\ensuremath{\mathcal{E}_{\text{el}, \epsilon}}}
\newcommand{\ecore}{\ensuremath{\mathcal{E}_{\text{core}, \epsilon}}}
\newcommand{\zak}{%
  \mathbin{\vrule height 1.6ex depth 0pt width
0.13ex\vrule height 0.13ex depth 0pt width 1.3ex}
}
\newcommand{\declareapp}[5]{\ensuremath{\begin{matrix}#1:& #2 &\longrightarrow& #3\\& #4& \longmapsto& #5\end{matrix}}}
\tikzset{node distance=2cm, auto}
\newcommand{\ti}{\tilde{A}}
\renewcommand{\phi}{\varphi}
\DeclareMathOperator{\DEG}{deg}
\newcommand{\mcA}{\mathcal{A}}
\newcommand{\uB}{\bigcup_{i=1}^N B(x_i, \rho_i)}
\renewcommand{\phi}{\varphi}
\newcommand{\points}{\ensuremath{\cB{x_i}_{i=1}^N}}
\newcommand{\tg}{\ensuremath{\tilde{g}}}
\begin{document}
\title[An Energy Estimate for Dislocation Configurations]{An Energy Estimate for Dislocation Configurations and the Emergence of Cosserat-Type Structures in Metal Plasticity}
\author{Gianluca Lauteri}
\address{Max-Planck-Institut f\"ur Mathematik in den Naturwissenschaften, Leipzig, Germany}
\email[G.~Lauteri]{Gianluca.Lauteri@mis.mpg.de}
\author{Stephan Luckhaus}
\address{Institut f\"ur Mathematik, Leipzig University, D-04009 Leipzig, Germany}
\email[S.~Luckhaus]{\tt Stephan.Luckhaus@math.uni-leipzig.de}
\maketitle
\begin{abstract}
We investigate low energy structures of a lattice with dislocations in the context of nonlinear elasticity. We show that these low energy configurations exhibit in the limit a Cosserat-like behavior. Moreover, we give bounds from above and below to the energy of such configurations.
\end{abstract}
\section{Introduction}
We study an energy functional, comparable to the atomistic model introduced in~\cite{LM} and~\cite{LW}, able to describe low energy configurations of a two dimensional lattice with dislocations in a nonlinear elasticity regime. 
Such a model consists of a core energy---which should be viewed as being an energy per atom strictly larger but comparable to the one in the ground state---and a nonlinear elastic energy outside the core, whose size is comparable to the lattice spacing $\epsilon$.\\
The main result can be described as follows: configurations of energy comparable to $\epsilon$ consist of piecewise constant microrotations with small angle grain boundaries between them. 
More precisely, we consider, for admissible strain tensor fields $A$ satisfying appropriate boundary conditions and a topological (but \emph{not} geometrical) constraint and cores $S$, an energy functional of the form
\[
 \mathcal{F}_{\epsilon}(A, S)\coloneqq \mathcal{E}_{\text{el}, \epsilon}(A, S) + \mathcal{E}_{\text{core}, \epsilon}(S),
\]
where $0<\epsilon\ll 1$ is the lattice parameter, $\mathcal{E}_{\text{el}}$ is the elastic energy, which is defined (assuming hyperelasticity outside the core $S$) as an integral functional outside the singular region $S$, and $\mathcal{E}_{\text{core}, \epsilon}$ is the energy of the singular region, which is independent of the strain field.
We can then summarize our results as follows:
\begin{itemize}
	\item The \emph{upper bound}, or the \emph{Read-Shockley formula}, see~\cite{RS} and Theorem~\ref{thm:upper_bound}. That is, we prove in our functional analytic setting the formula which gives the energy of a small angle grain boundary. Namely, for every $\epsilon >0$, we implement a column of dislocations which approximates the grain boundary through an admissible field $A\in\mathcal{A}_{\epsilon}$. More precisely,
     \[
	 \liminf_{\epsilon \downarrow 0} \inf_{(A, S)\text{ admissible}} \frac{1}{\epsilon}\mathcal{F}_{\epsilon}(A, S)  \le C_0 \tau \alpha L \rB{\modulus{\log(\alpha)} + 1}.
     \]
	\item The \emph{compactness in the class of microrotations}, i.e. Theorem~\ref{thm:density} and its Corollary~\ref{cor:microrot} . We say that a matrix field $A \in L^1_{\text{loc}}(\Omega)^{2\times 2}$ is a  \emph{microrotation} if it is a piecewise constant rotation (which can be seen as a generalization of the \emph{tri\`edre mobile} introduced by the Cosserats in~\cite{CC}). Then we can prove that every sequence of admissible pairs $(A_j, S_j)$, whose energy is comparable to the one of a small angle grain boundary, has a \emph{competitor} $(A_j', S_j')$, namely another sequence with ``essentially the same energy'', in the sense that $\mathcal{F}_{\epsilon_j}(A_j', S_j') \le C \mathcal{F}_{\epsilon_j}(A_j, S_j)$ for a \emph{universal} constant $C > 0$, and which moreover is \emph{harmonic} outside the core (see Proposition~\ref{proposition:wlog_harmonic}). We then combine this harmonic competitor with a particular foliation (constructed via an ad hoc covering argument in Lemma~\ref{lemma:foliation}), through a balls construction (inspired to the one used for the Ginzburg-Landau functional, see~\cite{SS} and the references therein). This gives an estimate on the $1$-density with respect to the rescaled energy functional, which we prove in Theorem~\ref{thm:density}. Then, thanks to the geometric rigidity results in~\cite{MSZ} and~\cite{LL}, we see that this competing sequence admits a subsequence which converges strongly in $L^2(\Omega)$ to a microrotation (Corollary~\ref{cor:microrot}).
	\item The \emph{lower bound} to the energy. From the structure of limit fields obtained, we can infer a better lower bound than the one given applying only geometric rigidity. Namely, we prove that the logarithmic term in the upper bound is optimal.
	\end{itemize}
We believe that our result partially explains the microstructure of a metal after the industrial hardening process, which consists of annealing which would lead to low energy configurations and quenching. What remains open is of course to extend the estimate to the three dimensional case and then to the situation of lattices allowing twinned structure.\\
Then there is also the problem of subsequent cold plastic deformations, where cold means that the motion of dislocations is confined to crystallographic glide planes. At the moment we do not even have a conjecture for this situation.\\
It is worthwhile to compare our result with the differential geometric description of dislocation structures, introduced by Kondo, Kr\"oner and Bilby at el. (see~\cite{KO},~\cite{KR},~\cite{BBS}) and also with $\Gamma$-limit results in the context of linear elasticity where implicitly or explicitly a volume density of dislocations is assumed (see~\cite{GLP}). \\
It remains to be investigated if these models remain valid as averaged limits if on an intermediate scale there exists a Cosserat structure of micrograins.
	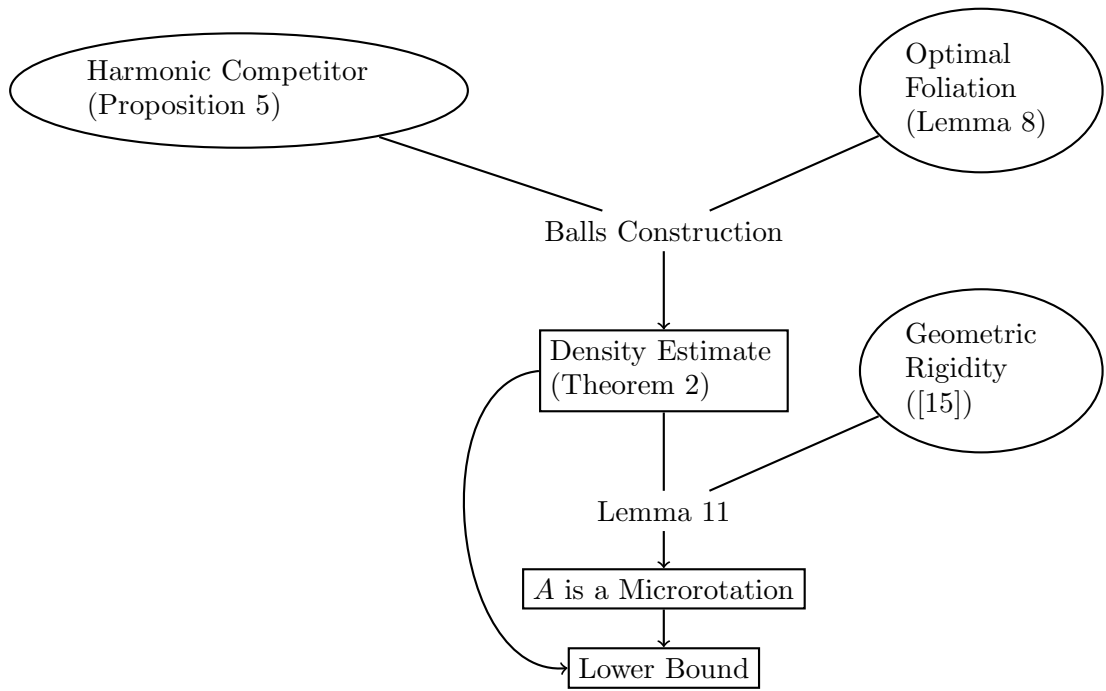
\begin{figure}[!ht]
	 \center
	 \begin{tikzpicture}[node distance=0.5cm, auto]
\matrix [column sep=7mm, row sep=5mm] {
    \node (hc) [draw, shape=ellipse,thick,text width=4cm] {Harmonic Competitor (Proposition~\ref{proposition:wlog_harmonic})}; &
  \node [draw=none, fill=none] {};&
    \node (fl) [draw, shape=ellipse, thick, text width=2cm] {Optimal Foliation (Lemma~\ref{lemma:foliation})}; \\
  \node [draw=none, fill=none] {};&
  \node (bc) [draw=none, fill=none] {Balls Construction}; \\
  \node [draw=none, fill=none] {};&
    \node (de) [draw, shape=rectangle,thick, text width=3cm] {Density Estimate (Theorem~\ref{thm:density})};&
    \node (gr) [draw, shape=ellipse,thick, text width=2cm] {Geometric Rigidity (\cite{MSZ})};\\
  \node [draw=none, fill=none] {};&
    \node (am) [draw=none, fill=none] {Lemma~\ref{lemma:vmeas}};&\\
  \node [draw=none, fill=none] {};&
  \node (mr) [draw, shape=rectangle,thick] {$A$ is a Microrotation};\\
  \node [draw=none, fill=none] {};&
  \node (lb) [draw, shape=rectangle,thick] {Lower Bound};\\
};
\draw[-, thick,bend right] (hc) -- (bc);
\draw[-, thick] (fl) -- (bc);
\draw[->, thick] (bc) -- (de);
\draw[-, thick] (de) -- (am);
\draw[-, thick] (gr) -- (am);
\draw[->,thick] (am) -- (mr);
\draw[->, thick] (mr) -- (lb);
\path[every node/.style={font=\sffamily\small}]
    (de) edge[->, thick, bend right=90] node [left] {} (lb);
\end{tikzpicture}
    \caption{Compactness in the class of Microrotations and Lower Bound.}
    \label{pic:strategy}
	\end{figure}

\section{Notations and Definitions}
 $\SOn$ denotes the group of rotations, i.e.
 \[
  \SOn \coloneqq  \cB{A \in \mathbb{R}^{n\times n}\biggr| A^TA = A A^T = \id,\quad \det(A) = 1}.
 \]
 We also recall that the $\Curl$ of a matrix field $A\in L^1_{\text{loc}}(\Omega)^{2\times 2}$ is defined in the sense of distributions as 
 \footnote{More in general, when $A\in L^1_{\text{loc}}(\Omega)^{n \times n}$, its distributional $\Curl$ can be defined (identifying $A$ with a vector of $1$-forms) as 
	 \[
		 \scal{\Curl A^{(i)}}{\varphi} \coloneqq -\int_{\Omega} \de \varphi \wedge \star A^{(i)},\qquad \forall \varphi \in \mathcal{C}^{\infty}_c\rB{\Omega},
	 \]
	 where $\star$ is the Hodge operator.
 }
 \[
  \scal{\Curl A^{(i)}}{\varphi}\coloneqq \scal{A^{(i)}}{J\nabla \varphi} = \int_{\Omega} A^{(i)} \cdot J\nabla \varphi \de x,\quad J\coloneqq \left(\begin{matrix}0 & -1 \\ 1 & 0\end{matrix}\right),\quad \varphi \in \mathcal{C}^{\infty}_c\rB{\Omega},
 \]
 while the support of a distribution $T \in \mathcal{D}'(\Omega)$ is
 \[
  \spt T \coloneqq  \Omega\setminus \rB{\bigcup_{\substack{U \subset \Omega \text{ open }\\T(\varphi) = 0 \:\forall \varphi\in \mathcal{C}^{\infty}_c(U)}}{U}}.
 \]
 With $\mathcal{L}^n$ or $\modulus{\cdot}$ we denote the Lebesgue measure on $\mathbb{R}^n$, while we write $\mathcal{H}^k$ for the $k$-dimensional Hausdorff measure. With $C>0$ we always denote constants which depend at most on the dimension (i.e., $2$ in our analysis), and may vary from line to line.\\
 We can now state the problem. In what follows
 \begin{itemize}
  \item $\Omega\coloneqq [-L, L]^2$ represents a section of a crystal, $L>0$;
  \item $\epsilon > 0$ is the \emph{lattice parameter}, i.e. the distance between atoms;
  \item $1 \gg \alpha > 0$ is the (``small'') misorientation angle between two grains;
  \item $\ell > 0$ is a parameter (much smaller than $L$): in an $\ell$-neighborhood of $x = \pm L$ we are going to impose the boundary conditions;
  \item $\lambda > 0$ is a parameter (independent of $L, \epsilon, \alpha$) so that $\lambda \epsilon$ gives what physicists call the \emph{core radius};
  \item $\tau > 0$ is another parameter independent of all the others, which is defining the \emph{minimal length of the Burgers' vector}, $\tau \epsilon$.
 \end{itemize}
We then restrict our attention to the following class of \emph{admissible strain fields}, denoted by $\mathcal{A}\rB{\epsilon, \alpha, L, \tau, \lambda, \ell}$ (to which we shall simply refer to as $\mathcal{A}_{\epsilon}$, in the case when the other parameters are clear from the context), which is defined as the family of matrix fields $A: \Omega \to \mathbb{R}^{2\times 2}$ satisfying the following conditions:
\begin{enumerate}[($\mathcal{A}_{\epsilon}$, i)]
  \item $A \in L^1_{\text{loc}}(\Omega)^{2\times 2}$ and $A \in L^2\rB{\good}^{2\times 2}$;
  \item (\emph{Boundary Condition}) 
  $A \equiv R_{\alpha}$ in $[-L, -L+\ell]\times [-L, L]$ and $A\equiv R_{-\alpha}$ in $[L-\ell, L]\times [-L, L]$, where $R_{\alpha}$ is the counter-clockwise rotation through the angle $\alpha$, that is
  \[
   R_{\alpha} = \sB{\begin{matrix}\cos(\alpha)&-\sin(\alpha)\\\sin(\alpha)&\cos(\alpha)\end{matrix}};
  \]
  \item (\emph{First Quantization of the Burgers' vector}) For every closed, Lipschitz simple curve $\gamma \subset \Omega\setminus B_{\lambda \epsilon}(\spt\Curl(A))$, either
  \[
      \int_{\gamma}{A\cdot t \mathrm{d}\mathscr{H}^1} = 0
  \]
  or
  \[
      \modulus{\int_{\gamma}{A\cdot t \mathrm{d}\mathscr{H}^1}}\ge \tau \epsilon.
  \]
\end{enumerate}
We call an \emph{admissible core} any compact subset of $[-L+\ell, L-\ell]\times [-L, L]$, i.e. any element of $\mathcal{K}([-L+\ell, L-\ell]\times [-L, L])$.
The \emph{elastic energy} of a pair $(A, S) \in \mathcal{A}_{\epsilon} \times \mathcal{K}([-L+\ell, L-\ell]\times [-L, L])$ is 
\[
 \eel(A, S)\coloneqq \frac{1}{\tau}\int_{\Omega\setminus B_{\lambda\epsilon}(S)} \dist^2(A, \so) \de x,
\]
while the \emph{core energy} depends only on the core and is defined as
\[
 \ecore(S)\coloneqq \frac{1}{\lambda^2} \modulus{B_{\lambda\epsilon}(S)}.
\]
We define the set of admissible pairs
\[
 \mathcal{P}\rB{\epsilon, \alpha, L, \tau, \lambda, \ell}\coloneqq \mathcal{A}\rB{\epsilon, \alpha, L, \tau, \lambda, \ell}\times \mathcal{K}([-L+\ell, L-\ell]\times [-L, L])
\]
Whenever the constants $\alpha, L, \tau, \lambda, \ell$ are clear from the context, we shall simply write $\mathcal{P}_{\epsilon}$ for $\mathcal{P}\rB{\epsilon, \alpha, L, \tau, \lambda, \ell}$.
The (free) energy functional is defined on pairs $(A, S) \in \mathcal{P}_{\epsilon}$ as
\[
 \mathcal{F}_{\epsilon}(A, S)\coloneqq \begin{cases}
 								\eel(A, S) + \ecore(S)&\text{if }\spt(\Curl(A))\subset S,\\
 								+\infty&\text{otherwise}.
 							   \end{cases}
\]
We also define the relaxed energy on admissible fields as
\[
 \mathcal{F}_{\epsilon}(A)\coloneqq \mathcal{F}_{\epsilon}(A, \spt(\Curl(A))).
\]
For notational simplicity, for a set $S$ we let $\Omega_{\lambda\epsilon}(S)\coloneqq \Omega\setminus B_{\lambda\epsilon}(S)$, while for a matrix field $A$ $\Omega_{\lambda\epsilon}(A)\coloneqq \Omega\setminus B_{\lambda\epsilon}(\spt(\Curl(A)))$.\\
Recall that a function $u \in L^1(\Omega)$ is in $BV(\Omega)$ if its distributional derivative $Du$ is a finite Radon measure. Moreover, the derivative can be written as
\[
 Du = \nabla u \mathcal{L}^n + \rB{u^+ - u^-} \nu_u \mathcal{H}^{n-1}\zak S_u + D^c u,
\]
where $\nabla u$ is the approximate gradient, $\nu_u$ is the unit normal to the singular set $S_u$ of $u$ and $D^c u$ is the Cantor part of the derivative (we refer to~\cite{AFP} for more details).\\
We say that a matrix field $A$ is a \emph{microrotation} if $A \in BV(\Omega)^{n\times n}$, $A(x) \in \SOn$ for almost every $x \in \Omega$ and $DA = D^J A$, i.e.
\[
 DA^{(i)} = \rB{A^{(i),+} - A^{(i),-}} \otimes \nu_{A} \mathcal{H}^{n-1}\zak S_A,\qquad i=1, \cdots, n.
\]
 Recall that is well defined a trace for matrix fields whose $\Curl$ is square-integrable, in the following sense. If $U$ is a bounded Lipschitz domain in $\mathbb{R}^2$,
 \[
  H(\Curl, U)^{2\times 2}\coloneqq \cB{A \in L^2(U)^{2 \times 2}\biggr| \Curl(A) \in L^2\rB{U}}.
 \]
 Then, the operator
 \[
  \gamma: A \in H(\Curl, U) \longmapsto A\cdot t \in H^{-\frac{1}{2}}(\partial U),
 \]
 is well defined and continuous (where $t(x)$ is the tangent vector to $\partial U$ at the point $x$), in particular there exists a constant $C = C(U)>0$ such that
 \[
  \modulus{\int_{\partial U}{A \cdot t \de \mathcal{H}^1}} \le C \norm{A}_{H(\Curl, U)}.
 \]
 Moreover, an approximation argument (see~\cite{DL}) gives
 \[
  \int_{\partial U} A\cdot t \de \mathcal{H}^1 = \int_{U}\Curl(A) \de x \qquad \forall A \in H\rB{\Curl, \Omega}^{2\times 2}.
 \]
 To every $\gamma$ closed, Lipschitz, simple curve contained in $\Omega_{\lambda \epsilon}(A)$ we associate its \emph{Burgers' vector} defined as
\[
 \harpoon{b}(\gamma)\coloneqq \int_{\gamma} A \cdot t \de \mathcal{H}^1.
\]
\begin{remark}
 Although we chose $\dist^2(\cdot, SO(2))$ as the elastic energy density, all the results we prove remain valid if we consider instead a function $W: \mathbb{R}^{2\times 2}\to [0, \infty)$ which satisfies the usual assumptions of an elastic energy density in (two dimensional) nonlinear elasticity, that is
 \begin{enumerate}[(i)]
  \item $W$ is continuous and of class $\mathcal{C}^2$ in a neighborhood of $SO(2)$;
  \item $W(\text{Id}) = 0$, i.e. the reference configuration is stress-free;
  \item $W(RA) = W(A)$ for every matrix $A \in \mathbb{R}^{2\times 2}$, i.e. $W$ is frame indifferent,
 \end{enumerate}
 together with the growth assumption
 \begin{enumerate}[(iv)]
  \item There exists a constant $C > 1$ such that $C^{-1}\dist^2(A, SO(2)) \le W(A) \le C\dist^2(A, SO(2))$.
 \end{enumerate}
 Condition (iv) is rather restrictive, but it is essential in order to apply the Geometric Rigidity estimate of M\"uller, Scardia and Zeppieri (~\cite{MSZ}).
\end{remark}
\section{The Read-Shockley Formula}
\begin{theorem}
 \label{thm:upper_bound}
 There exists a constant $C_0>0$ such that
 \begin{equation}
  \label{eq:RS}
  \liminf_{\epsilon \downarrow 0} \inf_{(A, S) \in \mathcal{P}_{\epsilon}} \frac{1}{\epsilon}\mathcal{F}_{\epsilon}(A, S)  \le C_0 \tau \alpha L \rB{\modulus{\log(\alpha)} + 1}.
 \end{equation}
\end{theorem}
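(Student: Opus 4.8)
The plan is to construct, for every sufficiently small $\epsilon$, an explicit admissible pair $(A_\epsilon, S_\epsilon) \in \mathcal{P}_\epsilon$ realizing a single small-angle tilt boundary along the segment $\{x=0\}$, and to estimate its energy directly, so that the infimum in~\eqref{eq:RS} is bounded above by the energy of this competitor. The underlying picture is the classical Read--Shockley wall: an array of $N_\epsilon$ edge dislocations stacked vertically along the $x_1=0$ line with spacing $h = \epsilon/(2\sin(\alpha/2)) \approx \epsilon/\alpha$, so that $N_\epsilon \approx 2L/h \approx 2\alpha L/\epsilon$ dislocations fit in $\Omega$. Each dislocation carries a Burgers' vector of length $\epsilon$ (hence $\ge \tau\epsilon$ provided we only claim the bound with the constant $C_0$ absorbing $\tau$; more carefully one takes the elementary Burgers' vector and checks the quantization condition ($\mathcal{A}_\epsilon$, iii) holds because every loop integral is an integer multiple of $\epsilon$ times a lattice vector). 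Away from the two boundary strips $A_\epsilon$ should interpolate from $R_\alpha$ on the left to $R_{-\alpha}$ on the right; the natural choice is to build $A_\epsilon$ from the gradient of the displacement field of the dislocation wall in linear elasticity, suitably rotated/corrected so that it takes values near $SO(2)$ and equals exactly $R_{\pm\alpha}$ in the boundary layers.

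The key steps, in order, are: (1) fix the dislocation positions $p_k = (0, kh)$, $k = -N_\epsilon/2,\dots,N_\epsilon/2$, and set the core $S_\epsilon = \bigcup_k \{p_k\}$ (a finite set, hence compact and contained in $[-L+\ell,L-\ell]\times[-L,L]$ once $\ell$ is chosen so the wall stays away from $x_1=\pm L$ — here one uses that the wall is on $x_1=0$, comfortably interior); (2) write down $A_\epsilon$ on $\Omega_{\lambda\epsilon}(S_\epsilon)$ as a smooth field, e.g. using the superposition $\beta_\epsilon(x) = \sum_k \beta^{\mathrm{edge}}(x - p_k)$ of single-dislocation strain fields plus a rotation correction, and cut off/modify it in the boundary strips to meet condition ($\mathcal{A}_\epsilon$, ii) — this requires a gluing region of width $O(\ell)$ whose contribution must be shown to be lower-order; (3) verify ($\mathcal{A}_\epsilon$, iii): any simple closed Lipschitz curve in $\Omega_{\lambda\epsilon}(S_\epsilon)$ encircles a subset of the $p_k$, and $\int_\gamma A_\epsilon\cdot t = (\text{number of enclosed dislocations})\times(\text{Burgers' vector})$, which is $0$ or has modulus $\ge \tau\epsilon$ by construction; (4) estimate $\eel(A_\epsilon, S_\epsilon) = \frac1\tau\int_{\Omega\setminus B_{\lambda\epsilon}(S_\epsilon)}\dist^2(A_\epsilon, SO(2))$. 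The elastic field of one edge dislocation decays like $1/|x-p_k|$, so near a single dislocation (between radius $\lambda\epsilon$ and radius $\sim h$) the energy density is $\sim \epsilon^2/|x-p_k|^2$, contributing $\sim \epsilon^2\log(h/(\lambda\epsilon)) \sim \epsilon^2\log(1/\alpha)$ per dislocation; summing over $N_\epsilon \sim \alpha L/\epsilon$ dislocations gives $\sim \epsilon\,\alpha L\,|\log\alpha|$. The far field (distance $\gg h$ from the wall) is where the superposition nearly cancels and one is left with essentially the two constant rotations $R_{\pm\alpha}$, contributing $O(\epsilon \alpha L)$ — the "+1" in the formula. (5) estimate $\ecore(S_\epsilon) = \lambda^{-2}|B_{\lambda\epsilon}(S_\epsilon)| \le \lambda^{-2} N_\epsilon \pi(\lambda\epsilon)^2 = \pi N_\epsilon \epsilon^2 \sim \epsilon\,\alpha L$, again of order $\epsilon\alpha L$, hence absorbed. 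Dividing by $\epsilon$ and letting $\epsilon\downarrow 0$ yields the $\liminf$ bound with $C_0$ depending only on the dimension and $\lambda$.

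The main obstacle I expect is step (4), and within it two technical points. First, one must control the \emph{interaction} cross-terms in $\dist^2(A_\epsilon, SO(2))$: because $\dist^2(\cdot, SO(2))$ is only comparable to a quadratic form near $SO(2)$, not globally quadratic, one needs that $A_\epsilon$ genuinely stays in a fixed neighborhood of $SO(2)$ everywhere on $\Omega_{\lambda\epsilon}(S_\epsilon)$ — this is why the core radius $\lambda\epsilon$ must be chosen of the right (fixed) size, large enough to excise the region where the linear-elastic strain is $O(1)$. Second, the logarithmic cutoff: the per-dislocation integral $\int_{\lambda\epsilon \le |x-p_k|\le c}\epsilon^2/|x|^2\,dx$ must be split using the fact that at distances beyond $\sim h = \epsilon/\alpha$ the wall's periodicity makes the field decay \emph{exponentially} (the classical Read--Shockley screening), so the $\log$ is cut off at scale $h$ rather than $L$; getting the exponential decay of the periodic superposition is the heart of the estimate, and is where the factor $|\log\alpha|$ (rather than $|\log\epsilon|$) comes from. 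The gluing in step (2) near $x_1=\pm L$ is routine but must be done carefully enough that its energy is $O(\epsilon\alpha L)$ and not larger.
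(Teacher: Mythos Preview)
Your proposal is correct and would yield the bound, but it follows a genuinely different route from the paper's proof. You use the \emph{analytic} Read--Shockley construction: superpose the linear-elastic strain fields of an array of edge dislocations, and rely on the exponential screening of the periodic wall to cut the per-dislocation logarithm at scale $h\sim\epsilon/\alpha$ rather than $L$. The paper instead gives a purely \emph{piecewise-affine} construction: around each dislocation point it builds a dyadic family of nested squares $Q_n=[-r_n,r_n]^2$ with $r_n=2^n r_0$, $r_0\sim\lambda\epsilon$, up to level $\bar n\sim|\log\alpha|$, triangulates each annulus $Q_n\setminus Q_{n-1}$, and defines a map $v=v^{(2)}\circ v^{(1)}$ by linear interpolation on each triangle (first inserting the slip $\pm\harpoon b$, then rotating by $R_{\pm\alpha}$). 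On a level-$n$ triangle one has $\dist^2(\nabla v,\so)\le C(4^{-n}+\alpha^2)$, and since the triangle has area $\sim r_n^2=4^n r_0^2$, summing over $n=1,\dots,\bar n$ gives directly $\sim\epsilon^2|\log\alpha|$ per dislocation with no screening estimate needed---the dyadic geometry localizes the energy automatically. Stacking $N\sim\alpha L/\epsilon$ such cells vertically and taking $A_{\mathrm{gb}}=\nabla u$ finishes.

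What each buys: your approach is closer to the physics and produces a field that is smooth (indeed harmonic) away from the cores, at the cost of having to prove the exponential decay of the superposed field and to control the nonlinear $\dist^2(\cdot,\so)$ via a smallness argument; the paper's approach is entirely elementary---no PDE, no decay estimate, no smallness needed beyond checking $|\nabla v-\mathrm{id}|\le C 2^{-n}$ on each triangle---but the competitor is only piecewise affine and the construction is more combinatorial. Both verify admissibility in the same way (any loop encloses an integer number of cores, hence an integer multiple of $\harpoon b$), and both give the same core-energy count $N\epsilon^2\sim\epsilon\alpha L$.
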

\begin{proof}
Consider $\ol{n} \in \mathbb{N}$ such that $\frac{1}{\alpha} \in [2^{\ol{n}}, 2^{\ol{n}+1})$. Without loss of generality, we can assume $\epsilon \in L2^{1-k} \frac{1}{2\mathbb{N}}\coloneqq \cB{\frac{L2^{1-k}}{2z}\biggr| z \in \mathbb{N}}$. Set $r_0\coloneqq \frac{\lambda\epsilon}{2}$ and $N\coloneqq \frac{L}{2^k r_0} \in 2\mathbb{N}$. Let $r_n\coloneqq 2^n r_0$ and
 \[
  \begin{split}
   p^1_n&\coloneqq \rB{-r_n, r_n},\: p^2_n\coloneqq \rB{r_n, r_n},\: p^3_n\coloneqq \rB{r_n, -r_n},\: p^4_n\coloneqq (-r_n, -r_n) \text{ for } n = 0, \cdots, \ol{n},\\
   \Delta^1_n&\coloneqq \Delta\rB{p^1_n, p^1_{n-1}, p^4_{n-1}},\: \Delta^2_n\coloneqq \Delta\rB{p^1_n, p^4_{n-1}, p^4_n},\\
   \Delta^3_n&\coloneqq \Delta\rB{p^2_n, p^3_{n-1}, p^3_n},\: \Delta^4_n\coloneqq \Delta\rB{p^2_n, p^2_{n-1}, p^3_{n-1}} \text{ for }n = 1, \cdots, \ol{n},
  \end{split}
 \]
 where $\Delta(a, b, c)$ denotes the triangle whose vertices are $a$, $b$ and $c$.
\begin{figure}[!ht]
  \center
  \begin{tikzpicture}[scale=0.7]
\draw [fill=black, ultra thick] (-1,-1) rectangle (1, 1);
     \draw [pattern=north west lines, pattern color = gray] (-2, 2) -- (-1, 1) -- (-1, -1) -- (-2, 2);
 \draw [pattern=north east lines, pattern color=gray] (-2, 2) -- (-2, -2) -- (-1, -1) -- (-2, 2);
 \draw [pattern=north west lines, pattern color=gray] (2, 2) -- (1, 1) -- (1, -1) -- (2, 2);
 \draw [pattern=north east lines, pattern color=gray] (2, 2) -- (1, -1) -- (2, -2) -- (2, 2);
    \draw [fill=black, ultra thick] (-1,-1) rectangle (1, 1);
    \draw [ultra thick] (-2,-2) rectangle (2, 2);
 \draw [thin] (-2, -2) -- (-1, -1);
 \draw [thin] (-2, 2) -- (-1, 1);
 \draw [thin] (2, 2) -- (1, 1);
 \draw [thin] (2, -2) -- (1, -1);
 \draw [thin] (-2, 2) -- (-1, -1);
 \draw [thin] (2, 2) -- (1, -1);
 \draw [thin] (0, 1) -- (0,2);
 \node [scale=0.6] at (0, -1.5) {$id$};
 \node [scale=0.6] at (-0.6, 1.5) {$id+ \harpoon b$};
 \node [scale=0.6] at (0.6, 1.5) {$id- \harpoon b$};
 \draw [thin] (-4, -4) -- (-2, -2);
 \draw [thin] (-4, 4) -- (-2,2);
 \draw [thin] (4, 4) -- (2, 2);
 \draw [thin] (4, -4) -- (2, -2);
 \draw [thin] (-4, 4) -- (-2, -2);
 \draw [thin] (4, 4) -- (2, -2);
 \draw [thin] (0, 2) -- (0,4);
 \node at (0, -3) {$id$};
 \node at (-1.2, 3) {$id+ \harpoon b$};
 \node at (1.2, 3) {$id- \harpoon b$};
 \draw [pattern=north west lines, pattern color = gray] (-4, 4) -- (-2, 2) -- (-2, -2) -- (-4, 4);
 \draw [pattern=north east lines, pattern color=gray] (-4, 4) -- (-4, -4) -- (-2, -2) -- (-4, 4);
 \draw [pattern=north west lines, pattern color=gray] (4, 4) -- (2, 2) -- (2, -2) -- (4, 4);
 \draw [pattern=north east lines, pattern color=gray] (4, 4) -- (2, -2) -- (4, -4) -- (4, 4);
    \draw [ultra thick] (-4,-4) rectangle (4,4);
\end{tikzpicture}
    \caption{The map $v^{(1)}$ (the striped triangles are the ones where we are interpolating).}
   \label{pic:v_1}
\end{figure}
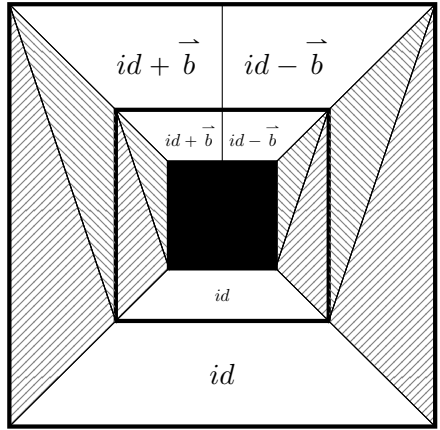

 Let $Q_n\coloneqq \sB{-r_n, r_n}^2$ and $Q\coloneqq \sB{-r_{\ol{n}}, r_{\ol{n}}}^2$, $\harpoon b\coloneqq \rB{\tau \epsilon, 0}$ and define (see Figure~\ref{pic:v_1})
 \[
  v^{(1)}\coloneqq \begin{cases}
                  \text{id}&\text{in } \rB{\rB{Q \setminus \bigcup_{i, n}\Delta^i_n} \cap \cB{y < 0}}\cup \sB{-r_0, r_0}^2,\\
                  \text{id} + \harpoon b&\text{in } \rB{Q \setminus \bigcup_{i, n}\Delta^i_n} \cap \cB{y > 0, x \le 0},\\
                  \text{id} - \harpoon b&\text{in } \rB{Q \setminus \bigcup_{i, n}\Delta^i_n} \cap \cB{y > 0, x > 0},\\
                  \text{linear interpolation}&\text{ in } \bigcup_{i, n}\Delta^i_n.
                 \end{cases}
 \]
 \begin{figure}[!ht]
  \center
  \begin{tikzpicture}[scale=1]
  \pgfmathsetmacro{\b}{0.3}
  \node at (-1.3, 0) {$R_{-\alpha}$};
  \node at (1.3, 0) {$R_{\alpha}$};
  \draw[pattern=north west lines, pattern color=gray] (-2+\b, 2) -- (0, 2) -- (-1+\b, 1) -- (-2+\b, 2);
  \draw[pattern=north east lines, pattern color=gray] (0, 2) -- (-1+\b, 1) -- (1-\b, 1)--(0, 2);
  \draw[pattern=north west lines, pattern color = gray] (2-\b, 2) -- (0, 2) -- (1-\b, 1) -- (2-\b, 2);
  \draw[pattern=north east lines, pattern color=gray] (-2, -2) -- (-1, -1) -- (0, -2) -- (-2, -2);
  \draw[pattern=north west lines, pattern color=gray] (0, -2) -- (-1, -1) -- (1, -1) -- (0, -2);
  \draw[pattern=north east lines, pattern color=gray] (2, -2) -- (0, -2) -- (1, -1) -- (2, -2);
  \draw [fill=black] (-1+\b, 1) -- (1-\b, 1) -- (1, -1) -- (-1, -1) -- (-1+\b, 1);
  \draw [ultra thick] (-2+\b, 2) -- (2 -\b, 2) -- (2, -2) -- (-2, -2) -- (-2+\b, 2);
  \draw [thin] (-2+\b, 2) -- (-1+\b, 1);
  \draw [thin] (-2, -2) -- (-1,-1);
  \draw [thin] (2, -2) -- (1, - 1);
  \draw [thin] (1 - \b, 1) -- (2-\b, 2);
  \draw [thin] (-1+\b, 1) -- (0, 2);
  \draw [thin] (0, 2) -- (1-\b, 1);
  \draw [thin] (-1, -1) -- (0, -2);
  \draw [thin] (0, -2) -- (1, -1);
 \draw [fill] (0, -2) circle [radius=0.1];
 \draw [fill] (0, 2) circle [radius=0.1];
 \node at (0, 2.3) {$id$};
 \node at (0, -2.3) {$id$};
\end{tikzpicture}
 \caption{The map $v^{(2)}$ (as in Figure~\ref{pic:v_1}, the stripes denote the regions where we are interpolating).}
 \label{pic:v_2}
  \end{figure}
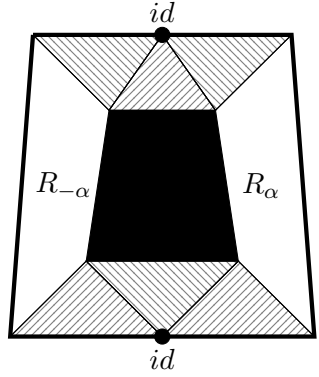
 It is readily seen that for $p \in \Delta^i_n$ we have
 \[
  \modulus{\nabla v^{(1)}(p) - \text{id}} \le C \frac{1}{2^n}. 
 \]
 Now, we have to adjust the boundary condition. For, we consider the map $v^{(2)}: v^{(1)}\rB{Q} \to \mathbb{R}^2$ defined as follows (see Figure~\ref{pic:v_2}). For $n = 1, \cdots, \ol{n}$, define the points
 \[
  \begin{split}
   &q^{1}_n\coloneqq (r_n - \epsilon, r_n),\qquad q^2_n\coloneqq (r_n, -r_n),\qquad q^3_n\coloneqq (0, -r_n),\\
   &q^4_n\coloneqq (-r_n, -r_n),\qquad q^5_n\coloneqq (-r_n + \epsilon, r_n),\qquad q^6_n\coloneqq (0, r_n),
  \end{split}
 \]
 and
 \[
  \begin{split}
   &q^1_0\coloneqq (r_0, r_0),\qquad q^2_0\coloneqq (r_0, -r_0),\qquad q^3_0\coloneqq (0, -r_0),\\
   &q^4_0\coloneqq (-r_0, -r_0),\qquad q^5_0\coloneqq (-r_0, r_0),\qquad q^6_0\coloneqq (0, r_0).
  \end{split}
 \]
 Then, for $n=0, \cdots, \ol{n}$, consider the triangles
 \[
  \begin{split}
   &\tilde{\Delta}^1_n\coloneqq \Delta\rB{q^5_n, q^5_{n-1}, q^6_{n-1}},\qquad \tilde{\Delta}^2_n\coloneqq \Delta\rB{q^5_{n-1}, q^6_n, q^1_{n-1}},\qquad \tilde{\Delta}^3_n\coloneqq \Delta\rB{q^6_n, q^1_{n-1}, q^1_n},\\
   &\tilde{\Delta}^4_n\coloneqq \Delta\rB{q^2_n, q^2_{n-1}, q^3_n},\qquad \tilde{\Delta}^5_n\coloneqq \Delta\rB{q^2_{n-1},q^3_n, q^4_{n-1}},\qquad \tilde{\Delta}^6_n\coloneqq \Delta\rB{q^3_n, q^4_{n-1},q^4_n}.
  \end{split}
 \]
 We then define $v^{(2)}:v^{(1)}(Q)\to \mathbb{R}^2$ as
 \[
  v^{(2)}(x)\coloneqq \begin{cases}
  				R_{-\alpha} x&\text{if } x \in \bigcup_{n=1}^{\ol{n}}\cB{v^{(1)}\rB{Q_n\setminus Q_{n-1}} \setminus \bigcup_{j=1}^6 \tilde{\Delta}^i_n} \cap \cB{x < 0},\\
  				R_{\alpha} x&\text{if }x \in \bigcup_{n=1}^{\ol{n}}\cB{v^{(1)}\rB{Q_n\setminus Q_{n-1}} \setminus \bigcup_{j=1}^6 \tilde{\Delta}^i_n} \cap \cB{x > 0},\\
  				x&\text{if }x=\rB{0, \pm r_n},\\
  				\text{linear interpolation}&\text{otherwise}.
  			  \end{cases}
 \]
 It is easy to check that on each triangle we have
  \[
   \dist^2\rB{\nabla v^{(2)}, \so} \le C \alpha^2.
  \]
\begin{figure}[!ht]
   \center
   \begin{tikzpicture}[scale=0.5]
   \pgfmathsetmacro{\T}{4}
   \draw [fill=black] (-1, -1+\T) rectangle (1, 1+\T);
 \draw [ultra thick] (-2, -2+\T) rectangle (2, 2+\T);
 \draw [fill=black] (-1, -1) rectangle (1, 1);
 \draw [ultra thick] (-2, -2) rectangle (2, 2);
 \draw [fill=black] (-1, -1-\T) rectangle (1, 1-\T);
 \draw [ultra thick] (-2, -2-\T) rectangle (2, 2-\T);
 \draw [ultra thick, ->] (2.5, 0) -- (4.5, 0);
  \node at (3.5, 0.5) {$v^{(1)}$};
 \pgfmathsetmacro{\t}{7}
 \pgfmathsetmacro{\b}{0.3}
 \draw [fill=black] (-1+\t+\b, 1+\T) -- (1+\t-\b, 1+\T) -- (1+\t, -1+\T) -- (-1+\t, -1+\T) -- (-1+\t+\b, 1+\T);
 \draw [ultra thick] (-2+\t+\b, 2+\T) -- (2 + \t-\b, 2+\T) -- (2+\t, -2+\T) -- (-2 + \t, -2+\T) -- (-2+\t+\b, 2+\T);
 \draw [fill=black] (-1+\t+\b, 1) -- (1+\t-\b, 1) -- (1+\t, -1) -- (-1+\t, -1) -- (-1+\t+\b, 1);
 \draw [ultra thick] (-2+\t+\b, 2) -- (2 + \t-\b, 2) -- (2+\t, -2) -- (-2 + \t, -2) -- (-2+\t+\b, 2);
 \draw [fill=black] (-1+\t+\b, 1-\T) -- (1+\t-\b, 1-\T) -- (1+\t, -1-\T) -- (-1+\t, -1-\T) -- (-1+\t+\b, 1-\T);
 \draw [ultra thick] (-2+\t+\b, 2-\T) -- (2 + \t-\b, 2-\T) -- (2+\t, -2-\T) -- (-2 + \t, -2-\T) -- (-2+\t+\b, 2-\T);
 \draw [ultra thick, ->] (2.5 +\t, 0) -- (4.5 + \t, 0);
 \node at (3.5+\t, 0.5) {$v^{(2)}$};
 \draw [fill=black, cm={1,0,0,1,(2*\t,0)}] (-0.5, 1.1+\T)--(-1.1, -0.8+\T)--(0, -1+\T)--(1.1, -0.8+\T)--(0.5, 1.1+\T)--(0, 1+\T)--(-0.5, 1.1+\T);
 \draw [ultra thick, cm={1, 0, 0, 1, (2*\t, 0)}] (-1.4, 2.2+\T) -- (-2.3, -1.7+\T) -- (0, -2+\T) -- (2.3, -1.7+\T) -- (1.4, 2.2+\T) -- (0, 2+\T) -- (-1.4, 2.2+\T);
 \draw [fill=black, cm={1,0,0,1,(2*\t,0)}] (-0.5, 1.1)--(-1.1, -0.8)--(0, -1)--(1.1, -0.8)--(0.5, 1.1)--(0, 1)--(-0.5, 1.1);
 \draw [ultra thick, cm={1, 0, 0, 1, (2*\t, 0)}] (-1.4, 2.2) -- (-2.3, -1.7) -- (0, -2) -- (2.3, -1.7) -- (1.4, 2.2) --  (0, 2) -- (-1.4, 2.2);
 \draw [fill=black, cm={1,0,0,1,(2*\t,0)}] (-0.5, 1.1-\T)--(-1.1, -0.8-\T)--(0, -1-\T)--(1.1, -0.8-\T)--(0.5, 1.1-\T)--(0, 1-\T)--(-0.5, 1.1-\T);
 \draw [ultra thick, cm={1, 0, 0, 1, (2*\t, 0)}] (-1.4, 2.2-\T) -- (-2.3, -1.7-\T) -- (0, -2-\T) -- (2.3, -1.7-\T) -- (1.4, 2.2-\T) -- (0, 2-\T) -- (-1.4, 2.2-\T);
\end{tikzpicture}
   \caption{Schematic representation of the grain boundary constructed.}
   \label{pic:v}
\end{figure}
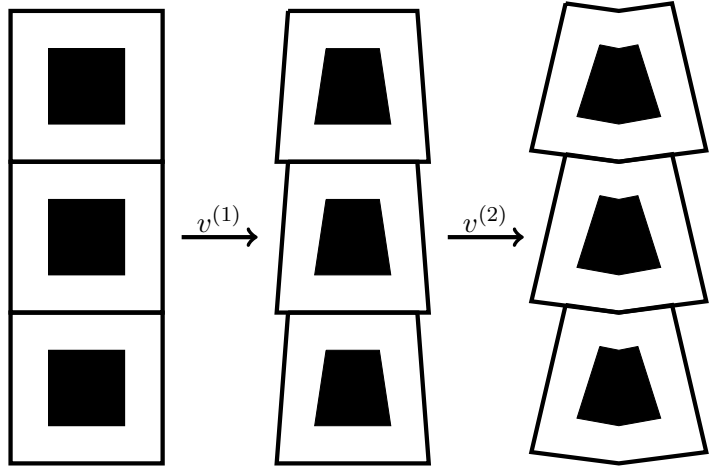
$\:$\\
Thus, if $v\coloneqq  v^{(2)}\circ v^{(1)}$ (see Figure~\ref{pic:v}), on each triangle $\Delta^i_n$,
  \[
   \dist^2\rB{\nabla v, \so} \le C \frac{1}{4^n} + \alpha^2.
  \]
This gives in particular
  \[
   \int_{Q\setminus \sB{-r_0, r_0}^2}{\dist^2\rB{\nabla v, \so}\de x} \le C\rB{ \epsilon^2 \modulus{\log(\alpha)} + \epsilon^2}.
  \]
The last step consists in gluing the maps constructed before. Namely, if $S\coloneqq [-r_{\ol{n}}, r_{\ol{n}}]\times [-L, L]$ we define the map $u: \Omega \to \mathbb{R}^2$ as
  \[
   u(x, y)\coloneqq \begin{cases}
             R_{-\alpha}\colvect{x}{y}&\text{if }(x, y) \in \rB{\Omega\setminus S}\cap \cB{x < 0},\\
             R_{\alpha}\colvect{x}{y} &\text{if }(x, y) \in \rB{\Omega\setminus S}\cap \cB{x > 0},\\
             v\rB{x, y + kr_{\ol{n}}}& \text{if }(x, y) \in Q + k\rB{0, r_{\ol{n}}},\quad k \in \cB{-\frac{N}{2}, \frac{N}{2}}.
            \end{cases}
  \]
  Then, if $A_{\text{gb}}\coloneqq \nabla u$,
  \[
   \mathcal{F}(A_{\text{gb}}) \le C \tau \epsilon \alpha h\rB{\modulus{\log \alpha} + 1}.\qedhere
  \]
  \end{proof}
  We say that $E_{\text{gb}}(\epsilon)\coloneqq  C_0 \tau \epsilon \alpha L \rB{\modulus{\log(\alpha)} + 1}$ is the \emph{energy of a grain boundary with misorientation angle} $\alpha$ \emph{at the scale} $\epsilon$, where $C_0 > 0$ is the constant from Theorem~\ref{thm:upper_bound}.
  \section{Structure of limit fields and lower bound}
  We start this section with a couple of technical lemmas. The first one ensures, through a Whitney-like extension, the existence of competitors which are uniformly bounded by a universal constant and smooth outside their singular set, while the second one allows to find a competitor whose total variation of the $\Curl$ is controlled by the core energy.
 \begin{lemma}
  \label{lemma:wlog_bdd}
  There exists a constant $C > 0$ such that for every pair $(A, S) \in \mathcal{P}\rB{\epsilon, \alpha, L, \ell \lambda}$  whose energy satisfies $\mathcal{F}_{\epsilon}(A, S) \le E_{\text{gb}}(\epsilon)$ there exists another pair $(\tilde{A}, \tilde{S}) \in \mathcal{P}\rB{\epsilon, \alpha, L, \frac{\ell}{2}, \lambda}$ s.t.
  \begin{enumerate}[(i)]
   \item $\norm{\tilde{A}}_{L^{\infty}(\Omega)} \le C$;
   \item $\mathcal{F}_{\epsilon}(\tilde{A}, \tilde{S}) \le C \mathcal{F}_{\epsilon}(A, S)$;
   \item $\tilde{A} \in \mathcal{C}^{\infty}\rB{\overline{\Omega\setminus B_{\lambda \epsilon}(\tilde{S})}}$.
  \end{enumerate}
 \end{lemma}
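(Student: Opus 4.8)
The plan is to establish Lemma~\ref{lemma:wlog_bdd} via a two‑stage construction: first a truncation step that fixes property (i) at the cost of only multiplicatively increasing the energy, and then a Whitney‑type smoothing step away from the core that produces (iii). Property (ii) is maintained throughout, and the enlargement of the boundary‑condition layer from $\ell$ to $\ell/2$ is what gives us room to perform the smoothing near $\partial\Omega$ without destroying the Dirichlet condition $A\equiv R_{\pm\alpha}$.

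For the \textbf{truncation step}, I would define $\tilde A$ on $\good$ by composing $A$ with a fixed smooth retraction $\pi\colon\mathbb{R}^{2\times2}\to \overline{B_\rho(\so)}$ onto a small tubular neighborhood of $\so$ (which is $\mathcal C^\infty$ since $\so$ is a smooth compact submanifold), and then rescaling/cutting off so that $\|\tilde A\|_{L^\infty}\le C$. Since $\pi$ is $1$‑Lipschitz near $\so$ and $\dist(\pi(M),\so)\le \dist(M,\so)$ for all $M$, one gets $\dist^2(\tilde A,\so)\le \dist^2(A,\so)$ pointwise on $\good$, hence $\eel(\tilde A,\tilde S)\le \eel(A,S)$. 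The delicate point is that $\Curl$ is a differential operator, so $\Curl(\pi\circ A)\neq \pi\circ\Curl A$ in general: one must check that the truncation does not create new topological charge. Here the energy hypothesis $\mathcal F_\epsilon(A,S)\le \egb$ is used — on $\good$, where $\dist^2(A,\so)$ is $L^1$‑small, the truncation $\pi\circ A$ differs from $A$ only where $\dist(A,\so)\ge\rho$, a set of small measure that can be absorbed into the core by replacing $S$ with $\tilde S = S\cup\{\dist(A,\so)\ge\rho\}$ (more precisely its closure), whose $\lambda\epsilon$‑neighborhood has measure controlled by $\lambda^2\ecore(S) + (\text{small})$ via a Chebyshev estimate, so $\ecore(\tilde S)\le C\,\mathcal F_\epsilon(A,S)$.

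For the \textbf{smoothing step}, I would apply a Whitney extension / mollification adapted to the distance $d(x)\coloneqq \dist(x,\tilde S)$: on the open set $\overline{\Omega\setminus B_{\lambda\epsilon}(\tilde S)}$ one mollifies $\tilde A$ at scale $c\,d(x)$ (a locally finite partition of unity subordinate to a Whitney decomposition), which produces a $\mathcal C^\infty$ field agreeing with the Dirichlet data in the $\ell/2$‑layer (take the mollification radius $\ll\ell$ there so the constant matrix $R_{\pm\alpha}$ is reproduced exactly), preserves the $L^\infty$ bound, and changes $\dist^2(\cdot,\so)$ in $L^1$ by at most a constant factor because $\dist^2(\cdot,\so)$ is continuous and the mollification scale vanishes at the core. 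The topological constraint ($\mathcal A_\epsilon$, iii) and the inclusion $\spt\Curl\tilde A\subset\tilde S$ survive because the smoothing happens on the simply‑local scale $d(x)$, strictly away from $\tilde S$, so no loop in $\Omega_{\lambda\epsilon}(\tilde S)$ sees a change in $\int_\gamma A\cdot t$ — the Burgers vectors are locally constant under such local perturbations and hence unchanged.

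The \textbf{main obstacle}, as I see it, is the interaction between the truncation and the $\Curl$‑support condition: one must argue carefully that enlarging $S$ to swallow the ``bad set'' $\{\dist(A,\so)\ge\rho\}$ both (a) keeps the enlarged core inside $[-L+\ell,L-\ell]\times[-L,L]$ — which follows since the boundary layer is where $A\equiv R_{\pm\alpha}\in\so$, so the bad set is at distance $\gtrsim\ell$ from $\partial\Omega$ — and (b) still yields $\spt\Curl\tilde A\subset\tilde S$ with the quantization property intact, i.e.\ that the act of truncating plus re‑coring does not split one dislocation of Burgers vector $\ge\tau\epsilon$ into pieces violating the quantization. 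This is where I expect the bulk of the technical work to lie; the smoothing step is comparatively standard.
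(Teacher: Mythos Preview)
Your two-stage plan (truncate, then smooth) matches the paper's architecture, but both stages as you describe them have genuine gaps.

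\textbf{Truncation.} The claim that the $\lambda\epsilon$-neighborhood of the bad set $\{\dist(A,\so)\ge\rho\}$ has measure controlled ``via a Chebyshev estimate'' is false: Chebyshev bounds only $|\{\dist(A,\so)\ge\rho\}|$, not $|B_{\lambda\epsilon}(\{\dist(A,\so)\ge\rho\})|$. A set of tiny measure can have a $\lambda\epsilon$-neighborhood of arbitrarily large measure if it is spread out (e.g.\ many well-separated points), so $\ecore(\tilde S)$ is uncontrolled. The paper avoids this by defining the bad set through a \emph{maximal-function} condition $R_M=\{x:\exists r>0,\ \fint_{B_r(x)}\dist^2(A,\so)\ge M\}$ with the associated scale $r(x)$, and then splitting: where $r(x)\ge\epsilon$ one covers by balls of radius $r(x)\ge\epsilon$ (so the $\lambda\epsilon$-thickening is harmless and Vitali gives the core bound), and where $r(x)<\epsilon$ and $x\notin B_{\lambda\epsilon}(S)$ one does \emph{not} cut off at all but instead performs a $\Curl$-free Whitney extension using local potentials $u_j$ with $A=\nabla u_j$ on simply connected balls. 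This last step is precisely what you cannot replace by a pointwise retraction $\pi\circ A$: the retraction destroys $\Curl$-freeness on the boundary of the bad set, and you have no control over that boundary.

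\textbf{Smoothing.} Mollifying $\tilde A$ at a spatially varying scale $c\,d(x)$ does \emph{not} preserve $\Curl\tilde A=0$, because the Whitney partition of unity introduces gradient terms $\nabla\psi_k$ that hit $\tilde A$ itself, not a potential. Your claim that ``no loop sees a change in $\int_\gamma A\cdot t$'' is unjustified for this reason. The paper instead uses the Hodge decomposition $A=\nabla u+F$: since $\Div F=0$ and $\Curl F=0$ outside the core, $F$ is harmonic (hence smooth) there automatically, and one approximates only $u$ by smooth $u_k$ in $W^{1,2}$, so that $A_k=\nabla u_k+F$ has \emph{exactly} the same $\Curl$ as $A$. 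Likewise, the Whitney extension in the truncation step is built from the local potentials $u_j$ (setting $\overline u_j=\sum_k\psi_k(u_j(s_k)+A(s_k)(x-s_k))$), not from $A$ directly; this is what guarantees the extension is a gradient, hence $\Curl$-free. Both of your steps need to be rerouted through potentials rather than acting on $A$ pointwise.
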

 \begin{proof}
  We will define the pair $(\tilde{A}, \tilde{S})$ by modifying it in several steps. Let $\omega\coloneqq  B_{\lambda \epsilon}(S)$ and define
  \[
   \tilde{A}_1\coloneqq \begin{cases}
                  \text{id}&\text{in }\omega,\\
                  A&\text{in }\Omega\setminus \omega.
                \end{cases}
  \]
  Clearly, $\spt \Curl \ti_1 \subseteq \omega=:\tilde{S}_1$, and by Vitali's Lemma we can find an at most countable collection of point $\rB{x_j}_{j \in J} \in S$ such that the balls $B_{\lep}(x_j)$ are mutually disjoint and
  \[
   \omega = \bigcup_{x \in S} B_{\lep} (x) \subset \bigcup_{j \in J} B_{5 \lep}(x_j).
  \]
  Thus
  \[
   \modulus{B_{\lep}(\tilde{S}_1)} \le \modulus{B_{\lep}\rB{\omega}} \le \modulus{\bigcup_{j\in J} B_{6\lep}(x_j)} \le C_{\lambda}\sum_{j \in J} {\modulus{B_{\lep}(x_j)}} \le C_{\lambda, n}\modulus{\omega} \le C_{\lambda} \mathcal{F}_{\epsilon}(A, S).
  \]
  Thus $\mathcal{F}_{\epsilon}(\ti_1, \tilde{S}_1) \le C_{\lambda} \mathcal{F}_{\epsilon, S}$ and $\norm{\ti_1}_{L^{\infty}(\omega)} \le M$. For notational simplicity, relabel $\ti_1$ as $A$ and $\tilde{S}_1$ as $S$.
  Now we show that we can without loss of generality assume $A$ to be smooth outside $B_{\lambda \epsilon}(S)$. By the Hodge-Morrey decomposition, $A = \nabla u + F$, where $u \in W^{1, 2}(\Omega)$ and $F \in L^2(\Omega)$ has zero divergence in the sense of distributions. Moreover, $\Curl(F) = 0$ in $\Omega_{\frac{\lambda\epsilon}{2}}(S)$, and hence is harmonic (and, in particular, smooth) in $\Omega_{\frac{\lambda\epsilon}{2}}(S)$. We then take a sequence $u_k\in\mathcal{C}^{\infty}(\Omega)\cap W^{1, 2}(\Omega)$ converging to $u$ in $W^{1,2}(\Omega)$. Set $A_k \coloneqq  \nabla u_k + F$. Clearly $\Curl(A_k) = \Curl(F) = \Curl(A)$ in $\Omega$ for every $k$ and 
  \[
   \begin{split}
   \int_{\Omega}{\dist^2(A_k, \so)\de x} &\le 2\rB{\int_{\Omega}\dist^2(A, \so)\de x + \int_{\Omega}\modulus{\nabla (u_k - u)}^2\de x} \\
   &\le 3\int_{\Omega}\dist^2(A, \so)\de x,
   \end{split}
  \]
  provided $k$ is chosen big enough. That is, we can without loss of generality assume $A$ to be smooth in $\overline{\Omega_{\lambda\epsilon}(S)}$. Now, fix $M > 1$ and consider the set of points
  \[
   R\coloneqq  R_M\coloneqq \cB{x \in \Omega\biggr|\exists r > 0 \text{ : }\fint_{B_r(x)} \dist^2(A, \so) \ge M},
  \]
  and define
  \[
   r(x)\coloneqq  r_M(x)\coloneqq \inf\cB{r>0\biggr|\fint_{B_r(x)} \dist^2(A, \so) \ge M}.
  \]
  Clearly, $\norm{A}_{L^{\infty}(\Omega\setminus R)} \le M + 2\sqrt{n}$. Let $R_1\coloneqq  R\cap \cB{r(x) \ge \epsilon}$, and define the new field
  \[
   \ti_2\coloneqq \begin{cases}
            \text{id}&\text{in } B_1,\\
            A&\text{in }\Omega\setminus B_1,
          \end{cases}
  \]
  where $B_1\coloneqq \bigcup_{x \in R_1}B_{r(x)}(x)$. Then $\spt\Curl \ti_2 \subset B_1 \cup S$. Set $\tilde{S}_2\coloneqq  S\cup R_1$. Using Vitali's Lemma as before, we find a collection of (at most countable) mutually disjoint balls $B_j = B_{r(x_j)}(x_j)$ whose centers are in $R_1$ and
  \[
   R_1 \subset \bigcup_{j \in J} B_{5r(x_j)}(x_j).
  \]
  Thus, since $r(x_j)=:r_j\ge \epsilon$ for every $j\in J$,
  \[
   \modulus{B_{\lep}(R_1)} \le \sum_{j\in J}{\modulus{B_{(5+\lambda)r_j}(x_j)}} \le C_{\lambda}\sum_j \modulus{B_j} \le \frac{C_{\lambda}}{M} \sum_{j \in J} \int_{B_j}\dist^2(A, \so) \le \frac{C_{\lambda}}{M} \mathcal{F}_{\epsilon}(A, S).
  \]
  As done before, relabel for simplicity $\ti_2$ as $A$ and $\tilde{S}_2$ as $S$, and redefine the set $R$ and the function $r$ in function of this new pair $(A, S)$. Then we reduced ourselves to the case when the potentially bad points, i.e. the ones in $R$, have $r(x) < \epsilon$. Consider first those which lie in $B_{\lep}\rB{S}$, i.e. the points in $R_2\coloneqq R\cap B_{\lep}\rB{S}$. Consider the field
  \[
  \ti_3\coloneqq \begin{cases}
            \text{id}&\text{in } B_2,\\
            A&\text{in }\Omega\setminus B_2,
          \end{cases}
  \]
  and the cores $\tilde{S}_3\coloneqq S\cup R_2$, where $B_2\coloneqq \bigcup_{x \in R_2}B_{r(x)}(x)$. Using a covering argument as before, one can easily infer that $\modulus{B_{\lep}(\tilde{S}_3)} \le C_{\lambda, n} \mathcal{F}_{\epsilon}(A)$. Hence, relabeling $\ti_3$ as $A$ and $\tilde{S}_3$ as $S$ (and redefining $R$, $r$ depending on the new field $A$) we are reduced to the case when $R$ consists only points lying outside the $\lep$-neighborhood of $S$ and with $r(x) < \epsilon$. In this case we are not allowed to merely cut off the fields, since we have no control of the singular set in terms of the covering  $V\coloneqq \bigcup_{x \in R} B_{r(x)}(x)$ of $R$ (we can always assume $V$ to be open, i.e. $r(x)>0$ for every $x$). We then need to extend $A$ in a $\Curl$-free way. For, we first notice that using Vitali's Lemma again, we find
\[
 \modulus{V} \le \frac{C}{M}\mathcal{F}_{\epsilon}(A) \le C \epsilon L \alpha \modulus{\log(\alpha)} \le \frac{\lambda}{2} \epsilon.
\]
In particular, this means that every ball of radius $\lambda\epsilon$ must intersect the complement of $V$. We then cover $\Omega_{\lambda\epsilon}(A)$ with (a finite number of) balls of such radius which overlap only finitely many times (depending only on the dimension):
\[
 \Omega_{\lambda\epsilon}(A) \subset \bigcup_{j\ge 1} B_j,\qquad B_j\coloneqq B\rB{x_j, \lambda \epsilon}.
\]
We only need to extend the field to those balls which are not intersecting the singular set (indeed, in those balls which do intersect the singular set we can simply set the field to be a constant). Following the proof of Whitney's Lemma (cf.~\cite{EG}), we define
\[
 \rho(x)\coloneqq \frac{1}{20}\min\cB{1, \dist(x, \mathcal{C})},\qquad \mathcal{C}\coloneqq \Omega_{\lambda\epsilon}(A) \setminus V.
\]
By Vitali's Lemma, we find points $\cB{x_k}\subset V$ such that
\[
 V = \bigcup_{k \ge 1} B\rB{x_k, 5\rho(x_k)},
\]
and the balls $B\rB{x_k, \rho(x_k)}$ are disjoint. One can then prove that the sets
\[
 S_x \coloneqq  \cB{x_k\biggr| B\rB{x, 10\rho(x)}\cap B\rB{x_k, 10\rho(x_k)} \ne \emptyset}, 
\]
have uniformly finite cardinality; more precisely, $\# S_x \le (129)^2=:C_2$ for all $x \in V$. Moreover, if $x_k \in S_x$, $\frac{1}{3}\rho(x_k)\le \rho(x)\le 3\rho(x_k)$. One can then prove that is possible to construct a partition of unity $\cB{\psi_k}_{k \ge 1}$ such that
\[
 \begin{cases}
 	\sum_{k \ge 1} \psi_k(x) \equiv 1,\\
 	\sum_{k \ge 1} \nabla \psi_k(x) \equiv 0,\qquad x\in U,\\
 	\modulus{\nabla \psi_k(x)} \le \frac{C}{\rho(x)}.
 \end{cases}
\]
For each $k$ choose a point $s_k \in \mathcal{C}$ such that $\modulus{x_k - s_k} = \dist(x_k, \mathcal{C})$. Since the balls $B_j$ are simply connected and $A$ is $\Curl$-free in $\Omega_{\lambda\epsilon}(A)$, we can find a function $u \in \mathcal{C}^{\infty}(B_j)$ such that $A = \nabla u_j$ in $B_j$. We can then consider the extension in $B_j$
\[
 \overline{u}_j(x)\coloneqq \begin{cases}
 					 u_j&\text{if }x \in B_j\setminus V,\\
 					 \sum_{k \ge 1}\psi_k(x) \rB{u_j(x) + A(s_k)\rB{x - s_k}}&\text{if } x \in B_j \cap V.
 				    \end{cases}
\]
It is then possible to show that $\overline{A}_j\coloneqq \overline{u}_j \in \mathcal{C}^1(B_j)$ and $\nabla \overline{u}_j(x) = A(x)$ for all $x \in B_j\setminus V$. Moreover, if $x\in B_j \cap B_m$, then $\nabla \overline{u}_j(x) = \nabla \overline{u}_m(x)$. Indeed, since $\nabla u_j = \nabla u_m$ in $B_j \cap B_m$, there exists a constant $c_{jm} \in \mathbb{R}^2$ such that $u_j = c_{jm} + u_m$ in $B_j \cap B_m$, and hence $\nabla \overline{u}_j(x) = \nabla \overline{u}_m(x)$ since $\sum_{k\ge 1}\nabla \psi_k(x) = 0$ for every $x \in V$. In particular, the extension
\[
 \tilde{A}(x)\coloneqq  \overline{A}_j(x),\text{if } x \in B_j
\]
is well defined and $\Curl$-free. It is also easy to verify that $\modulus{\nabla \overline{u}_j(x)} \le C_2$, for some constant $C_2>0$ depending only on the dimension. Then, define $\tilde{A}$ to be the identity in a $2\lambda\epsilon$-neighborhood of $S$, which we call $\tilde{S}$. This gives the desired field $\tilde{A}$, since (arguing like in the discussion before)
\[
 \modulus{B_{\lambda\epsilon}(S)} \le C_2 \modulus{B_{\lambda\epsilon}(\spt\Curl A)},
\]
and
\[
    \int_V \dist^2(\tilde{A}, \so) \de x \le C_2 \sum_j \modulus{B_j} \le \frac{C}{M} \mathcal{F}_{\epsilon}(A, S).\qedhere
\]
 \end{proof}
  \begin{lemma}
  \label{lemma:wlog_bdd_totvar_curl}
  Let $(A, S) \in \mathcal{P}\rB{\epsilon, \alpha, L, \lambda, \ell}$. Then there exists another pair $(\ta, \tilde{S}) \in \mathcal{P}\rB{\epsilon, \alpha, L, \lambda, \frac{\ell}{2}}$ such that for a universal constant $C > 0$
  \begin{enumerate}[(i)]
	  \item $\mathcal{F}_{\epsilon}\rB{\ta, \tilde{S}} \le C \mathcal{F}_{\epsilon}(A, S)$;
	  \item $\Curl(\tilde{A})\in L^{\infty}(\Omega)$ and $\modulus{\Curl(\ta)} \le C \mu_{2, \epsilon}(\tilde{S})$, where
	   \[
		   \mu_{2, \epsilon}(\tilde{S})\coloneqq \frac{1}{\lambda^2 \epsilon} \mathcal{L}^2 \zak B_{\lambda \epsilon}(\tilde{S}).
	   \]
  \end{enumerate}
 \end{lemma}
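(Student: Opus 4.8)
The plan is to regularize the given field by a two-scale mollification/truncation procedure so that, on the good region, the $\Curl$ gets smeared out into an $L^\infty$ function whose pointwise size is bounded by the density of the core measure $\mu_{2,\epsilon}(\tilde S)$. The first step is a preliminary reduction: I would start by applying Lemma~\ref{lemma:wlog_bdd} (or rather its construction) to a pair that is already $\mathcal{C}^\infty$ outside $B_{\lambda\epsilon}(S)$ and uniformly bounded, since that costs only a universal factor in the energy and only halves $\ell$; henceforth $A$ is smooth on $\overline{\Omega_{\lambda\epsilon}(S)}$ and bounded. The key observation is that on the good set $\Omega_{\lambda\epsilon}(S)$ we have $\Curl A = 0$, so all the ``charge'' of $\Curl A$ is concentrated in $B_{\lambda\epsilon}(S)$; there $A$ has been set to $\mathrm{id}$ (from the Lemma~\ref{lemma:wlog_bdd} construction), which is again $\Curl$-free, so in fact $\Curl A$ is a distribution supported on $\partial B_{\lambda\epsilon}(S)$, a set of finite $\mathcal{H}^1$-measure, with total mass controlled by $\tau\epsilon$ times the number of relevant Burgers vectors.

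The second, main step is to mollify at scale comparable to $\lambda\epsilon$. Concretely, pick a standard mollifier $\eta_{\delta}$ with $\delta = c\lambda\epsilon$ for a small dimensional constant $c$, and set $\ta \coloneqq A * \eta_{\delta}$ on the region where this makes sense, i.e.\ away from a $\delta$-neighborhood of $\partial\Omega$; near $\partial\Omega$ the boundary condition is $R_{\pm\alpha}$, which is constant, so mollification does not disturb it and we only shrink the boundary layer from $\ell$ to $\ell/2$. Then $\Curl \ta = (\Curl A) * \eta_{\delta}$, and since $\Curl A$ is supported in $\overline{B_{\lambda\epsilon}(S)}$ with $\ta$ equal to a (now mollified, but still essentially the) identity there, one gets $\Curl\ta$ supported in $B_{(1+c)\lambda\epsilon}(S) \subseteq B_{2\lambda\epsilon}(S)$. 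The $L^\infty$ bound comes from $\|\Curl\ta\|_\infty = \|(\Curl A)*\eta_\delta\|_\infty \le \|\eta_\delta\|_\infty \,|\Curl A|(\text{ball of radius }\delta)$; using $\|\eta_\delta\|_\infty \sim \delta^{-2} \sim (\lambda\epsilon)^{-2}$ and $|\Curl A|(B_\delta(x)) \lesssim \mathcal{L}^2(B_{\lambda\epsilon}(S)\cap B_{C\delta}(x))/(\lambda\epsilon)$ — the last estimate being exactly the content of expressing the surface charge of $\partial B_{\lambda\epsilon}(S)$ as a bulk density over the tubular core — yields $|\Curl\ta| \lesssim \frac{1}{(\lambda\epsilon)^2}\cdot\frac{\mathcal{L}^2(B_{\lambda\epsilon}(S))\text{ locally}}{\lambda\epsilon}$, which after defining $\tilde S$ to be a fixed dilate of $S$ is precisely $C\mu_{2,\epsilon}(\tilde S)$. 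For the energy estimate (i), I would use that $\dist(\cdot,\so)$ is Lipschitz and $1$-convex enough, so $\dist^2(A*\eta_\delta,\so) \le (\dist(\cdot,\so)*\eta_\delta)^2 \le \dist^2(\cdot,\so)*\eta_\delta$ by Jensen, and integrating over $\Omega\setminus B_{2\lambda\epsilon}(\tilde S)$ and using Fubini bounds the new elastic energy by a universal constant times the old one plus the contribution of the enlarged core, which is itself controlled by $\ecore$ up to the dimensional overlap constant.

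The main obstacle I expect is bookkeeping the geometry of the core near $\partial B_{\lambda\epsilon}(S)$: one must verify that mollifying genuinely converts the $\mathcal{H}^1$-concentrated jump of $A$ across $\partial B_{\lambda\epsilon}(S)$ into a bulk density with the correct power of $\lambda\epsilon$, i.e.\ that $|\Curl A|(B_\delta(x)) \lesssim (\lambda\epsilon)^{-1}\mathcal{L}^2(B_{\lambda\epsilon}(S)\cap B_{C\delta}(x))$ uniformly, which relies on the fact that $\partial B_{\lambda\epsilon}(S)$ is $(1,\mathcal{H}^1)$-rectifiable with multiplicity and locally has $\mathcal{H}^1$-measure comparable to $\mathcal{L}^2(B_{\lambda\epsilon}(S))/(\lambda\epsilon)$ — a tubular-neighborhood/coarea type fact — together with the quantization $|\Curl A|\!\restriction\!\partial B_{\lambda\epsilon}(S) \le C\tau\epsilon\,\mathcal{H}^1$ coming from condition ($\mathcal{A}_\epsilon$,~iii). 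A secondary, more routine nuisance is handling the boundary layer so that the mollified field still lies in $\mathcal{P}(\epsilon,\alpha,L,\lambda,\ell/2)$, which I would do by simply not mollifying (keeping $A\equiv R_{\pm\alpha}$) on the outer $\ell/2$-strip and mollifying with a cutoff in a collar, noting no extra $\Curl$ is created there since $A$ is already constant.
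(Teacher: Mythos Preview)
Your proposal has the right skeleton---reduce via Lemma~\ref{lemma:wlog_bdd} to bounded smooth $A$, then mollify at scale $\lambda\epsilon$---but there are two genuine problems, and the paper's proof handles both with a single device you are missing.

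\textbf{The elastic energy bound fails.} Your Jensen step $\dist(A*\eta_\delta,\so)\le \dist(A,\so)*\eta_\delta$ requires $\dist(\cdot,\so)$ to be convex, which it is not: $\so$ is a circle, not a convex set, and e.g.\ $\tfrac12(R_{\pi/2}+R_{-\pi/2})=0$ has distance $\sqrt2$ to $\so$ while each summand has distance $0$. Pure mollification alters $A$ on the whole of $\Omega_{\lambda\epsilon}(S)$, and there is no mechanism to bound $\int|A*\eta_\delta-A|^2$ by the elastic energy. The paper avoids this entirely by interpolating with a cutoff: it takes $\zeta\in C^\infty$ with $\zeta\equiv1$ on $B_{\lambda\epsilon}(S)$, $\zeta\equiv0$ outside $B_{2\lambda\epsilon}(S)$, $|\nabla\zeta|\le C/(\lambda\epsilon)$, and sets
\[
\tilde A \coloneqq (1-\zeta)\,A + \zeta\,(A*\rho_{\lambda\epsilon}).
\]
Then $\tilde A=A$ outside $B_{2\lambda\epsilon}(S)$, so the elastic energy there is unchanged; inside, one uses only $\|\tilde A\|_\infty\le C$ and $|B_{2\lambda\epsilon}(S)|\le C\,\ecore(S)$.

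\textbf{The $L^\infty$ bound on $\Curl\tilde A$ is much simpler than you make it.} You propose to estimate $|\Curl A|(B_\delta(x))$ via the $\mathcal H^1$-structure of $\partial B_{\lambda\epsilon}(S)$ and the quantization condition. This is both unnecessary and partly wrong: condition $(\mathcal A_\epsilon,\,\mathrm{iii})$ is a \emph{lower} bound on Burgers vectors, not an upper bound on $|\Curl A|$. The paper instead computes $\Curl\tilde A$ directly from the interpolation formula and uses only $\|A\|_\infty\le C$: since $\Curl(A*\rho_{\lambda\epsilon}) = A\star\!\cdot\nabla^\perp\rho_{\lambda\epsilon}$, one gets $|\Curl(A*\rho_{\lambda\epsilon})|\le \|A\|_\infty\|\nabla\rho_{\lambda\epsilon}\|_{L^1}\le C/(\lambda\epsilon)$, and similarly the cross term $(A*\rho_{\lambda\epsilon}-A)\cdot\nabla^\perp\zeta$ is bounded by $C/(\lambda\epsilon)$. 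Since $\spt\Curl\tilde A\subset B_{2\lambda\epsilon}(S)$, this yields $|\Curl\tilde A|\le C\mu_{2,\epsilon}(\tilde S)$ immediately, with $\tilde S$ a fixed dilate of $S$. (Even in your pure-mollification picture, the same identity $\Curl(A*\eta_\delta)=A*\nabla^\perp\eta_\delta$ would have given the pointwise bound for free---the ``main obstacle'' you flag is self-imposed.)
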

 \begin{proof}
  By Lemma~\ref{lemma:wlog_bdd}, we can assume $A$ to be smooth in $\Omega_{\lambda \epsilon}(A)$ and $\norm{A}_{L^{\infty}(\Omega)} \le C$. Consider a $\lambda\epsilon$-mollifier $\rho_{\lambda\epsilon}$, that is $\rho_{\lambda\epsilon} \in \mathcal{C}^{\infty}_c\rB{\mathbb{R}^n, [0, 1]}$, $\spt(\rho_{\lambda\epsilon})\subset \overline{B\rB{0, \lambda \epsilon}}$ and $\int\rho_{\lambda\epsilon} = 1$. Take a cut-off function $\zeta$ such that
  \[
   \begin{cases}
    \zeta \in \mathcal{C}^{\infty}(\Omega),\\
    0\le \zeta \le 1,\\
    \zeta \equiv 1&\text{in }B_{\lambda\epsilon}(S),\\
    \zeta \equiv 0&\text{in }\Omega\setminus B_{2\lambda\epsilon}(S),\\
    \norm{\nabla \zeta}_{L^{\infty}(\Omega)} \le \frac{C_0}{ \lambda \epsilon}.
   \end{cases}
  \]
  Define the new matrix field
  \[
   \ta\coloneqq  \rB{1 - \zeta}A + \zeta \rB{A\star \rho_{\lambda\epsilon}}.
  \]
  Clearly, $\norm{\ta}_{L^{\infty}(\Omega)}\le \norm{A}_{L^{\infty}(\Omega)} \le C$ and
  \[
   \Curl(\ta) = \rB{1 - \zeta} \Curl(A) + \rB{A\star \rho_{\lambda \epsilon} - A}\cdot \nabla^{\perp} \zeta + \zeta A\star\cdot \nabla^{\perp} \rho_{\lambda\epsilon},
  \]
  where we used the notation $v\star \cdot w\coloneqq \sum_{i = 1}^n v_i \star w_i$ for $\mathbb{R}^n$-valued functions $v, w$. In particular,
  \begin{itemize}
   \item in $B_{\lambda \epsilon}(S)$, $\zeta\equiv 1$, hence $\Curl(\ta) = \zeta A \star\cdot \nabla^{\perp} \rho_{\lambda\epsilon}$, which in turn implies $\modulus{\Curl(\ta)} \le \frac{C}{\lambda\epsilon}$;
   \item in $B_{2\lambda \epsilon}(S) \setminus B_{\lambda \epsilon}(S)$, $\Curl(A) = 0$ and so $\Curl(\ta) = \rB{A\star\rho_{\lambda\epsilon} - A}\cdot \nabla^{\perp}\zeta + \zeta A\star\cdot \nabla^{\perp}\zeta$. This gives again $\modulus{\Curl(\ta)} \le \frac{C}{\lambda\epsilon}$;
   \item in $\Omega\setminus B_{2\lambda\epsilon}(S)$, $\Curl(A) \equiv 0$ and $\zeta \equiv 0$, hence $\Curl(\ta) = 0$.
  \end{itemize}
  From the discussion above, we have in particular that $\spt\Curl(\ta) \subset B_{3\lambda\epsilon}(S)=:\tilde{S}$. Thus, for every $E\subset\Omega$
  \[
   \begin{split}
   \modulus{\Curl(\ta)}(E) &= \int_E \modulus{\Curl\ta}\de x = \int_{E\cap \spt\Curl(\ta)} \modulus{\Curl\ta}\de x \le \\
                           &\le \norm{\Curl(\ta)}_{L^{\infty}(\Omega)}\modulus{E\cap B_{\lambda \epsilon}(\spt\Curl(\ta))} \\
                           &\le C\mu_{2,\epsilon}(\ta)\sB{E}.
   \end{split}
  \]
  Moreover, a standard covering argument gives
  \[
   \modulus{B_{\lambda \epsilon}(\tilde{S})} \le C \modulus{B_{\lambda \epsilon}(S)},
  \]
  which leads also to
  \[
   \begin{split}
     \int_{\Omega} \dist^2(\ta, \so) \de x &= \int_{\Omega\setminus B_{2\lambda \epsilon}} \dist^2(A, \so) \de x + \int_{B_{2\lambda \epsilon}(S)} \dist^2(\ta, \so)\de x\\
                                           &\le E_{\text{el}}(A, S) + C \modulus{B_{\lambda\epsilon}(S)} \le C\mathcal{F}_{\epsilon}(A, S).\qedhere
   \end{split}
  \]
 \end{proof}

Now we start the analysis of the structure of limits of energy-minimizing tensor fields, proving a compactness result for sequences of vanishing elastic energy and bounded total variation of the $\Curl$. The geometric rigidity estimates in~\cite{MSZ} and~\cite{LL} are crucial in the proof.
 \begin{proposition}
  \label{prop:BV}
  Let $\Omega\subset\mathbb{R}^n$, $n \ge 2$, be a bounded, open, simply connected set, and consider a sequence of matrix fields $A_j \in L^2(\Omega)^{n \times n}$ such that
  \[
    \lim_{j \to \infty} \norm{\dist(A_j, SO(n))}_{L^2(\Omega)} = 0 \qquad\text{ and }\qquad \sup_{j \ge 1} \modulus{\Curl(A_j)}(\Omega) < \infty.
  \]
  Then, up to a subsequence, $\cB{A_j}$ converges strongly in $L^2(\Omega)$ to a matrix field $A \in BV(\Omega, SO(n))$. Moreover, the set of points where $A$ does not belong to $SO(n)$ has Hausdorff dimension at most $n-1$ and, for a dimensional constant $C > 0$,
  \begin{equation}
   \label{eq:curl_bounds_D}
   \modulus{DA}(\Omega) \le C \modulus{\Curl(A)}(\Omega).
  \end{equation}
 \end{proposition}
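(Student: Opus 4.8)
The plan is to combine the geometric rigidity estimate with incompatible strain of \cite{MSZ} (and its refinement in \cite{LL}), applied on a mesh of small cubes, with the classical compactness theory for $BV$ functions. First I would record the elementary facts that, since $\dist(A_j,\SOn)\to 0$ in $L^2(\Omega)$ and $\SOn$ is bounded, $\sup_j\norm{A_j}_{L^2(\Omega)}<\infty$, that $\modulus{\cB{\dist(A_j,\SOn)>1}}\to 0$ and $\int_{\cB{\dist(A_j,\SOn)>1}}\modulus{A_j}^2\to 0$, and — passing to a subsequence — that $\Curl A_j\rightharpoonup^*\mu$ and $\modulus{\Curl A_j}\rightharpoonup^*\sigma$ in $\mathcal{M}(\Omega)$ with $\modulus{\mu}\le\sigma$ and $\sigma(\Omega)\le M:=\sup_j\modulus{\Curl A_j}(\Omega)$. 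I would also note that $\Curl A_j$ is bounded in $H^{-1}(\Omega)$, being the curl of an $L^2$-bounded sequence; this, together with $\dist(A_j,\SOn)\to 0$, is what prevents the $\modulus{\Curl A_j}$ from concentrating.

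For each fixed $h>0$ I partition $\Omega$ (up to a null set) into cubes $Q$ of side $h$ and apply the incompatible rigidity estimate on each $Q$, getting $R^Q_j\in\SOn$ with $\norm{A_j-R^Q_j}_{L^2(Q)}\le C(\norm{\dist(A_j,\SOn)}_{L^2(Q)}+\modulus{\Curl A_j}(Q))$, the constant being scale invariant when $n=2$. Applying the same estimate on the union $Q\cup Q'$ of two adjacent cubes and comparing the associated rotations bounds $\modulus{R^Q_j-R^{Q'}_j}$ by $Ch^{-n/2}(\norm{\dist(A_j,\SOn)}_{L^2(Q\cup Q')}+\modulus{\Curl A_j}(Q\cup Q'))$; summing over the faces of the mesh, with Cauchy--Schwarz for the distance term and bounded-overlap additivity for the curl term, the piecewise constant field $v^h_j:=\sum_Q R^Q_j\,\chi_Q$ satisfies (for $n=2$) $\modulus{Dv^h_j}(\Omega)\le C(h^{-1}\norm{\dist(A_j,\SOn)}_{L^2(\Omega)}+M)$, while $\norm{A_j-v^h_j}_{L^2(\Omega)}^2=\sum_Q\norm{A_j-R^Q_j}_{L^2(Q)}^2\le C(\norm{\dist(A_j,\SOn)}_{L^2(\Omega)}^2+\max_Q\modulus{\Curl A_j}(Q)\cdot M)$. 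For fixed $h$ the $v^h_j$ range in the compact finite-dimensional family of $\SOn$-valued maps constant on the mesh, so up to a subsequence $v^h_j\to v^h$ in every $L^p$, with $v^h$ piecewise constant, $\SOn$-valued and $\modulus{Dv^h}(\Omega)\le\liminf_j\modulus{Dv^h_j}(\Omega)\le CM$.

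The crux — and the step I expect to be the main obstacle — is the non-concentration claim $\lim_{h\to 0}\limsup_j\max_Q\modulus{\Curl A_j}(Q)=0$: it rests on the fact that a sequence of curls bounded both in $\mathcal{M}(\Omega)$ and in $H^{-1}(\Omega)$ and produced by fields with $\dist(\cdot,\SOn)\to 0$ in $L^2$ cannot have its total variation concentrate (a blow-up/rescaling argument around a putative point of concentration, where a nonzero localized curl would force $\dist(A_j,\SOn)$ to stay bounded away from $0$ on a set of positive rescaled measure). Granting this, I can choose $h=h_j\to 0$ with $h_j\ge\norm{\dist(A_j,\SOn)}_{L^2(\Omega)}$ (so that $\modulus{Dv^{h_j}_j}(\Omega)\le CM$) along which $\norm{A_j-v^{h_j}_j}_{L^2(\Omega)}\to 0$; then $B_j:=v^{h_j}_j$ is bounded in $L^\infty\cap BV(\Omega)$, so $BV$ compactness gives (a further subsequence) $B_j\to A$ in $L^1(\Omega)$ and a.e., with $A\in BV(\Omega,\mathbb{R}^{n\times n})$ and $\modulus{DA}(\Omega)\le CM$; the $L^\infty$ bound upgrades this to $L^2$ convergence, whence $A_j\to A$ strongly in $L^2(\Omega)$, and since each $B_j$ is $\SOn$-valued, $A(x)\in\SOn$ for a.e.\ $x$. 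That $A$ is outside $\SOn$ only on a set of Hausdorff dimension at most $n-1$ is then the standard fact (\cite{AFP}) that for a $BV$ map the complement of its approximate continuity points is $\sigma$-finite with respect to $\mathcal{H}^{n-1}$.

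Finally, for the inequality \eqref{eq:curl_bounds_D} I would argue intrinsically on the limit $A$: rowwise, $\Curl A$ is the antisymmetrization of the matrix measure $DA$, so $\frac{d\Curl A}{d\modulus{DA}}=\widehat{P}$ where $P:=\frac{dDA}{d\modulus{DA}}$ and $\widehat{\cdot}$ denotes that linear map, and at $\modulus{DA}$-a.e.\ point $P$ is either tangent to $\SOn$ at $A$ (on the diffuse part, $P=AW$ with $W$ skew, using $A(x)\in\SOn$ a.e.) or of the form $(A^+-A^-)/\modulus{A^+-A^-}$ with $A^\pm\in\SOn$ (on the jump part). In either case one has the algebraic bound $\modulus{\widehat{P}}\ge c_n\modulus{P}=c_n$: for a jump, writing $A^+-A^-=A^-(Q-I)$ with $Q\in\SOn$, the matrix $Q-I$ has all its nonzero singular values of multiplicity at least $2$ (since $(Q-I)^T(Q-I)=2I-Q-Q^T$), hence is uniformly bounded away from the rank-one matrices, which forces $\modulus{(A^+-A^-)\wedge\nu}\ge c\modulus{A^+-A^-}$; the tangential case follows from the (linear) injectivity of $(\Lambda_1,\dots,\Lambda_n)\mapsto((\Lambda_j)_{lk}-(\Lambda_k)_{lj})$ on tuples of skew matrices. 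Integrating $\modulus{\widehat{P}}\ge c_n$ against $\modulus{DA}$ gives $\modulus{\Curl A}(\Omega)\ge c_n\modulus{DA}(\Omega)$, i.e.\ \eqref{eq:curl_bounds_D}. For $n\ge 3$ the same scheme applies, with the appropriately scaled form of the rigidity estimate of \cite{MSZ},\cite{LL} replacing the two-dimensional one used above.
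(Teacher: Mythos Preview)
Your non-concentration claim $\lim_{h\to 0}\limsup_j\max_Q\modulus{\Curl A_j}(Q)=0$ is false, and the blow-up heuristic you offer cannot rescue it. In dimension two, take $A_j=R_{\theta_j}$ with $\theta_j(x)=\phi(x/a_j)\sin(x_1/a_j^{2})$ for a fixed bump $\phi$ and $a_j\downarrow 0$. Then $\dist(A_j,\so)\equiv 0$ identically, while $\modulus{\Curl A_j}=\modulus{\nabla\theta_j}$ has total mass of order one and is supported in $B(0,Ca_j)$, so it concentrates at the origin; the $H^{-1}$ bound on $\Curl A_j$ is of course satisfied. Any rescaling around $0$ reproduces an $\so$-valued field, so there is nothing to contradict. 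Consequently your estimate $\norm{A_j-v^h_j}_{L^2}^2\le C\bigl(\norm{\dist(A_j,\SOn)}_{L^2}^2+\max_Q\modulus{\Curl A_j}(Q)\cdot M\bigr)$ does not force the left-hand side to zero. The scheme can be repaired, but not via non-concentration: on every cube one also has the trivial bound $\norm{A_j-R^Q_j}_{L^2(Q)}^2\le C\bigl(\norm{\dist(A_j,\SOn)}_{L^2(Q)}^2+h^n\bigr)$ (since $\modulus{A_j}\le\sqrt{n}+\dist(A_j,\SOn)$ pointwise), and interpolating the two bounds via $\min(t^{n/(n-1)},h^n)\le h\,t$ yields $\sum_Q\norm{A_j-R^Q_j}_{L^2(Q)}^2\le C\bigl(\norm{\dist(A_j,\SOn)}_{L^2(\Omega)}^2+h\,M\bigr)\to 0$ along any $h_j\to 0$, with no concentration hypothesis needed.

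The paper sidesteps the piecewise-constant construction altogether. It first passes to a weak $L^2$ limit $A$ and applies the rigidity estimate on each ball $B(x,\rho)$, obtaining in the limit $\int_{B(x,\rho)}\modulus{A-R_{\rho,x}}^2\le C\,T(B(x,2\rho))^{n/(n-1)}$ for some $R_{\rho,x}\in\SOn$, where $T$ is a weak-$*$ limit of $\modulus{\Curl A_j}$; a Vitali covering then shows that the set where $A\notin\SOn$ has $\mathcal H^{n-1}$ measure at most $C\,T(\Omega)$. The inequality $\modulus{DA}\le C\modulus{\Curl A}$ is quoted from \cite[Proposition~1]{LL} (your algebraic argument is essentially a re-proof of that proposition). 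Strong $L^2$ convergence comes last, from the soft observation that if $R_j(x)$ denotes the pointwise projection of $A_j(x)$ onto $\SOn$, then $A_j-R_j\to 0$ in $L^2$ by hypothesis, while $R_j\rightharpoonup A$ weakly with $\modulus{R_j}\equiv\modulus{A}\equiv\sqrt{n}$ a.e., so $\norm{R_j}_{L^2}\to\norm{A}_{L^2}$ and hence $R_j\to A$ strongly.
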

 \begin{proof}
	 Since $\int_{\Omega}{\modulus{A_j}^2}\de x \le C$, there exists (up to a subsequence) a matrix field $A \in L^2(\Omega)$ such that $A_j \rightharpoonup A$ in $L^2(\Omega)$.
	 Fix $x \in \Omega$ and $\rho > 0$. Using either~\cite[Theorem 3.3]{MSZ} when $n = 2$ or ~\cite[Theorem 4]{LL} when $n \ge 3$, we find a rotation $R^j_{\rho, x}$ such that
	 \[
		 \fint_{B(x, \rho)}{\modulus{A_j - R^j_{\rho, x}}^2 \de y} \le \frac{C(B(0, 1))}{\rho^n} \rB{\norm{\dist(A_j, SO(n))} + T_j^{\frac{n}{n-1}}\rB{B(x, \rho)}},
	 \]
	 where $T_j\coloneqq \modulus{\Curl(A_j)}\xrightharpoonup{*} T$. Thus, taking the $\limsup$ and passing to a subsequence, we find
	 \[
		 \lim_{j \to \infty}{\fint{\modulus{A_j - R^j_{\rho, x}}^2\de y}} \le C\limsup_{j \to \infty}{\frac{T_j(B(x, \rho))^{\frac{n}{n-1}}}{\rho^n}} \le C\frac{T(B(x, 2\rho))^{\frac{n}{n-1}}}{\rho^n}.
	 \]
	 Up to another subsequence, $R^j_{\rho, x} \to R_{\rho, x} \in SO(n)$. So, by the lower semi-continuity of the $L^2$ norm,
	 \begin{equation}
		\label{eq:aaa}
		 \int_{B(x, \rho)}{\modulus{A - R_{\rho, x}}^2\de y} \le  CT(B(x, 2\rho))^{\frac{n}{n-1}}.
	 \end{equation}
   We can now prove that $A(x) \in SO(n)$ for all $x \in \Omega\setminus \mathcal{M}$, where $\mathcal{M}$ has Hausdorff dimension at most $n-1$. For, notice that
   \[
    \mathcal{M}:=\cB{x \in \Omega\biggr| A(x) \notin SO(n)} \subset \bigcup_{c > 0} \mathcal{M}_c,
   \]
	 where
	 \[
	  \mathcal{M}_c:=\cB{x \in \Omega \biggr| \exists \rho_i \downarrow 0\text{ s.t. }\int_{B(x, \rho_i)} \modulus{A - R_{\rho_i, x}}^2 \de y > c\rho_i^n }.
	 \]
	 For each $x \in \mathcal{M}_c$ and $\delta > 0$, we can find a $\rho_{\delta}(x) \le \delta$ such that
	 \begin{equation}
	   \label{eq:565}
  	 CT(B(x, 2\rho_{\delta}(x)))^{\frac{n}{n-1}} \ge \int_{B(x, \rho_{\delta}(x))} \modulus{A - R_{\rho_{\delta}(x), x}}^2 \de y > c\rho_{\delta}(x)^n.
	 \end{equation}
	 Using Vitali's Lemma, we can find countably many points $x_i \in \mathcal{M}_c$ such that the balls $B_i \coloneqq B(x_i, 2\rho_{\delta}(x_i))$ are disjoint and
	 \[
	  \begin{split}
  	  \mathcal{M}_c &\subset \bigcup_{x \in \mathcal{M}_c} B(x, \rho_{\delta}(x)) \subset \bigcup_{x \in \mathcal{M}_c} \ol{B(x, 2\rho_{\delta}(x))} \subset\\
  	                &\subset \bigcup_{x \in \mathcal{M}_c} B(x, 6\rho_{\delta}(x)).
	  \end{split}
	 \]
	 Using~\eqref{eq:565}, we find
	 \begin{equation}
	  \label{eq:preH}
	  \mathcal{H}^{n-1}_{12 \delta}(\mathcal{M}_c) \le C \sum_{i \ge 1} \rho_{\delta}(x_i)^{n-1} \le C T(\Omega).
	 \end{equation}
	 Using the fact that the sets $\mathcal{M}_c$ are decreasing and taking the limit as $\delta\to 0$ in~\eqref{eq:preH}, we obtain
	 \[
	  \mathcal{H}^{n-1}(\mathcal{M}) \le C T(\Omega).
	 \]
	 
	 In particular, $A(x) \in SO(n)$ for almost every $x \in \Omega$. We can then apply~\cite[Proposition 1]{LL}, which gives~\eqref{eq:curl_bounds_D} (and $A \in BV(\Omega, SO(n))$).\\
	 Moreover, $A_j \rightharpoonup A$ and $A(x) \in \so$ for almost every $x \in \Omega$. Denote as $R_j(x)$ the projection of $A_j(x)$ on $\so$. Then $A_j = R_j + (A_j - R_j)$. We know that $A_j - R_j \to 0$ in $L^2(\Omega)$ while, up to a subsequence, $R_j \rightharpoonup A$. But then $R_j \to A$ (because the $L^2$ norms converge to the norm of $A$), and thus $A_j \to A$ in $L^2(\Omega)$.
	 \end{proof}
	From Lemmas~\ref{lemma:wlog_bdd_totvar_curl} and~\ref{lemma:wlog_bdd} and Proposition~\ref{prop:BV},  we obtain immediately the following corollary:
	 \begin{corollary}
	 \label{cor:BV}
	 There exists a constant $C> 0$ such that if $\epsilon_j \to 0$ and $(A_j, S_j) \in \mathcal{P}_{\epsilon_j, \alpha, L}$ be such that $\mathcal{F}_j(A_j, S_j)\le E_{\text{gb}}(\epsilon_j)$. Then, there exists another sequence, still denoted by $(A_j, S_j)$, such that $\mathcal{F}_j(A_j, S_j)\le C E_{\text{gb}}(\epsilon_j)$, $A_j \to A \in BV(\Omega)$ in $L^2(\Omega)$ and $A(x) \in \so$ for every $x \in \Omega \setminus \mathcal{M}$, where $\mathcal{M} \subset \Omega$ is a set of Hausdorff dimension at most $1$.
	 \end{corollary}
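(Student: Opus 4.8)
The plan is to run the two reduction lemmas in series and then feed the outcome into Proposition~\ref{prop:BV}; the only thing that requires attention is that the quantitative bounds chain together with the correct powers of $\epsilon_j$. First I would apply Lemma~\ref{lemma:wlog_bdd} to $(A_j, S_j)$ to obtain a pair which is uniformly bounded in $L^\infty(\Omega)$, smooth off its core, and has energy at most $C\,\mathcal{F}_{\epsilon_j}(A_j, S_j) \le C\,E_{\text{gb}}(\epsilon_j)$; then apply Lemma~\ref{lemma:wlog_bdd_totvar_curl} to that pair to produce $(\tilde A_j, \tilde S_j)$ which in addition satisfies $|\Curl \tilde A_j| \le C\mu_{2,\epsilon_j}(\tilde S_j)$ and still $\mathcal{F}_{\epsilon_j}(\tilde A_j, \tilde S_j) \le C\,E_{\text{gb}}(\epsilon_j)$. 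Each lemma halves the boundary-layer width, so the final pair lies in $\mathcal{P}(\epsilon_j, \alpha, L, \tau, \lambda, \ell/4)$; since the corollary suppresses the parameter $\ell$ this is immaterial.

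It then suffices to verify that $\{\tilde A_j\}$ meets the two hypotheses of Proposition~\ref{prop:BV}. For the first, the final estimate in the proof of Lemma~\ref{lemma:wlog_bdd_totvar_curl} gives
\[
 \int_{\Omega}\dist^2(\tilde A_j, SO(2))\,\de x \le C\,\mathcal{F}_{\epsilon_j}(A_j, S_j) \le C\,E_{\text{gb}}(\epsilon_j) = C\,\tau\,\epsilon_j\,\alpha\,L\,(|\log\alpha|+1),
\]
which tends to $0$ as $j \to \infty$ because $\tau$, $\alpha$ and $L$ are fixed. For the second, the core part of the energy controls $\tfrac{1}{\lambda^2}|B_{\lambda\epsilon_j}(\tilde S_j)| = \mathcal{E}_{\text{core},\epsilon_j}(\tilde S_j) \le \mathcal{F}_{\epsilon_j}(\tilde A_j,\tilde S_j) \le C\,E_{\text{gb}}(\epsilon_j)$, whence
\[
 \mu_{2,\epsilon_j}(\tilde S_j)(\Omega) = \frac{1}{\lambda^2\epsilon_j}\,|B_{\lambda\epsilon_j}(\tilde S_j)| \le \frac{C}{\epsilon_j}\,E_{\text{gb}}(\epsilon_j) = C\,\tau\,\alpha\,L\,(|\log\alpha|+1),
\]
a bound independent of $j$; combined with $|\Curl \tilde A_j| \le C\mu_{2,\epsilon_j}(\tilde S_j)$ this yields $\sup_j |\Curl \tilde A_j|(\Omega) < \infty$. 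The essential point is the cancellation of $\epsilon_j$: it is precisely the fact that the admissible energy, and hence the core volume, is $O(\epsilon_j)$ that upgrades the merely $L^\infty$ control of $\Curl \tilde A_j$ coming from Lemma~\ref{lemma:wlog_bdd_totvar_curl} into a \emph{uniform} total-variation bound, which is what Proposition~\ref{prop:BV} requires.

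Applying Proposition~\ref{prop:BV} with $n=2$ (taking for $\Omega$ the interior of $[-L,L]^2$, which is bounded, open and simply connected), we extract a subsequence along which $\tilde A_j \to A$ strongly in $L^2(\Omega)$ with $A \in BV(\Omega, SO(2))$ and such that $\mathcal{M} := \{x : A(x) \notin SO(2)\}$ has Hausdorff dimension at most $1$; the bound $\mathcal{F}_{\epsilon_j}(\tilde A_j, \tilde S_j) \le C\,E_{\text{gb}}(\epsilon_j)$ obviously persists along this subsequence. Relabelling $(\tilde A_j, \tilde S_j)$ as $(A_j, S_j)$ gives exactly the assertion. I do not expect any genuine obstacle: the statement is a bookkeeping consequence of results already established, and the single subtlety — keeping track of the scaling so that the curl bound survives the passage to the limit — is dealt with in the previous paragraph.
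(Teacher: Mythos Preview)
Your proposal is correct and follows precisely the route indicated by the paper, which states that the corollary is an immediate consequence of Lemmas~\ref{lemma:wlog_bdd} and~\ref{lemma:wlog_bdd_totvar_curl} together with Proposition~\ref{prop:BV}. You have in fact supplied more detail than the paper does, in particular the explicit check that the $\epsilon_j$ cancels so that the total variation of $\Curl \tilde A_j$ is uniformly bounded.
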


Using a slicing argument and Corollary~\ref{cor:BV}, we obtain the estimate $\mu_2(\Omega)\ge \modulus{DA}(\Omega)\ge C \alpha L$, that is a (weak) lower bound to the energy. We are going to improve this result in a first qualitative, and then quantitative way. By qualitative we mean that the limit field is actually a microrotation, while the quantitative improvement is the estimate gives a lower bound comparable to the energy of a grain boundary. These facts rely essentially on two basic tools: the existence of a harmonic competitor and an ``optimal foliation'' lemma. We give here the proof of the first one.
\begin{proposition}
	\label{proposition:wlog_harmonic}
	Let $\Omega \subset \mathbb{R}^n$ be open, and $A\in L^{\infty}(\Omega)^{n\times n}$ be a matrix field such that $\norm{A}_{\infty} \le M$, and let $O \subset \Omega \setminus B_{\lambda \epsilon}(\spt \Curl A)$ be an open, connected subset with Lipschitz boundary. Then there exists a matrix field $\ta \in L^2(\Omega)^{n\times n}$ which is harmonic in $O$ and a constant $C_{n, M }> 0$ (depending only on the dimension $n$ and $M$) such that
	\[
		\norm{A - \ta}_{L^2(O)} \le C_{n, M} \norm{\dist(A, \SOn)}_{L^2(O)}.
	\]
 \end{proposition}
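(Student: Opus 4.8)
The plan is to take $\tilde A$ to be the gradient of the harmonic extension of a potential of $A$ in $O$ (equivalently, the $L^2$--orthogonal projection of $A$ onto the closed subspace of harmonic fields), and to control the error by one integration by parts built on the fact that the determinant is a null Lagrangian.

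Since $O\subset\Omega\setminus B_{\lambda\epsilon}(\spt\Curl A)$ is disjoint from $\spt\Curl A$, the distributional $\Curl$ of $A$ vanishes identically in $O$. I first assume $O$ simply connected, so $A=\nabla u$ for some $u\in H^1(O;\mathbb R^n)$. Let $\tilde u\in H^1(O;\mathbb R^n)$ be the harmonic extension of the trace $u|_{\partial O}\in H^{1/2}(\partial O)$, so that $w:=u-\tilde u\in H^1_0(O;\mathbb R^n)$ and, by the $\dot H^1$--orthogonality of harmonic maps to $H^1_0$, $\int_O\nabla\tilde u:\nabla w\,\de x=0$; put $\tilde A:=\nabla\tilde u$ in $O$ and $\tilde A:=A$ in $\Omega\setminus O$. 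Then $\tilde A$ is harmonic in $O$ and lies in $L^2(\Omega)^{n\times n}$, and the orthogonality gives
\[
 \|A-\tilde A\|_{L^2(O)}^2=\|\nabla w\|_{L^2(O)}^2=\int_O|\nabla u|^2\,\de x-\int_O|\nabla\tilde u|^2\,\de x ,
\]
so the problem reduces to bounding this difference of Dirichlet energies by $C\int_O\dist^2(A,\SOn)\,\de x$.

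For $n=2$ this is immediate. The determinant is a null Lagrangian, so $\int_O\det\nabla u\,\de x=\int_O\det\nabla\tilde u\,\de x$ (equal boundary traces; rigorously, approximate $w$ in $H^1_0(O)$ by maps in $\mathcal C^\infty_c(O)$ and integrate the bilinear cross terms by parts, using that $\nabla\tilde u$ is $\Curl$--free and divergence--free). Hence
\[
 \|A-\tilde A\|_{L^2(O)}^2=\int_O\bigl(|\nabla u|^2-2\det\nabla u\bigr)\de x-\int_O\bigl(|\nabla\tilde u|^2-2\det\nabla\tilde u\bigr)\de x\le\int_O\bigl(|\nabla u|^2-2\det\nabla u\bigr)\de x ,
\]
since $|F|^2-2\det F=2\,\dist^2\bigl(F,\{a\,\id+bJ:a,b\in\mathbb R\}\bigr)\ge 0$ for every $F\in\mathbb R^{2\times2}$. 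As $\so\subset\{a\,\id+bJ:a,b\in\mathbb R\}$, one has $\dist(F,\{a\,\id+bJ\})\le\dist(F,\so)$, whence $\|A-\tilde A\|_{L^2(O)}^2\le 2\int_O\dist^2(A,\so)\,\de x$: the assertion, with a constant depending only on the dimension.

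If $O$ is not simply connected I would write $A=\nabla u+h$ in $O$, $h\in L^2(O)^{2\times2}$ a fixed $\Curl$--free and divergence--free field carrying the periods of $A$, leave $h$ unchanged and replace $\nabla u$ by $\nabla\tilde u$; the same orthogonality and null Lagrangian identity (applied to $\det(\nabla u+h)$ versus $\det(\nabla\tilde u+h)$, equal--trace because $h$ is fixed) again yield the estimate. The delicate point — and, I expect, the only real obstacle — is that the constant must be independent of the shape of $O$: a naive competitor, e.g. the single constant rotation delivered by a global rigidity estimate on $O$, would carry a constant depending on $O$ through that rigidity constant, whereas the harmonic extension isolates the $O$--independent content automatically and the identity $|F|^2-2\det F=2\dist^2(F,\{a\,\id+bJ\})$ does the rest. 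For $n\ge3$ there is no such identity — no quadratic null Lagrangian bounded above by $|\cdot|^2$ is constant on $\SOn$ — and there I would recover the bound by a Whitney decomposition of $O$ into balls together with the scale--invariant ball--wise geometric rigidity of~\cite{LL} (which gives on each ball a harmonic replacement within $C_n\bigl(\int_{B_k}\dist^2(A,\SOn)\,\de x\bigr)^{1/2}$ of $A$), and then patch these local harmonic pieces into one globally harmonic competitor keeping the constant purely dimensional — this patching being the main technical difficulty for $n\ge3$. The hypothesis $\|A\|_{L^\infty}\le M$ is not used for $n=2$; it serves in that patching and in the case of a general energy density $W$.
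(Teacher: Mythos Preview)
Your $n=2$ argument is correct and genuinely different from (and cleaner than) the paper's: the identity $|F|^2-2\det F=2\dist^2(F,\mathbb{R}\,\id\oplus\mathbb{R}J)\le 2\dist^2(F,\so)$, combined with the null-Lagrangian property of the determinant and the Dirichlet orthogonality $\|\nabla w\|_{L^2}^2=\|\nabla u\|_{L^2}^2-\|\nabla\tilde u\|_{L^2}^2$, gives the bound with constant $\sqrt2$ independent of $M$; your treatment of the non-simply-connected case (keep the harmonic Hodge piece $h$ fixed and note $\int_O\nabla w:h=0$ since $h$ is divergence-free) is also correct. The paper does not use this two-dimensional identity. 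Instead it argues row by row in a way that works for all $n$: after a global Hodge decomposition $A=\nabla u+F$, it replaces one component $u^1$ by its harmonic extension $u^1_h$ in $O$, invokes $\int_O\det A=\int_O\det A_h$, and expands both sides as $\int_O R_1\wedge\cdots\wedge R_n\,\det U$ via the polar factor $U=\sqrt{AA^T}$ on the good set $G=\{\det A>\tfrac12\}$. Subtracting and rearranging yields $\int_O|\nabla u^1-\nabla u^1_h|^2=I_1+I_2+I_3$, each error term bounded by $C_{n,M}\sqrt E\,\|\nabla u^1-\nabla u^1_h\|_{L^2}$ using $\|A\|_\infty\le M$ to control the measure of $O\setminus G$ and the pointwise deviation $|U^{-2}\det U-\id|\le C_{n,M}\dist(A,\SOn)$ on $G$. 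Your approach buys a sharper constant and no dependence on $M$ in dimension two; the paper's buys a uniform argument across dimensions.

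For $n\ge3$ your proposal has a real gap. Whitney balls plus ball-wise rigidity give you local pieces each close to $A$ in $L^2$, but assembling them into a single field harmonic on all of $O$ with a domain-independent constant is precisely the step you do not carry out and flag as ``the main technical difficulty'': any partition-of-unity patching destroys harmonicity (and the $\Curl$-free structure), and the transition errors between neighbouring balls are not controlled by $\dist(A,\SOn)$ alone without further input. The paper sidesteps this entirely by building the global harmonic competitor in one stroke as the harmonic extension of a global potential and using the null-Lagrangian identity once on all of $O$; no patching ever occurs. So your proof is complete and elegant for $n=2$, but for $n\ge3$ the paper's row-by-row polar-decomposition route is the one that actually delivers the statement as written.
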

\begin{proof}
	Let $E\coloneqq \norm{\dist(A, \SOn)}_{L^2(O)}^2$. The Hodge decomposition of $A$ gives a vector field $u \in W^{1, 2}_0(\Omega)^n$ and a divergence-free (in the sense of distributions in $\Omega$) matrix field $F \in L^2(\Omega)^{n\times n}$ such that
	\[
		A = \nabla u + F.
	\]
	As in the proof of Lemma~\ref{lemma:wlog_bdd}, we can assume $A$ to be smooth in $\ol{B_{\lambda \epsilon}(\spt \Curl A)}$. Consider the function $u^1_h \in W^{1,2}(O)$ defined as the harmonic extension of $u$ in $O$:
    \[
        \begin{cases}
            \Delta u^1_h = 0 &\text{in } O,\\
            u^1_ h = u^1 &\text{on } \partial O,
        \end{cases}
    \]
    and let then $A_h\coloneqq \nabla (u^1_h, u^2, \cdots, u^n) + F$. Define $G\coloneqq O \cap \cB{\det(A) > \frac{1}{2}}$, and $U(x)\coloneqq \sqrt{A A^T}\chi_G + (1-\chi_G)\id$, together with the vector fields $R_i(x)\coloneqq \sB{U(x)^{-1}\rB{\nabla u + F}}^i$ and $R_{1h}\coloneqq \sB{U(x)^{-1}\rB{\nabla u_h + F}}^1$. In what follows, we identify vector fields with their associated differential $1$-forms.
    We first notice that
    \begin{equation}
	    \label{eq:eq_det}
	    \int_O \det(A)\de x = \int_O \det(A_h) \de x.
    \end{equation}
    Indeed, since the determinant is a null Lagrangian,$\:$~\eqref{eq:eq_det} is equivalent to
    \[
      \sum_{i = 2}^n \int_O \de \rB{u^1_h - u^1}\wedge \bigwedge_{j=2}^n \rB{(1 - \delta_{ij})\de u^j + F^j} = 0,
    \]
    which holds because of the Leibniz formula for forms, the fact that $\Curl F^i = 0$ in $O$ and Stokes' theorem (together with $u^1_h = u^1$ on $\partial O$). Hence, we can write (notice that, since $R_1, \cdots, R_n$ are orthonormal, for any vector field $\mathcal{A}$ we have $\mathcal{A} \wedge R_2\wedge \cdots \wedge R_n = \sum_{k = 1}^n\scal{\mathcal{A}}{R_k}R_k\wedge R_2 \wedge \cdots \wedge R_n = \scal{\mathcal{A}}{R_1}R_1 \wedge \cdots \wedge R_n = \scal{\mathcal{A}}{R_1}\de x^1 \wedge \cdots \wedge \de x^n$)
    \begin{equation}
    	\label{eq:h1}
        \begin{split}
            \int_O \det(A) \de x &= \int_O \det(A_h) \de x = \int_O \rB{\de u^1_h + F^1}\wedge \rB{\de u^2 + F^2} \wedge \cdots \wedge \rB{\de u^n + F^n} =\\
            &=\int_O \rB{R_{1 h} \wedge R_2 \wedge \cdots \wedge R_n} \det(U)  = \\
            &=\int_G \scal{R_{1h}}{R_1} \det(U)\de x + \int_{O \setminus G} R_{1h}\wedge R_2 \wedge \cdots \wedge R_n \det(U).
        \end{split}
    \end{equation}
    On the other hand,
    \begin{equation}
    	\label{eq:h2}
        \int_O \det(A) \de x = \int_G \scal{R_1}{R_1} \det(U) + \int_{O\setminus G} R_1 \wedge \cdots \wedge R_n \det(U).
    \end{equation}
    Subtracting~\eqref{eq:h1} from~\eqref{eq:h2}, we obtain
    \begin{equation}
        \label{eq:eq1}
        0 = \int_G \scal{R_1 - R_{1h}}{R_1}\det(U) \de x + \int_{O \setminus G} \rB{R_1 - R_{1h}}\wedge R_2 \wedge \cdots \wedge R_n\det(U).
    \end{equation}
    Rewrite~\eqref{eq:eq1} as
    \[
        \begin{split}
        \int_G \scal{\nabla u^1 - \nabla u^1_h}{\nabla u^1 + F^1} = &-\int_G \scal{\nabla u^1 - \nabla u^1_h}{\rB{\det(U)U^{-2} - \id}\rB{\nabla u^1 + F^1}} \de x +\\
        &+\int_{O\setminus G} \rB{R_1 - R_{1h}} \wedge R_2 \wedge \cdots \wedge R_n \det(U),
    \end{split}
    \]
    and then add $\int_{O\setminus G} \scal{\nabla u^1 - \nabla u^1_h}{\nabla u^1 + F^1}$ on both sides. Since $u^1_h$ is the harmonic extension of $u^1$ and $\text{div}(F^1) = 0$, we have
    \[
	\int_O \modulus{\nabla u^1 - \nabla u^1_h}^2 \de x = \int_O \scal{\nabla u^1 - \nabla u^1_h}{\nabla u^1 + F^1}\de x = I_1 + I_2 + I_3,
    \]
    where
    \[
        \begin{split}
            I_1&\coloneqq \int_{O\setminus G} \scal{\nabla u^1 - \nabla u^1_h}{\nabla u^1 + F^1},\\
            I_2&\coloneqq -\int_G \scal{\nabla u^1 - \nabla u^1_h}{\rB{\det(U)U^{-2} - \id}\rB{\nabla u^1 + F^1}},\\
            I_3&\coloneqq \int_{O\setminus G} \rB{R_1 - R_{1h}}\wedge R_2 \wedge \cdots \wedge R_n.
        \end{split}
    \]
    Now, because of the continuity of the determinant, there exists a dimensional constant $c_n > 0$ such that $\cB{\det(A) \le \frac{1}{2}} \subset \cB{\dist(A, \SOn)\ge C_n}$. Thus, since $\norm{A}_{\infty} \le M$,
    \[
        \begin{split}
            \modulus{\int_{O\setminus G} \scal{\nabla u^1 - \nabla u^1_h}{\nabla u^1 + F^1}\de x} &\le M \int_{\cB{\dist(A, SO(n)) \ge C_n}} \modulus{\nabla u^1 - \nabla u^1_h} \de x \le \\
            &\le C_n M \sqrt{E} \sqrt{\int_O \modulus{\nabla u^1 - \nabla u^1_h}^2 \de x}.
        \end{split}
    \]
    Let us now estimate $I_2$. Since the function $f(U)\coloneqq U^{-2}\det(U)$ is smooth on $G$, $\norm{A}_{\infty}\le M$ and $f(\id) = \id$, there exists a constant, depending only on $n$ and $M$,$C_n = C_n(M) > 0$ such that
    \[
        \modulus{f(U) - \id} \le C_n \modulus{U - \id} = C_n\dist(A, \SOn).
    \]
    Then
    \[
        \begin{split}
        \modulus{I_2} &\le C_n \int_G \modulus{\nabla u^1 - \nabla u^1_h} \modulus{U^{-2}\det(U) - \id}\modulus{\nabla u^1 + F^1} \de x \le\\
        &\le C_n \sqrt{E} \sqrt{\int_O \modulus{\nabla u^1 - \nabla u^1_h}^2 \de x}.
    \end{split}
    \]
    Finally, let us estimate $I_3$. Again because of the boundedness of $A$,
    \[
        \begin{split}   
            \modulus{I_3} \le C_n \int_{O\setminus G} \modulus{R_1 - R_{1h}}\frac{c_n}{c_n}\de x \le C_n \sqrt{E}\sqrt{\int_O \modulus{\nabla u^1 - \nabla u^1_h}^2 \de x}.
        \end{split}
    \]
    Combining these estimates together, we find
    \[
        \int_O \modulus{\nabla u^1 - \nabla u^1_h}^2 \de x \le C_n \sqrt{E}\sqrt{\int_O \modulus{\nabla u^1 - \nabla u^1_h}^2 \de x},
    \]
    i.e.
    \[
        \int_O \modulus{\nabla u^1 - \nabla u^1_h}^2 \de x \le C_n E.
    \]
    Applying the same procedure to each component, we find
    \[
        \int_O \modulus{\nabla u - \nabla u_h}^2 \de x \le C_n E,
    \]
    where $u_h = (u^1_h, \cdots, u^n_h)$. Now we can define
    \[
	    \widetilde{u}\coloneqq u_h \chi_O + u\chi_{\Omega\setminus O},
    \]
    and set $\ta\coloneqq \nabla\widetilde{u} + F$. Since $\Div(\ta) = \Delta \widetilde{u} = 0$ and $\Curl(\ta) = 0$ in $O$, from the identity
    \[
	    -\Delta L^i_j + \partial_j \Div L^i = -\sum_{k = 1}^n\partial_k\rB{\partial_k L^i_j - \partial_j L^i_k},
    \]
    valid for any matrix field $L \in L^1\rB{\Omega}^{n\times n}$, we infer that $\Delta\ta = 0$ in $O$.
\end{proof}

\begin{remark}
 \label{rmk:competitors}
 Combining together the lemmata~\ref{lemma:wlog_bdd},~\ref{lemma:wlog_bdd_totvar_curl} and Proposition~\ref{proposition:wlog_harmonic}, we have that for every $(A, S)\in \mathcal{P}_{\epsilon}$ such that $\mathcal{F}_{\epsilon}(A, S)\le E_{\text{gb}}(\epsilon)$, we can find a competitor $(\tilde{A}, \tilde{S})\in \mathcal{A}_{\epsilon}$ whose energy can be estimated in terms of the original one, i.e. $\mathcal{F}_{\epsilon}(\tilde{A}, \tilde{S})\le C \mathcal{F}_{\epsilon}(A, S)$, where $C>0$ is a universal constant, satisfying the following properties:
 \begin{enumerate}[(a)]
 	\item $\norm{\tilde{A}}_{\infty} \le C$;
 	\item $\modulus{\Curl(\tilde{A})} \le C\mu_{2, A}$;
 	\item $\Delta \tilde{A} = 0$ in $\Omega_{\lambda\epsilon}(\tilde{A})$.
 \end{enumerate}
 That is, since we are interested in a lower bound to the energy, we can restrict our attention to those pairs in $\mathcal{P}_{\epsilon}$ satisfying (a), (b) and (c).
\end{remark}

\begin{remark}
	If $A \in \mathcal{A}_{\epsilon} \cap \cB{G:\norm{G}_{\infty} \le M}$ and $\ta$ is the matrix field given by Lemma~\ref{proposition:wlog_harmonic}, then the Burgers' vectors relative to $A$ still define a bounded functional from $1$-cycles into $\mathbb{R}^2$, and it can also be proved without employing the maximum principle. Indeed, if we identify $\tilde{A}$ with a vector of $1$-forms, the Burgers' vector
	\[
		\declareapp{\harpoon{b}_{\ta}}{Z_1(\Omega_{\lambda\epsilon}(\ta); \mathbb{R})}{\mathbb{R}^2}{T}{\scal{T}{\ta}}
	\]
	defines a bounded operator (where the space of $1$-cycles is endowed with the mass norm). Indeed, in $\Omega\setminus B_{\lambda \epsilon}(\ta)$, we can write $\ta = \de u_h + F$, where $A = \de u + F$. Then, since $T$ is a closed current,
	\[
		\scal{T}{\ta} = \scal{T}{\de u_h + F} = \scal{T}{F} = \scal{T}{\de u + F} = \scal{T}{A}.
	\]
	But $\modulus{\scal{T}{A}} \le \norm{A}_{\infty} \mathbf{M}(T)\le M \mathbf{M}(T)$, hence the claim.
\end{remark}
We shall need the following Lemma, which gives an expression for the Burgers' vector in terms of the gradient of the fields and the position of the points on the curve.
\begin{lemma}
    \label{lemma:repr_burgers}
		Suppose $\gamma\subset \mathbb{R}^2$ is a closed, simple Lipschitz curve, and $V$ is a $\mathcal{C}^1$ vector field defined in a neighborhood of $\gamma$. Then
		\[
			\int_{\gamma} V(x) \cdot t(x) \de \mathcal{H}^1 = -\int_{\gamma} \nabla V(x) x \cdot t(x) \de \mathcal{H}^1,
		\]
		where $t(x)$ is the tangent vector of $\gamma$ at $x$.
	\end{lemma}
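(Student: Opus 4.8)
The plan is to reduce the identity to the statement that the tangential line integral of a gradient along a closed curve vanishes. First I would introduce the scalar function $g(x)\coloneqq V(x)\cdot x$, which is of class $\mathcal{C}^1$ on the open neighbourhood of the compact set $\gamma$ on which $V$ is $\mathcal{C}^1$; by the Leibniz rule $\partial_k g(x)=V_k(x)+\sum_i\partial_k V_i(x)\,x_i$, that is $\nabla g(x)=V(x)+\nabla V(x)\,x$. Hence the asserted identity is equivalent to $\int_\gamma\nabla g(x)\cdot t(x)\,\de\mathcal{H}^1=0$, so it suffices to show that the tangential integral of the gradient of an arbitrary $\mathcal{C}^1$ function along a closed simple Lipschitz curve is zero.

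To prove the latter I would fix the arclength parametrisation $\sigma\colon[0,T]\to\mathbb{R}^2$ of $\gamma$, where $T\coloneqq\mathcal{H}^1(\gamma)$, so that $\sigma$ is $1$-Lipschitz, injective on $[0,T)$, $\sigma(0)=\sigma(T)$, and (by Rademacher) differentiable a.e.\ with $\modulus{\sigma'}=1$ and $t(\sigma(s))=\sigma'(s)$ for a.e.\ $s$. The area formula applied to the injective Lipschitz map $\sigma$ then gives, for every bounded Borel vector field $W$ defined near $\gamma$,
\[
\int_\gamma W(x)\cdot t(x)\,\de\mathcal{H}^1=\int_0^T W(\sigma(s))\cdot\sigma'(s)\,\de s,
\]
which is the sense in which the tangential integrals (and therefore the Burgers' vectors) are understood throughout the paper.

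Next I would observe that $s\mapsto g(\sigma(s))$ is a $\mathcal{C}^1$ map composed with a Lipschitz map, hence Lipschitz and therefore absolutely continuous on $[0,T]$, and that the chain rule holds at every point of differentiability of $\sigma$:
\[
\frac{\de}{\de s}\,g(\sigma(s))=\nabla g(\sigma(s))\cdot\sigma'(s)\qquad\text{for a.e.\ }s\in[0,T].
\]
The fundamental theorem of calculus for absolutely continuous functions and the closedness $\sigma(0)=\sigma(T)$ then give $\int_0^T\nabla g(\sigma(s))\cdot\sigma'(s)\,\de s=g(\sigma(T))-g(\sigma(0))=0$; applying the area-formula identity above to $W=\nabla g$ turns this into $\int_\gamma\nabla g\cdot t\,\de\mathcal{H}^1=0$, and unwinding the definition of $g$ delivers the statement.

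I do not expect a genuine obstacle: the argument is in essence the fundamental theorem of calculus. The only steps needing (entirely standard) care are the existence of the arclength parametrisation with the a.e.\ identity $t\circ\sigma=\sigma'$ — classical for simple closed Lipschitz curves — and the a.e.\ validity of the chain rule for $\mathcal{C}^1\circ\mathrm{Lip}$, which holds because $\sigma$ is differentiable a.e.\ and $g$ is differentiable everywhere. Note, finally, that the orientation of $t$ plays no role: both sides of the identity are paired with the same $t$, and the gradient term integrates to zero independently of orientation.
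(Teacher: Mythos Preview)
Your proof is correct. Both your argument and the paper's rest on the fundamental theorem of calculus along the closed curve, but you package it more directly: you observe at once that $V+\nabla V\,x=\nabla(V\cdot x)$ (your index computation $\partial_k g=V_k+\sum_i\partial_k V_i\,x_i$ is exactly what the notation $\nabla V\,x$ in the statement encodes) and conclude immediately from $\int_\gamma\nabla g\cdot t\,\de\mathcal{H}^1=0$. The paper instead parametrises, writes $V(f(t))-V(x_0)=\int_0^t\nabla V(f(s))\dot f(s)\,\de s$, applies Fubini to the resulting double integral, and then uses $\int_s^1\dot f\,\de t=x_0-f(s)$ together with the vanishing of $\int_0^1\nabla V(f(s))\dot f(s)\,\de s$ --- in effect an integration by parts in the parametrisation that reproduces the same product-rule identity you wrote down at the outset. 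Your route is shorter and bypasses the Fubini step; the paper's computation is equivalent but carries the basepoint $x_0$ through explicitly before it cancels.
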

	\begin{proof}
		Let $\gamma = \cB{f(t)\biggr|t\in[0, 1)}$, where $f$ is a Lipschitz parametrization of $\gamma$, and set $x_0\coloneqq f(0)=f(1)$. Then
		\[
			\begin{split}
				\int_{\gamma} V \cdot t \de \mathcal{H}^1 &= \int_{\gamma}\rB{V(x) - V(x_0)}\cdot t\de \mathcal{H}^1 = \int_0^1 \rB{V(f(t)) - V(f(0))}\cdot \dot{f}(t)\de t = \\
				&=\int_0^1\rB{\int_0^t \nabla V(f(s))\dot{f}(s)\de s}\cdot \dot{f}(t)\de t = \int_0^1 \nabla V(f(s))\dot{f}(s)\cdot \int_s^1 \dot{f}(t) \de t \de s = \\
				&= -\int_0^1\nabla V(f(s))\dot{f}(s) \cdot f(s) \de s = - \int_0^1 \nabla V(f(s))f(s)\cdot \frac{\dot{f}(s)}{\modulus{\dot{f}(s)}}\modulus{\dot{f}(s)} \de s = \\
				&= - \int_{\gamma} \nabla V(x)x\cdot t(x) \de \mathcal{H}^1.\qedhere
		 	\end{split}
		\]
	\end{proof}
	
	As an immediate application of Lemma~\ref{lemma:repr_burgers}, we see that if $\gamma$ lies in a region where $A$ is both $\Curl$ and divergence free, then
	\begin{equation}
			\label{eq:rep_burgers_harmonic}
			\begin{split}
			\harpoon{b}(\gamma) &= -\rB{\int_{\gamma} \rB{x\cdot \nabla A^1_1, x^{\perp}\cdot \nabla A^1_1}\cdot t(x) \de \mathcal{H}^1, \int_{\gamma} \rB{-x^{\perp}\cdot\nabla A^2_2, x\cdot \nabla A^2_2}\cdot t(x)\de \mathcal{H}^1} \equiv \\
			&\equiv - \int_{\gamma} \rB{\rB{\begin{matrix}x & x^{\perp}\\-x^{\perp}& x\end{matrix}}\cdot \rowvect{\nabla A^1_1}{\nabla A^2_2}} t(x)\de \mathcal{H}^1.
			\end{split}
	\end{equation}
We are left with the second fundamental tool, that is the foliation Lemma. In the proof, we will need the following technical covering lemma:
 \begin{lemma}
  \label{lemma:make_deg2_disjoint}
  Let $R > 0$, $\delta\in (0, 1)$, $M > 10$ and consider a family $I = \cB{x_i}_{i = 1}^N$ of points in $\mathbb{R}^n$ whose subfamily $J \subset I$ has the property that for each $j \in J$ there exists a $k\in \mathbb{N}$ such that $\rB{B\rB{x_i, \rB{\frac{M}{2} - 2}\rho_k \setminus B\rB{x_i, \rho_k}}} \cap I = \emptyset$, where $\rho_k\coloneqq \delta^k R$. Set
  \[
	  r(x_j)\coloneqq :r_j\coloneqq \max\cB{\rho_k \biggr| k\ge 0 \text{ and } \rB{B\rB{x_j, \rB{\frac{M}{2} - 2}\rho_k} \setminus B\rB{x_j, \rho_k}} \cap I = \emptyset}.
  \]
  Then there exists a subfamily $\tilde{J}\subset J$ such that the balls $\cB{B\rB{x_i, \rB{\frac{M}{4}-1}r_i}}_{i \in \tilde{J}}$ are disjoint and
  \[
   \bigcup_{j \in J} B\rB{x_j, \rB{\frac{M}{8} - \frac{5}{4}}r_j} \subset \bigcup_{i \in \tilde{J}} B\rB{x_i, \rB{\frac{M}{8} - \frac{1}{4}}r_i}.
  \]
 \end{lemma}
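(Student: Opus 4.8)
The plan is to run a greedy, Vitali-type selection on $J$, exploiting the empty-annulus hypothesis to replace the usual dilation factor $5$ by something much smaller.

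Since $I$, and hence $J$, is finite, I first enumerate $J = \{j_1,\dots,j_m\}$ so that $r_{j_1}\ge r_{j_2}\ge\cdots\ge r_{j_m}$; note that each $r_j$ is well-defined, because the set of admissible exponents is nonempty by the hypothesis on $J$ and $\rho_k=\delta^k R$ is strictly decreasing, so the largest admissible $\rho_k$ is attained. Then I construct $\tilde J\subset J$ inductively: processing $j_1,\dots,j_m$ in this order, I put $j_\ell$ into $\tilde J$ if and only if $B\rB{x_{j_\ell},\rB{\frac{M}{4}-1}r_{j_\ell}}$ is disjoint from $B\rB{x_i,\rB{\frac{M}{4}-1}r_i}$ for every $i$ already selected. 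By construction the family $\cB{B\rB{x_i,\rB{\frac{M}{4}-1}r_i}}_{i\in\tilde J}$ is pairwise disjoint, which is the first assertion.

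For the covering inclusion, fix $j\in J$. If $j\in\tilde J$ there is nothing to do, since $\frac{M}{8}-\frac{5}{4}<\frac{M}{8}-\frac{1}{4}$. If $j\notin\tilde J$, then at the step at which it was rejected there was some $i\in\tilde J$, necessarily processed earlier and therefore with $r_i\ge r_j$, such that $B\rB{x_i,\rB{\frac{M}{4}-1}r_i}\cap B\rB{x_j,\rB{\frac{M}{4}-1}r_j}\ne\emptyset$. Since $M>10$ the coefficient $\frac{M}{4}-1$ is positive, so from $r_j\le r_i$ and the triangle inequality $\modulus{x_i-x_j}\le\rB{\frac{M}{4}-1}\rB{r_i+r_j}\le\rB{\frac{M}{2}-2}r_i$. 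This is the point where the hypothesis enters: $x_i$ and $x_j$ both belong to $I$, and $B\rB{x_i,\rB{\frac{M}{2}-2}r_i}\setminus B\rB{x_i,r_i}$ contains no point of $I$ by the very definition of $r_i$; hence the bound just obtained forces the far stronger estimate $\modulus{x_i-x_j}\le r_i$. Consequently, for any $y\in B\rB{x_j,\rB{\frac{M}{8}-\frac{5}{4}}r_j}$, using again $r_j\le r_i$ and $\frac{M}{8}-\frac{5}{4}>0$,
\[
\modulus{y-x_i}\le\modulus{y-x_j}+\modulus{x_j-x_i}\le\rB{\frac{M}{8}-\frac{5}{4}}r_j+r_i\le\rB{\frac{M}{8}-\frac{5}{4}}r_i+r_i=\rB{\frac{M}{8}-\frac{1}{4}}r_i,
\]
so $y\in B\rB{x_i,\rB{\frac{M}{8}-\frac{1}{4}}r_i}$, and the desired inclusion follows.

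The only step that is not pure bookkeeping is the upgrade from $\modulus{x_i-x_j}\le\rB{\frac{M}{2}-2}r_i$ to $\modulus{x_i-x_j}\le r_i$, which is precisely what the empty-annulus property of $J$ is designed to deliver; once that is in place the numerics close exactly because $M>10$, so I do not expect any further obstacle beyond keeping the coefficients straight.
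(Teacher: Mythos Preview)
Your argument is correct and is essentially the same as the paper's: a Vitali-type greedy selection in order of decreasing $r_j$, followed by the key upgrade $\modulus{x_i-x_j}<\rB{\frac{M}{2}-2}r_i\Rightarrow\modulus{x_i-x_j}<r_i$ via the empty-annulus property, and then the same triangle-inequality computation. The only cosmetic difference is that the paper groups $J$ by the discrete levels $\{r_j=\delta^kR\}$ and selects maximally level by level, whereas you sort all of $J$ at once; the resulting selected family and the covering verification are identical in spirit, and your version is arguably cleaner (it also avoids a minor imprecision in the paper, where ``$r_i=r_j$'' should read ``$r_i\ge r_j$'').
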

 \begin{proof}
  Let $\beta\coloneqq \frac{M}{2} - 2$. Define inductively the family $\tilde{J}$ as follows. Select a maximal family of points $J_0$ from $\cB{j \in J\biggr| r_j = R}$ such that
  \[
    \modulus{x_i - x_j} \ge \frac{\beta}{2} \rB{r_i + r_j} = \beta r_i = \beta r_j\qquad \forall x_i, x_j \in J_0,
  \]
  and set $\tilde{J}_0\coloneqq J_0$. Suppose then that the family $\tilde{J}_k$ has been defined, $k\ge 0$, and select a maximal family of points $J_{k+1}$ from $\cB{j \in J\biggr| r_j = \delta^{k+1} R}$ such that
  \[
    \modulus{x_i - x_j} \ge \frac{\beta}{2}\rB{r_i + r_j}\qquad \forall x_i, x_j \in \tilde{J}_k \cup J_{k+1},
  \]
  and then set $\tilde{J}_{k+1} = \tilde{J}_k \cup J_{k+1}$. The set $\tilde{J}$ is given by
  \[
    \tilde{J}\coloneqq \bigcup_{k\ge 0} \tilde{J}_k.
  \]
  Clearly, the balls $\cB{B\rB{x_i, \frac{\beta}{2}r_i}}_{i \in \tilde{J}}$ are disjoint. Moreover, for every $x_j \in J$ we can find an $x_i \in \tilde{J}$ such that $r_i = r_j$ and
  \[
    \modulus{x_i - x_j} < \frac{\beta}{2}\rB{r_i + r_j} \le \beta r_i,
  \]
  which means, by the definition of $r_i$, that $\modulus{x_i - x_j} \le r_i$. Hence, if $x \in B\rB{x_j, \rB{\frac{M}{8} - \frac{5}{4}}r_j}$, $j \in J$, then there exists a point $x_i\in\tilde{J}$ such that 
  \[
    \modulus{x - x_i} \le r_i + \rB{\frac{M}{8}-\frac{5}{4}}r_i = \rB{\frac{M}{8}-\frac{1}{4}}r_i.\qedhere
  \]
 \end{proof}
 We are now in position to prove a key step, that is the lemma which gives the optimal foliation.
  \begin{lemma}
    \label{lemma:foliation}
    There exist $\delta_0 \in (0, 1)$ and $C > 0$ such that if $\cB{B(x_i, \rho_i)}_{i=1}^N $ are balls in $\mathbb{R}^2$ satisfying
    \begin{equation}
      \label{eq:small_radii}
      \mathcal{H}^1\rB{\mcA \cap \partial \bigcup_{i=1}^N B(x_i, \rho_i)} \le \delta_0,\quad \mcA\coloneqq B(0, 1) \setminus B\rB{0, \frac{1}{2}}\subset \mathbb{R}^2,
    \end{equation}
    then there exists a Lipschitz function $\phi:\mcA\to[0, 1]$ such that
    \begin{enumerate}[(i)]
      \item $\norm{\nabla \phi}_{L^{\infty}(\mcA)} \le C$;
      \item $\phi \equiv 0$ on $\partial B(0, 1)$ and $\phi \equiv 1$ on $\partial B\rB{0, \frac{1}{2}}$;
      \item If $U\coloneqq \mcA\setminus\uB$,
      \begin{equation}
        \label{eq:fol_energy}
        \int_U \frac{\modulus{\nabla \phi(x)}^2}{\dist^2(x, \partial U)}\de x \le C\rB{1 + N}.
      \end{equation}
    \end{enumerate}
    \end{lemma}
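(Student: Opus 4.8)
The plan is to build $\phi$ as a suitably smoothed version of a radial function, pushed away from the bad balls. Write the annulus $\mcA$ in polar coordinates $(r,\theta)$ with $r\in[\tfrac12,1]$. If there were no balls, the linear profile $\phi_0(r)\coloneqq 2(1-r)$ would already satisfy (i) and (ii), and $\dist(x,\partial U)=\dist(x,\partial\mcA)=\min\{r-\tfrac12,1-r\}$, so the integral in (iii) would be $\int_{\mcA}\frac{C}{\min\{r-1/2,1-r\}^2}\,\mathrm{d}x$, which \emph{diverges}. Hence the heart of the matter is that the presence of the balls, far from being an obstruction, is what \emph{regularizes} the integral: on each angular sector the set $U$ is disconnected by the balls into pieces of bounded diameter, and $\phi$ can be made constant on a neighborhood of $\partial\mcA$ and locally constant near each ball, so the integrand is only large on a controlled region.

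Concretely, first I would use the smallness hypothesis~\eqref{eq:small_radii}: since $\mathcal{H}^1\bigl(\mcA\cap\partial\bigcup_i B(x_i,\rho_i)\bigr)\le\delta_0$, the projection of $\bigcup_i B(x_i,\rho_i)\cap\mcA$ onto the circle $\{r=\tfrac34\}$ has small measure, so there is a ``clean'' circle $\{r=r_*\}$, $r_*$ near $\tfrac34$, avoiding all the balls; set $\phi\equiv\tfrac12$ in a fixed collar around $\{r=r_*\}$. On the two remaining sub-annuli $\{\tfrac12\le r\le r_*-c\}$ and $\{r_*+c\le r\le 1\}$ I would apply a covering/foliation construction: cover the bad set $\bigcup_iB(x_i,\rho_i)$ by the balls $\{B(x_i,(\tfrac{M}{8}-\tfrac14)r_i)\}$ produced by Lemma~\ref{lemma:make_deg2_disjoint} (after the preliminary grouping of Vitali type that lets us apply it), obtaining a \emph{disjoint} enlarged family. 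The point of disjointness is that $U$ minus these enlarged balls, intersected with a fixed angular sector, breaks into at most $O(1+N)$ connected ``cells,'' each either touching $\partial\mcA$ (where $\phi$ is already locally constant by the collar) or sandwiched between two bad balls. On each cell one defines $\phi$ by interpolating between the boundary values along the radial foliation, Lipschitz with the required bound, and crucially \emph{constant on a neighborhood of the part of $\partial U$ coming from the balls}, so that near such a piece of $\partial U$ one has $\nabla\phi=0$ and the integrand in~\eqref{eq:fol_energy} vanishes.

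It then remains to bound the integral. On each cell $V$, after the modifications, $\nabla\phi$ is supported away from $\partial V$ at a distance comparable to $\diam V$ (using $\frac13\rho_k\le\rho(x)\le3\rho_k$-type comparability of nearby radii from Lemma~\ref{lemma:make_deg2_disjoint}), so
\[
  \int_{V}\frac{\modulus{\nabla\phi}^2}{\dist^2(x,\partial U)}\,\mathrm{d}x
  \le \frac{C}{(\diam V)^2}\cdot\frac{1}{c(\diam V)^{-2}}\cdot\mathcal{L}^2(V)\le C,
\]
and summing over the at most $C(1+N)$ cells gives~\eqref{eq:fol_energy}. The main obstacle I anticipate is the bookkeeping in the covering step: one must arrange the enlarged balls to be disjoint (so the topology of $U$ is controlled), yet still covering the original bad set (so $\phi$ genuinely avoids it), \emph{and} keep the radii of neighbouring balls comparable (so the ``distance to $\partial U$'' in the denominator is comparable to $\diam$ of the relevant cell on the whole cell) — this is exactly what forces the dyadic scales $\rho_k=\delta^kR$ and the specific constants $\tfrac{M}{8}-\tfrac54$, $\tfrac{M}{8}-\tfrac14$ in Lemma~\ref{lemma:make_deg2_disjoint}, and choosing $\delta_0$ small enough (and $M$ large enough) to make the clean circle exist and the sectors non-degenerate. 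The Lipschitz bound (i) is then a routine consequence of the comparability of scales, and (ii) holds by construction of the collars at $r=\tfrac12$ and $r=1$.
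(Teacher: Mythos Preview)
Your outline has the right high-level shape (flatten $\phi$ near the balls and near $\partial\mcA$, then control the weighted Dirichlet integral on what remains), but it skips the mechanism that actually makes the bound $C(1+N)$ work, and the ``cells'' picture you sketch does not survive contact with a general configuration.

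The first concrete gap is in your use of Lemma~\ref{lemma:make_deg2_disjoint}. That lemma does \emph{not} take an arbitrary family and produce a disjoint enlarged cover; it requires as input a family $J$ of points each of which already has an \emph{empty annulus} at some dyadic scale (the hypothesis $\bigl(B(x_j,(\tfrac{M}{2}-2)\rho_k)\setminus B(x_j,\rho_k)\bigr)\cap I=\emptyset$). You never explain where such a family comes from. In the paper this family is produced by organising the centres $\{x_i\}$ into a forest via maximal $r_k$-separated subsets $I_k$ and edge maps $E_k:I_k\to I_{k+1}$; the points with empty annuli are exactly (the children of) the degree-$2$ vertices in this forest. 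Without this structure you have no candidate $J$.

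The second, more serious gap is the counting. After making $\phi$ constant near the balls, the relevant integral is essentially $\int_{U'}d(x)^{-2}\,\mathrm{d}x$ over the region $U'=\{d(x)<N^{-1/2}\}$, $d(x)=\dist(x,\{x_i\})$. Decompose into shells $U_k=\{r_{k-1}<d\le r_k\}$, $r_k=M^kr_0$; the contribution of $U_k$ is bounded by a constant times the number of balls of radius $r_k$ needed to cover $U_k$, i.e.\ the number of vertices at level $k$ in the forest. Summed over the $K\sim\log N$ levels this is a priori $O(N\log N)$, not $O(N)$. The paper's point is that one can \emph{prune} the forest at the degree-$2$ vertices, and on the regions corresponding to the pruned subtrees one replaces $\phi_0$ by its average (this is where the output $\tilde J$ of Lemma~\ref{lemma:make_deg2_disjoint} is used), so $\nabla\phi=0$ there. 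What remains is governed by the pruned forest, whose vertex count satisfies $\#V^{\text{pr}}\le 2\cdot(\text{leaves})+(\text{degree-}2\text{ vertices at the pruning cuts})\le CN$. Your ``at most $C(1+N)$ cells, each contributing $O(1)$'' assertion is exactly this bound, but you have not supplied the tree argument that justifies it; for a generic cluster of points there is no single-scale cell decomposition with these properties.

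Finally, a small correction: the balls do not ``regularize'' the integral. The divergence at $\partial\mcA$ is killed by the collars (your $\delta_1,\delta_2$), and the potential divergence at each $\partial B(x_i,\rho_i)$ is killed by composing with a monotone $\psi$ that is constant on $\phi_1\bigl(\bigcup_i B(x_i,2\rho_i+r_0)\bigr)$ (possible because $\sum_i\rho_i\le\delta_0$). Neither of these uses any topological disconnection of $U$ by the balls.
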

    \begin{proof}
     We shall modify in an appropriate way the natural radial foliation. First of all, define
     \[
      \delta_1\coloneqq \inf\cB{r\ge \delta_0\biggr| \partial B\rB{0, \frac{1}{2}+r} \cap \bigcup_{i = 1}^N B\rB{x_i, \rho_i} = \emptyset},
     \]
     and
     \[
      \delta_2\coloneqq \inf\cB{r\ge \delta_0\biggr| \partial B\rB{0, 1 - r} \cap \bigcup_{i = 1}^N B\rB{x_i, \rho_i} = \emptyset}.
     \]
     By a simple geometric argument, one can see that $\delta_0 \le \min\cB{\delta_1, \delta_2} \le \max\cB{\delta_1, \delta_2} \le \frac{3}{2}\delta_0$. Define then the function
     \[
     \phi_0(x)\coloneqq 
      \begin{cases}
        C(\delta_1, \delta_2) \rB{1 - \delta_2 - \modulus{x}}&\text{if }\modulus{x} \in B\rB{0, 1-\delta_2}\setminus B\rB{0, \frac{1}{2}+\delta_1},\\
        0&\text{if }\modulus{x}\ge 1 - \delta_2,\\
        1&\text{if }\modulus{x}\le \frac{1}{2}+ \delta_1,
      \end{cases}
     \]
     where $C\rB{\delta_1, \delta_2}\coloneqq \frac{1}{\frac{1}{2} - \delta_1 - \delta_2}$ (clearly $\phi_0$ is Lipschitz, with Lipschitz constant $C(\delta_1, \delta_2) \le \frac{1}{\frac{1}{2} - 3\delta_0}$ and satisfies (ii)). We will then split the integral $I$ in the left hand side of~\eqref{eq:fol_energy} in three terms: one where, roughly speaking, we see enough space in order to interpolate the function with a constant, another one where the balls accumulate (where we will use a covering argument) and a last one where we are very close to the balls $B\rB{x_i, \rho_i}$ (of which we will get rid of simply by using a ``cutting-out'' function, possible because of~\eqref{eq:small_radii}).\\
     In order to detect the regions where we have to modify the foliation, it is convenient to introduce particular coverings and organize them in a graph. For, define the sets
     \[
      U_k\coloneqq \cB{x \in U\biggr| r_{k-1} < \dist(x, \points) \le r_k},
     \]
where $r_k\coloneqq M^k r_0$ and $r_0\coloneqq \frac{c_0}{n}$, for some constants $c_0 > 0$ and $M>2$ to be chosen later, and $k \in \cB{1, \cdots, K}$, $K\coloneqq \sB{\frac{1}{2}\log(N)}$.
Let $I\coloneqq \cB{x_1, \cdots, x_N}$ and for each $k \in \cB{0, \cdots, K}$ choose a maximal family $I_k$ of points in $I$ whose reciprocal distances are $\ge r_k$. Notice that for each $k$ the balls $\cB{B\rB{x_i, 2r_k}}_{i \in I_k}$ are a cover of $U_k$. We then define the \emph{edge maps}
\[
 E_k: I_k \longrightarrow I_{k+1},
\]
which have the property that, for $x_i \in I_k$,
\[
 \modulus{E_k(x_i) - x_i} = \min\cB{\modulus{x_j - x_i}\biggr| x_j \in I_{k+1}}.
\]
Clearly, $\modulus{E_k(x_i) - x_i} < r_{k+1}$; indeed, either $x_i \in I_{k+1}$ (and in such a case $E_k(x_i) = x_i$) or $x_i \notin I_{k+1}$. But then $\modulus{x_j - x_i} < r_{k+1}$ for some $j\in I_{k+1}$ in order to not contradict the maximality of $I_{k+1}$.\\
We can now define the directed graph (actually, the forest) $G=(V, E)$ whose vertices are given by
\[
 V\coloneqq \cB{(x_i, k)\biggr| x_i \in I_k,\quad k \in \cB{1, \cdots, K}},
\]
and whose edges are
\[
 E\coloneqq \cB{\rB{(x_i, k), (E_k(x_i), k+1)}\biggr| i \in I_k,\quad k \in \cB{1, \cdots, K-1}}.
\]
We write $v\sim w$ if either $(v, w)\in E$ or $(w, v)\in E$.
Notice that $G$ is the disjoint union of (directed) trees whose roots are the points $(x_i, K)$, $x_i \in I_K$. Given a vertex $v = (x_i, k)$, we denote by $T_v$ the subtree rooted at $v$. We also define the  ``pruned'' tree at the vertex $v$ as
\[
 T^{\text{pr}}_v\coloneqq T_v \setminus \bigcup_{\substack{v'\in V\\ T_{v'}\subset T_v, \quad T_{v'}\ne T_v\\\DEG(v')=2}} T_{v'}\setminus\cB{v'}.
\]
We then have the pruned forest
\[
 G^{\text{pr}}\coloneqq \bigcup_{x_i \in I_K} T^{\text{pr}}_{(x_i, K)} =:\rB{V^{\text{pr}}, E^{\text{pr}}}.
\]
The vertices of degree $2$ where we prune the tree are the ones which we will see to correspond to empty annuli. To see this, notice that if a vertex $v = (x_i, k) \in V$, $k \le K-1$, has degree $2$ and $v' = (x_j, k') \in T_v$ with $k'\le k-1$, then
\begin{enumerate}[(a)]
	\item $I_{k-1} \cap B\rB{x_i, \frac{r_k}{2}} = \cB{x_{i_0}(x_i)} \equiv \cB{x_{i_0}}$ is a singleton. Indeed, 
 \[
  2 = \DEG(x_i) = \#E_{k-1}^{-1}(x_i) + 1.
 \]
 But $E_{k-1}^{-1}(x_i) \supset I_{k-1}\cap B\rB{x_i, \frac{r_k}{2}}$ (indeed, if $x_l \in I_{k-1} \cap B\rB{x_i, \frac{r_k}{2}} \cap I_{k-1}$, then for every $x_{i'} \in I_k\setminus \cB{x_i }$ we have $\modulus{x_l - x_{i'}} \ge \modulus{x_i - x_{i'}} - \modulus{x_i - x_j} > r_k - \frac{r_k}{2} = \frac{r_k}{2} > \modulus{x_l - x_i}$, that is $E_{k-1}(x_l) = x_i$) which is always not empty. Otherwise, $x_i \notin I_{k-1}$ and for every $x_j \in I_{k-1}$ we have $\modulus{x_j - x_i} > \frac{r_k}{2} > r_{k-1}$, i.e. $\cB{x_i} \cup I_{k-1}$ would be a family whose points have reciprocal distance is $\ge r_{k-1}$ and which strictly contains $I_{k-1}$, which was assumed to be a maximal family. Hence,
 \[
 1 = \# E_{k-1}^{-1}(x_i) \ge \# I_{k-1}\cap B\rB{x_i, \frac{r_k}{2}} \ge 1,
 \]
 i.e. $\# I_{k-1}\cap B\rB{x_i, \frac{r_k}{2}} = 1$, say $I_{k-1}\cap B\rB{x_i, \frac{r_k}{2}} = \cB{x_{i_0}}$;
 \item $\modulus{x_i - x_{i_0}} < r_{k-1}$. This is clear, because of what we said at the point (a);
 \item $\rB{B\rB{x_i, \frac{r_k}{2} - r_{k-1}}\setminus B\rB{x_{i_0}, r_{k-1}}} \cap I = \emptyset$. This is also a direct consequence of the previous two points. In particular,
	 \[
		 \rB{B\rB{x_{i_0}, \frac{r_k}{2} - 2r_{k-1}}\setminus B\rB{x_{i_0}, r_{k-1}}} \cap I = \emptyset;
	 \]
 \item $\modulus{x_j - x_{i_0}} < \frac{M}{M - 1}r_{k-1} = \frac{r_k}{M - 1}$. Indeed, since $e$ has degree $2$, the only vertex at level $k-1$ is precisely $\rB{x_{i_0}, k - 1}$. Hence, if we set $y_0\coloneqq x_j$ and define inductively $y_{i+1}\coloneqq E_{k'+i}(y_i)$, $i = 0, \cdots, k - k' - 2$, we have (since $E_{k-2}(y_{k - k' - 2}) = x_{i_0}$)
 \[
  \modulus{x_j - x_{i_0}} \le \sum_{i = 0}^{k - k' - 2} \modulus{y_{i+1} - y_i} \le r_{k-1} \sum_{i = 0}^{k - k' -2} M^{-i} \le r_{k-1} \frac{M}{M-1} < \frac{r_k}{M - 1}.
 \]
 
\end{enumerate}

Define then the family of points $J$ as
 \[
	 J\coloneqq \cB{x_{i_0}(x_i)\biggr| x_i \in I \text{ and }\DEG((x_i, k)) = 2\quad \text{ for some }k \in \cB{2, \cdots, K-1}} \subset I.
 \]
 Lemma~\ref{lemma:make_deg2_disjoint} gives a subfamily $\tilde{J}$ such that
 \begin{itemize}
	 \item $\cB{B\rB{x_{i_0}(x_i), \rB{\frac{M}{4} - 1}\ol{r}_i}}_{i \in \tilde{J}}$ are disjoint;
	 \item $\sB{B\rB{x_{i_0}(x_i), \rB{\frac{M}{2} - 2}\ol{r}_i}\setminus B\rB{x_{i_0}(x_i), \ol{r}_i}} \cap I = \emptyset$;
  \item Provided $\frac{M}{8} - \frac{5}{4} > 3$,
  \[
	  \bigcup_{j \in J} B\rB{x_{i_0}(x_j), 3\ol{r}_j} \subset \bigcup_{j \in \tilde{J}} B\rB{x_j, \rB{\frac{M}{8} - \frac{1}{4}}\ol{r}_j},
  \]
   \end{itemize}
   where
   \[
	   \ol{r}_i\coloneqq \max\cB{r_k\biggr|B\rB{x_{i_0}(x_i), \rB{\frac{M}{2} - 2}r_k}\setminus B\rB{x_{i_0}(x_i), r_k} \cap I = \emptyset }.
   \]
  Let now $c_1\coloneqq \frac{M}{8}+\frac{3}{4}$ and $c_2\coloneqq \frac{M}{4}-2$, and consider the Lipschitz function $\eta:\mathbb{R}^+\to [0, 1]$
  \[
   \eta(t)\coloneqq \begin{cases}
              1&\text{if } t\in[0, c_1],\\
              \frac{1}{c_1 - c_2}t - \frac{c_2}{c_1 - c_2}&\text{if }t\in[c_1, c_2],\\
              0&\text{if }t\ge c_2,
            \end{cases}
  \]
  whose Lipschitz constant is $\frac{1}{c_1 - c_2}$. Define
  \[
	  \phi_1(x)\coloneqq \sum_{j\in\tilde{J}} \rB{ \eta\rB{\frac{\modulus{x - x_{i_0}(x_j)}}{\ol{r}_j}}\ol{\phi}_{0, j} + \rB{1 - \eta\rB{\frac{\modulus{x - x_{i_0}(x_j)}}{\ol{r}_j}}}\phi_0(x)},
  \]
  where $\ol{\phi}_{0, j} = \fint_{B\rB{x_{i_0}(x_j), c_2 \ol{r}_j}} \phi_0(y)\de y $. Since the balls defined by the family $\tilde{J}$ are disjoint, we easily infer
  \[
   \begin{split}
	   \norm{\nabla \phi_1}_{\infty} &\le \norm{\nabla \phi_0}_{\infty} + \max_{j \in \tilde{J}} \rB{\modulus{\eta'}_{\infty} \frac{1}{\ol{r}_j} \norm{\phi_0 - \ol{\phi}_{0, j}}_{L^{\infty}\rB{B(x_{i_0}(x_j), c_2 \ol{r}_j)}}} \le\\
     &\le C(\delta_0, M).
   \end{split}
  \]
  Finally, consider the set $\mathcal{I}\coloneqq \phi_1\rB{\bigcup_{i=1}^N B\rB{x_i, 2\rho_i + r_0}}$. Then
  \[
   \mathcal{L}^1\rB{\mathcal{I}} \le \Lip(\phi_1)\sum_{i = 1}^N\rB{4\rho_i + 2r_0} \le \frac{1}{2},
  \]
  provided we take $\delta_0 \le \frac{1}{16\Lip(\phi_1)}$ and $c_0=2\delta_0$. Consider then the Lipschitz function $\psi:[0, 1]\to [0, 1]$ defined by
  \[
   \psi'\coloneqq \frac{\chi_{[0, 1] \setminus \mathcal{I}}}{1 - \modulus{\mathcal{I}}},\qquad \psi(0) = 0,\qquad \psi(1) = 1.
  \]
  Define $\phi\coloneqq \psi\circ\phi_1$. Clearly $\phi$ satisfies (i) and (ii). Let us prove it satisfies (iii). For, notice first that if $x \notin \bigcup_{i=1}^N B\rB{x_i, 2\rho_i}$, then $d(x)\coloneqq \dist\rB{x, \cB{x_i}_{i=1}^N} \le 2\dist\rB{x, \partial U}$. Set $U'\coloneqq \rB{B(0, 1)\setminus B\rB{0, \frac{1}{2}}} \setminus \bigcup_{i=1}^NB\rB{x_i, 2\rho_i + r_0}$ and $U''\coloneqq \bigcup_{i=1}^NB\rB{x_i, 2\rho_i + r_0}\setminus \bigcup_{i=1}^NB\rB{x_i, \rho_i}$. Since $\phi$ is constant on $\bigcup_{i=1}^NB\rB{x_i, 2\rho_i + r_0}$
  \[
   \begin{split}
    \int_U \frac{\modulus{\nabla \phi}^2}{\dist^2(x, \partial U)}\de x&\le C\rB{ \int_{U'} \frac{\modulus{\nabla \phi}^2}{d^2(x)} \de x + \int_{U''} \frac{\modulus{\nabla \phi }^2}{\dist^2(x, \partial U)}\de x} = \\
    &= C \int_{U'} \frac{\modulus{\nabla \phi}^2}{d^2(x)} \de x \le C \int_{U'\cap\cB{d<\frac{1}{\sqrt{N}}}} \frac{\modulus{\nabla \phi}^2}{d^2(x)} \de x + N.
   \end{split}
  \]
  Write
  \[
   U'\cap\cB{d<\frac{1}{\sqrt{N}}} = U_1' \cup U_2',
  \]
  \[
	  U_1'\coloneqq \rB{U'\cap\cB{d<\frac{1}{\sqrt{N}}}}\setminus \bigcup_{j \in \tilde{J}}B\rB{x_{i_0}(x_j), \rB{\frac{M}{4} - 2} \ol{r}_j},
  \]
  \[
  U_2' = U \setminus U_1'.
  \]
  Notice that $U_k \cap U_1' \subset \bigcup_{i\text{ : }(x_i, k)\in V^{\text{pr}}} B\rB{x_i, 2 r_k}$. Since any non-trivial tree $T = (V, E)$ satisfies
  \[
  \# V \le 2 \#\cB{v \in V\biggr| \DEG(v) = 1} + \# \cB{v \in V \biggr| \DEG(v) = 2}
  \]
   and the total number of leaves in the forest is always $\le N$, we have
  \[
   \begin{split}
    \int_{U_1'} \frac{\modulus{\nabla \phi}^2}{d^2(x)}\de x &\le \sum_{k = 0}^K \int_{U_k \cap U_1'} \frac{\modulus{\nabla \phi}^2}{d^2(x)}\de x \le C(\delta_0, M) \sum_{k=0}^K \sum_{i\text{ : }(x_i, k)\in V^{\text{pr}}} \frac{1}{r_{k-1}^2}\int_{B\rB{x_i, r_k}}\de x \le \\
    &\le C\rB{\delta_0, M} \# V^{\text{pr}} \le C\rB{\delta_0, M} N.
   \end{split}
  \]
  On the other hand
  \[
   \begin{split}
	   \int_{U_2'} \frac{\modulus{\nabla \phi}^2}{d^2(x)}\de x &\le \sum_{j \in \tilde{J}} \int_{B\rB{x_{i_0}(x_j), \rB{\frac{M}{4} - 2}\ol{r}_j}} \frac{\modulus{\nabla \phi}^2}{d^2(x)}\de x \le \\
                                                             &\le C\rB{\delta_0, M} \# \tilde{J} \le C\rB{\delta_0, M} N.\qedhere
   \end{split}
  \]
    \end{proof}
 By a scaling argument we get the following
 \begin{corollary}
    \label{lemma:foliationR}
    Let $\mcA\coloneqq B(p, 2R)\setminus B\rB{p, R}\subset\mathbb{R}^2$. There exist $\delta_0\in (0, 1)$ and $C > 0$ such that if $\cB{B(x_i, \rho_i)}_{i=1}^N$ are balls satisfying
    \begin{equation}
      \label{eq:small_radiiR}
      \mathcal{H}^1\rB{\mcA \cap \partial \bigcup_{i=1}^N B(x_i, \rho_i)} \le \delta_0 R,
    \end{equation}
    then there exists a Lipschitz function $\phi:\mcA\to[0, 1]$ such that
    \begin{enumerate}[(i)]
      \item $\norm{\nabla \phi}_{L^{\infty}(\mcA)} \le \frac{C}{2R}$;
      \item $\phi \equiv 0$ on $\partial B(p, 2R)$ and $\phi \equiv 1$ on $\partial B\rB{p, R}$;
      \item If $U\coloneqq \mcA\setminus\uB$,
      \begin{equation}
        \label{eq:fol_energyR}
        \int_U \frac{\modulus{x}^2\modulus{\nabla \phi(x)}^2}{\dist^2(x, \partial U)}\de x \le C\rB{1 + N}.
      \end{equation}
    \end{enumerate}
 \end{corollary}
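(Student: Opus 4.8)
The plan is to deduce this from Lemma~\ref{lemma:foliation} by a homothety of ratio $2R$. After a translation we may assume $p = 0$, so that $\mcA = B(0, 2R)\setminus B(0, R)$ and $\modulus{x}\le 2R$ for every $x\in\mcA$; this is the only place where the weight $\modulus{x}^2$ in~\eqref{eq:fol_energyR} plays a role, and on $\mcA$ it is precisely of order $R^2$. Set $\mcA_0\coloneqq B(0,1)\setminus B\rB{0,\frac{1}{2}}$ and let $\Phi\colon\mcA_0\to\mcA$ be the homothety $\Phi(y)\coloneqq 2R\,y$, with inverse $\Phi^{-1}(x) = x/(2R)$. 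To the balls $\cB{B(x_i,\rho_i)}_{i=1}^N$ I associate the rescaled balls $\cB{B(y_i,\sigma_i)}_{i=1}^N$ with $y_i\coloneqq x_i/(2R)$ and $\sigma_i\coloneqq \rho_i/(2R)$, so that $\Phi^{-1}$ carries $\uB$ onto $\bigcup_{i=1}^N B(y_i,\sigma_i)$.

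First I would transfer the hypothesis. Since $\Phi^{-1}$ is a homothety of ratio $1/(2R)$ it multiplies $\mathcal{H}^1$ by $1/(2R)$, hence
\[
 \mathcal{H}^1\rB{\mcA_0\cap\partial\bigcup_{i=1}^N B(y_i,\sigma_i)} = \frac{1}{2R}\,\mathcal{H}^1\rB{\mcA\cap\partial\uB} \le \frac{\delta_0 R}{2R} = \frac{\delta_0}{2}\le\delta_0,
\]
so~\eqref{eq:small_radii} holds for the rescaled configuration with the same threshold $\delta_0$ provided by Lemma~\ref{lemma:foliation}. That lemma then yields a Lipschitz map $\phi_0\colon\mcA_0\to[0,1]$ with $\norm{\nabla\phi_0}_{L^{\infty}(\mcA_0)}\le C$, with $\phi_0\equiv 0$ on $\partial B(0,1)$ and $\phi_0\equiv 1$ on $\partial B\rB{0,\frac{1}{2}}$, and satisfying~\eqref{eq:fol_energy} with the same number of balls $N$. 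I then set $\phi\coloneqq\phi_0\circ\Phi^{-1}$ on $\mcA$.

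Properties (i) and (ii) are immediate: $\nabla\phi(x) = \frac{1}{2R}(\nabla\phi_0)\rB{x/(2R)}$, whence $\norm{\nabla\phi}_{L^{\infty}(\mcA)}\le\frac{C}{2R}$; and $\Phi$ maps $\partial B(0,1)$ and $\partial B\rB{0,\frac{1}{2}}$ onto $\partial B(p,2R)$ and $\partial B(p,R)$. For (iii), put $U_0\coloneqq\mcA_0\setminus\bigcup_{i=1}^N B(y_i,\sigma_i)$, so that $U = \Phi(U_0)$ and, $\Phi$ being a homothety of ratio $2R$, $\dist(x,\partial U) = 2R\,\dist\rB{x/(2R),\partial U_0}$. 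Changing variables by $y = x/(2R)$ (so that $\de x = (2R)^2\,\de y$ and $\modulus{\nabla\phi(x)}^2 = (2R)^{-2}\modulus{\nabla\phi_0(y)}^2$) gives
\[
 \int_U \frac{\modulus{x}^2\modulus{\nabla\phi(x)}^2}{\dist^2(x,\partial U)}\,\de x = \int_{U_0}\frac{\modulus{2Ry}^2}{(2R)^2}\cdot\frac{\modulus{\nabla\phi_0(y)}^2}{\dist^2(y,\partial U_0)}\,\de y = \int_{U_0}\frac{\modulus{y}^2\modulus{\nabla\phi_0(y)}^2}{\dist^2(y,\partial U_0)}\,\de y,
\]
and since $\modulus{y}\le 1$ on $\mcA_0$ this is bounded by $\int_{U_0}\modulus{\nabla\phi_0(y)}^2/\dist^{2}(y,\partial U_0)\,\de y\le C(1+N)$ by~\eqref{eq:fol_energy}. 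This proves~\eqref{eq:fol_energyR}.

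There is no genuine difficulty: the argument is a pure rescaling of Lemma~\ref{lemma:foliation}. The only points to keep track of are the powers of $R$ in the change of variables --- in particular the fact that the weight $\modulus{x}^2\simeq R^2$ on $\mcA$ is exactly what is needed to absorb the factor $R^{-2}$ coming from $\norm{\nabla\phi}_{\infty}^2\lesssim R^{-2}$, so that the resulting estimate is scale invariant --- and the observation that reducing to $p=0$ by a translation is harmless, since in the applications the weight $\modulus{x}^2$ is always measured from the center of the annulus.
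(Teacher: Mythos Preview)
Your proof is correct and is exactly the scaling argument the paper has in mind (the paper says only ``By a scaling argument we get the following''). Your observation in the last paragraph about the weight $\modulus{x}^2$ is well taken: as written, the corollary's bound cannot hold uniformly in $p$ if $\modulus{x}$ denotes the Euclidean norm, and indeed in the applications (via Lemma~\ref{lemma:repr_burgers}, whose formula is invariant under replacing $x$ by $x-c$ for any constant $c$) the weight is effectively $\modulus{x-p}^2$, so the reduction to $p=0$ is legitimate.
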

 
 \begin{remark}
  \label{rmk:structure_foliation}
  The proof shows that the foliation $\phi$ constructed in Lemma~\ref{lemma:foliation} is constant on (a neighborhood of) each ball (and in a neighborhood of the boundary of the annulus). Moreover, due to the choice of $\delta_1$ and $\delta_2$, the superlevel sets $\cB{\phi \ge 1} = \cB{\phi = 1}$ and $\cB{\phi > 0}$ contain all the balls $B\rB{x_i, \rho_i}$ they intersect.
 \end{remark}
 
 \begin{remark}
  \label{rmk:howto_foliate}
  We discuss here how to use Corollary~\ref{lemma:foliationR}. Consider a ball $B\rB{q, r}$ and balls $\cB{B\rB{q_i, r_i}}_{i=1}^N$ which satisfy the conditions of Corollary~\ref{lemma:foliationR}, and $(A, S) \in \mathcal{P}_{\epsilon}$ satisfying the conditions in Remark~\ref{rmk:competitors}. Notice that since $\phi$ is constant on each ball $B\rB{q_i, r_i}$, we have $\phi\rB{\bigcup_{i=1}^N B\rB{q_i, r_i}} = \cB{\phi_i}_{i = 1}^{L-1}$. Define $\phi_0\coloneqq 0$ and $\phi_L\coloneqq  1$, and re-label, if necessary, the $\phi_i$ in such a way that $0 = \phi_0 \le \phi_1 < \phi_2 < \cdots < \phi_{L-1} \le \phi_L = 1$. Using the fact that each connected component of $\partial \cB{\phi > h}$ is a closed, simple Lipschitz curve, and that clearly $\cB{\phi_i < \phi < \phi_{i+1}}\cap B_{\lambda \epsilon}(S_{\epsilon_m}) = \emptyset$, we have that for each $h \in \rB{\phi_i, \phi_{i+1}}$
  \[
   \begin{split}
   \int_{\partial \cB{\phi > h}} A_m\cdot t \de \mathcal{H}^1 &= \int_{\cB{\phi > h}} \Curl(A_m) \de x = \int_{\cB{h < \phi < \phi_{i+1}}} \Curl(A_m) \de x + \int_{\cB{\phi \ge \phi_{i+1}}} \Curl(A_m) \de x =\\
   &= \int_{\cB{\phi \ge \phi_{i+1}}} \Curl(A_m) \de x.
   \end{split}
  \]
  Thus, setting $b_i\coloneqq \int_{\cB{\phi = \phi_i}} \Curl(A_m) \de x$, we have
  \[
   \int_{\cB{\phi \ge \phi_i}} \Curl(A_m) \de x = \sum_{j = i}^L b_i.
  \]
  Integrate then for $h \in (0, 1)$ in order to get
  \[
   \begin{split}
    \int_0^1 \de h \int_{\partial \cB{\phi > h}} A_m\cdot t \de \mathcal{H}^1 &= \sum_{i = 0}^L \int_{\phi_i}^{\phi_{i+1}} \int_{\cB{\phi > h}} \Curl(A_m) \de x = \\
    &= \sum_{i = 0}^L \rB{\phi_{i+1} - \phi_i} \sum_{j = i}^L b_i = \sum_{i = 1}^L \phi_i b_i.
   \end{split}
  \]
  On the other hand, as a consequence of Lemma~\eqref{lemma:repr_burgers} we have that, for $h \notin \cB{\phi_i}_{i = 1}^L$,
  \[
	  \int_{\partial \cB{\phi > h}}  A_m \cdot t \de \mathcal{H}^1 =  - \int_{\partial \cB{\phi > h}} \rB{\rB{\begin{matrix}x & x^{\perp}\\-x^{\perp}& x\end{matrix}}\cdot \rowvect{\nabla (A_m)^1_1}{\nabla (A_m)^2_2}} t(x)\de \mathcal{H}^1.
  \]
 In particular we see that
 \[
	 b_L = -\sum_{i = 1}^{L-1} \phi_i b_i - \int_0^1 \de h \int_{\partial \cB{\phi > h}} \rB{\rB{\begin{matrix}x & x^{\perp}\\-x^{\perp}& x\end{matrix}}\cdot \rowvect{\nabla (A_m)^1_1}{\nabla (A_m)^2_2}} t(x)\de \mathcal{H}^1.
 \]
 Then, adding $\sum_{i = 1}^{L-1}b_i$ on both sides, we get
 \begin{equation}
   \label{eq:foliation_result}
	 \sum_{i = 1}^L b_i = \sum_{i = 1}^{L-1} \rB{1 - \phi_i} b_i - \int_0^1 \de h \int_{\partial \cB{\phi > h}} \rB{\rB{\begin{matrix}x & x^{\perp}\\-x^{\perp}& x\end{matrix}}\cdot \rowvect{\nabla (A_m)^1_1}{\nabla (A_m)^2_2}} t(x)\de \mathcal{H}^1.
 \end{equation}
 \end{remark}

 In the balls construction we shall need to choose, from a family of balls covering the support of a measure $\mu$, a well disjoint subfamily containing a relevant fraction of the total mass. This is exactly the content of the following Lemma.
\begin{lemma}
   \label{lemma:find_nice_balls}
   Suppose $\bigcup_{i \in I} B(x_i, 30\rho_i) \subset B(0, R)\subset\mathbb{R}^n$ and $\mu$ is a measure on $\mathbb{R}^n$ whose support is contained in $\bigcup_{i \in I} B(x_i, \rho_i)$. Then there exists a subfamily of indices $\tilde{I}\subset I$ and radii $R_i > 3\rho_i$ such that the balls $\cB{B(x_i, 2R_i)}_{i \in \tilde{I}}$ are mutually disjoint, contained in $B(0, R)$ and
   \[\sum_{i \in \tilde{I}} \mu(B(x_i, R_i)) \ge \frac{1}{2\cdot 13^n} \mu\rB{B(0, R)}.\]
  \end{lemma}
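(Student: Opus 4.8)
The plan is to select $\tilde I$ by a greedy procedure and then to control the captured mass by a covering argument, the dimensional constant arising from a packing count; the only genuinely delicate point is that a subfamily whose \emph{dilated} balls $\{B(x_i,2R_i)\}$ are pairwise disjoint cannot cover $\spt\mu$, so the greedy has to be run scale by scale and the scales reassembled.

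First I would reduce to a finite situation. Since $\spt\mu\subset\bigcup_{i\in I}B(x_i,\rho_i)\subset B(0,R)$, the measure $\mu$ is finite and supported in $B(0,R)$; by inner regularity choose a compact $K\subset\spt\mu$ with $\mu(K)\ge(1-\eta)\mu(B(0,R))$, and note that finitely many of the $B(x_i,\rho_i)$ already cover $K$. It suffices to prove the estimate for $\mu\zak K$ and this finite subfamily, with $\frac{1-\eta}{2\cdot 13^n}$ in place of the constant, and then to let $\eta\downarrow 0$; since the constant in the statement is not sharp this is harmless. For the finite family I fix the capturing radii $R_i:=13\rho_i$ (any fixed multiple in $(3,15)$ would do; $13$ is chosen to make the final constant land on the stated value). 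Then $R_i>3\rho_i$ and $B(x_i,2R_i)=B(x_i,26\rho_i)\subset B(x_i,30\rho_i)\subset B(0,R)$, so conditions (b), (c) and $R_i>3\rho_i$ will hold automatically for \emph{any} subfamily $\tilde I$ I produce with $\{B(x_i,2R_i)\}_{i\in\tilde I}$ disjoint.

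Next, the selection. Relabel so that $\rho_1\ge\rho_2\ge\cdots\ge\rho_N$ and build a subfamily greedily: put $i$ in it iff $B(x_i,2R_i)$ is disjoint from $B(x_j,2R_j)$ for every $j$ already chosen. Disjointness of $\{B(x_i,2R_i)\}$ over the chosen family is then automatic. For the mass, every \emph{discarded} index $i$ has a blocker $j=j(i)$ among the chosen ones with $\rho_{j}\ge\rho_i$ and $B(x_i,2R_i)\cap B(x_{j},2R_{j})\ne\emptyset$; since then $R_i\le R_{j}$ one gets $B(x_i,\rho_i)\subset B(x_{j},C\rho_{j})$ for an explicit $C$ (here $C=53$), whence $K\subset\bigcup_{j}B(x_{j},C\rho_{j})$ over the chosen $j$, and so $\mu(K)\le\sum_{j}\mu(B(x_{j},C\rho_{j}))$.

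The main obstacle is that this last inequality involves $\mu(B(x_j,C\rho_j))$, not the required $\mu(B(x_j,R_j))=\mu(B(x_j,13\rho_j))$, and since no doubling property of $\mu$ is assumed the two are not comparable; equivalently, the disjoint family $\{B(x_i,2R_i)\}$ just produced need not cover $\spt\mu$ at all, so a single greedy covering step cannot close the argument. The remedy is to organize the points $x_i$ by the dyadic scale of $\rho_i$: within a single scale the radii are comparable, the passage from ``disjoint $\rho$-balls'' to ``disjoint $2R$-balls'' costs only a fixed dimensional factor, and — crucially — the ball absorbing a discarded index is then \emph{comparable} to the capturing ball $B(x_{j},R_{j})$, so no doubling is invoked. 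One then reassembles the scales via a Besicovitch-type coloring of the intersection graph of the separation balls $\{B(x_i,2R_i)\}$, each color class being a $2R$-disjoint subfamily, and keeps the heaviest color class as $\tilde I$. Propagating the constants through the scale decomposition and the coloring produces the factor $13^{n}$ (essentially a packing count: how many pairwise $2R$-disjoint balls of a given size can meet a fixed ball of radius $R$) together with the extra factor $2$ lost in the reassembly, which is precisely $\frac{1}{2\cdot 13^{n}}$. The delicate part of the argument is exactly the interaction between distinct scales, since a family of balls with widely varying radii is unstable under dilation; it is the dyadic bookkeeping, combined with processing indices in order of decreasing radius, that keeps this under control.
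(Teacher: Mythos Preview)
Your diagnosis of the difficulty is correct: a single greedy pass with $R_i=13\rho_i$ only gives covering by balls $B(x_j,C\rho_j)$ with $C\gg 13$, and without doubling this cannot be converted back to $\mu(B(x_j,R_j))$. But the proposed cure --- organizing by the dyadic scale of $\rho_i$ and then ``reassembling the scales via a Besicovitch-type coloring of the intersection graph of $\{B(x_i,2R_i)\}$'' --- does not close the gap. Within a single $\rho$-scale the argument can be made to work (after first passing to a maximal separated set of centers, which is a step you do not mention and without which even the within-scale intersection graph has unbounded degree). The real problem is \emph{across} scales: with $R_i=13\rho_i$, a single large ball $B(x_i,2R_i)$ can meet arbitrarily many small balls $B(x_j,2R_j)$ from finer scales, so the intersection graph you propose to color has unbounded chromatic number and no dimensional constant can be extracted. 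If instead you try to absorb the small balls into the large one (greedy from large to small), you are back to needing $\mu(B(x_i,cR_i))$ with $c>1$, i.e.\ doubling. Neither the coloring nor the greedy absorbs the cross-scale interaction, and the vague ``factor $2$ lost in the reassembly'' has no counterpart in a $\rho$-scale decomposition.

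The paper's argument avoids this by choosing the scales \emph{geometrically} rather than by $\rho_i$: it slices $B(0,R)$ into dyadic annuli $U_k=B_{R(1-2^{-(k+1)})}\setminus B_{R(1-2^{-k})}$ and sets $R_i:=r_k:=\tfrac{1}{10}2^{-k}R$ for every $x_i\in U_k$. The hypothesis $B(x_i,30\rho_i)\subset B(0,R)$ then forces $3\rho_i<r_k$, so $R_i>3\rho_i$ and $B(x_i,2R_i)\subset B(0,R)$ hold automatically; more importantly, balls sitting in annuli with $|k-k'|\ge 2$ have $B(x_i,r_k)\cap B(x_j,r_{k'})=\emptyset$ purely by the annular geometry. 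Thus cross-scale disjointness is free, and one only needs an even/odd split (the factor $2$) to separate adjacent annuli and a bounded-overlap coloring \emph{within} each annulus (the factor $13^n$), where the centers are already $r_k$-comparable. The point you are missing is precisely this: the radii $R_i$ should be tied to $\dist(x_i,\partial B(0,R))$, not to $\rho_i$.
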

  \begin{proof}
	  Let $U_k\coloneqq B_{R\rB{1-2^{-(k+1)}}}\setminus B_{R\rB{1-2^{-k}}}$. If $x_i \in U_k$, then $3\rho_i < \frac{1}{10}2^{-k}R=:r_k$ (since $R\rB{1-2^{-(k+1)}}+ 30\rho_i \le \modulus{x_i}+30\rho_i < R$) and if $\modulus{k - k'} \ge 2$, $x_i \in U_k$ and $x_j \in U_{k'}$, then $B\rB{x_i, r_k}\cap B\rB{x_j, r_{k'}}=\emptyset$. Choose then a maximal family of indices $I_k \subset U_k \cap I$ such that $\modulus{x_i - x_j} \ge \frac{1}{3}r_k$. Then
    \[
     \bigcup_{i \in I}B\rB{x_i, \rho_i} \subset \bigcup_{k\ge 0} \bigcup_{i \in I_k} B\rB{x_i, 2r_k}.
    \]
    Indeed,
    \[
	    \bigcup_{i \in I}B\rB{x_i, \rho_i} = \bigcup_{k\ge 0}\bigcup_{\substack{i\in I\\ x_i \in U_k}} B\rB{x_i, \rho_i}.
    \]
    But
    \[
	    V_k\coloneqq \bigcup_{\substack{i\in I\\ x_i \in U_k}} B\rB{x_i, \rho_i} \subset \bigcup_{i \in I_k} B\rB{x_i, r_k}.
    \]
    For, if $x \in B\rB{x_i, \rho_i}$ for some $x_i \in U_k$ then either $x_i \in I_k$ (and in such a case there is nothing to show) or $x_i\notin I_k$. But in the latter case $\modulus{x_i - x_{\ell}} < \frac{1}{3}r_k$ in order to not contradict the maximality of $I_k$. Hence $x_i \in B\rB{x_{\ell}, \frac{2}{3}r_k}$.
    Clearly, either
    \begin{equation}
	    \label{eq:alternative}
	    \sum_{k \ge 0} \mu\rB{\bigcup_{i \in I_{2k}} B\rB{x_i, r_{2k}}} \ge \frac{1}{2} \mu(B_R) \qquad \text{ or } \qquad \sum_{k \ge 0} \mu\rB{\bigcup_{i \in I_{2k+1}} B\rB{x_i, r_{2k+1}}} \ge \frac{1}{2} \mu(B_R).
    \end{equation}
    If $i, j \in I_k$ and $\modulus{x_i - x_j} \ge 4r_k$, then $B\rB{x_i, \frac{1}{3}r_k} \subset B\rB{x_j, \frac{13}{3}r_k}$, which in turn implies that the balls $\cB{B\rB{x_i, 2r_k}}_{i \in I_k}$ can intersect at most $13^n - 1$ times. Therefore $I_k$ can be split in $N\coloneqq 13^n$ subsets $I_{k, j}$ such that the balls $B\rB{x_i, 2r_k}$ are disjoint.
    Suppose that~\eqref{eq:alternative} holds for even indices (the other case is completely analogous). For every $k\ge 0$, choose a $j(k) \in \cB{1, \cdots, N}$ in guise that
    \[
	    \mu\rB{\bigcup_{i \in I_{2k, j(2k)}} B\rB{x_i, r_{2k}}} \ge \frac{1}{N} \mu \rB{\bigcup_{i \in I_{2k}} B\rB{x_i, r_{2k}}}.
    \]
    Then, since the families $I_{2k, j}$ are disjoint,
    \[
	    \frac{1}{2\cdot 13^n} \mu\rB{B_R(0)} \le \sum_{k \ge 0} \sum_{i \in I_{2k, j(2k)}} \mu\rB{B\rB{x_i, r_{2k}}}.
    \]
    Define then the family of indices
    \[
	    \tilde{I}\coloneqq \cB{i \in I\biggr| \exists k \ge 0\text{ such that } i \in I_{2k, j(2k)}},
    \]
    and the corresponding radii
    \[
	    R_i\coloneqq \max\cB{r_{2k}\biggr| i \in I_{2k, j(2k)}},
    \]
    which are $>0$ for $i \in \tilde{I}$. Then
    \[
	    \frac{1}{2\cdot 13^n} \mu\rB{B_R(0)} \le \sum_{k \ge 0} \sum_{i \in I_{2k, j(2k)}} \mu\rB{B\rB{x_i, r_{2k}}} = \sum_{i \in \tilde{I}} \mu\rB{B\rB{x_i, R_i}}.\qedhere
    \]
  \end{proof}
Henceforth, we deal with competitors of minimizing sequences, that is for every $\epsilon_j \downarrow 0$ and every pair $(A_j, S_j) \in \mathcal{P}\rB{\epsilon, \alpha, L, \lambda, \tau, \ell}$, we can find a competing sequence $(A_j', S_j') \in \mathcal{P}\rB{\epsilon, \alpha, L, \lambda, \tau, \frac{\ell}{2}}$, which we denote again (with an abuse of notation) by $(A_j, S_j)$, which has the properties discussed in Remark~\ref{rmk:competitors}. In particular, each field $A_j$ of such a competing sequence is harmonic outside the singular set $B_{\lambda\epsilon}(S_j)$, and, up to a subsequence, Corollary~\ref{cor:BV} ensures $A_j \to A\in BV(\Omega)$ strongly in $L^2(\Omega)$. Associated to this sequence, we define the measures
\[
 \mu_{1, j}\coloneqq \frac{1}{\tau \epsilon_j} \dist^2(A_j, \so) \mathcal{L}^2\zak \Omega,\qquad \mu_{2, j}\coloneqq \frac{1}{\lambda \epsilon_j} \mathcal{L}^2 \zak B_{\lambda \epsilon}(S_j),\qquad \mu_j\coloneqq \mu_{1, j} + \mu_{2, j},
\]
which, up to subsequences, converge weakly in the sense of measures to $\mu_1$, $\mu_2$ and $\mu$ respectively. We combine together this property and the foliation Lemma~\ref{lemma:foliationR} through a balls construction in the spirit of the one used for the Ginzburg-Landau functional (cf.~\cite{SS} and the references therein and also~\cite{DLGP} for the application of the discrete balls construction to a functional describing systems of dislocations), in order to obtain a density estimate.
  \begin{theorem}[Pseudolinear $1$-density estimate]
  \label{thm:density}
  Let $(A_j, S_j) \in \rB{\epsilon_j, \alpha, L, \tau, \lambda, \ell}$ be a sequence of admissible pairs such that $\mathcal{F}_{\epsilon_j}(A_j, S_j) \le E_{\text{gb}}(\epsilon_j)$, and consider the competing sequence $(A_j', S_j')$ as in Remark~\ref{rmk:competitors}, which (up to a subsequence) converges strongly in $L^2(\Omega)$ to $A \in BV(\Omega)$. 
  There exist constants $C_0 > 0$, $\delta_1\in \rB{0, 1}$ and $\omega_0 > 0$ 
  such that for every $p \in \Omega$ and every $R > 0$ there exists an $\ol{R}\in [R, 2R]$ such that
  \begin{equation}
    \label{eq:density_estimate}
    \modulus{\Curl(A)(B(p, \ol{R})} \le C \omega\rB{\frac{\mu\rB{B(p, 3R)}}{R}} \mu\rB{B(p, R)},
  \end{equation}
  where $\omega: (0, \infty) \to (0, \infty)$ is the continuous increasing function defined as
  \begin{equation}
   \label{eq:189}
   \omega(t)\coloneqq \begin{cases}
   				\omega_0&\text{if }t\ge \delta_1,\\
   				\rB{-\log(t)}^{-1}&\text{if }t<\delta_1.
   			  \end{cases}
  \end{equation}
 \end{theorem}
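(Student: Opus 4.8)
The plan is to carry out a \emph{balls construction}, in the spirit of the one used for Ginzburg--Landau (cf.~\cite{SS}), built around the optimal foliation of Lemma~\ref{lemma:foliationR} and the harmonic structure provided by Remark~\ref{rmk:competitors}, and only at the end to pass to the weak limit. We work at a fixed large index $j$ with the competitor $(A_j, S_j)$ (still denoted so), which by Remark~\ref{rmk:competitors} satisfies $\norm{A_j}_{\infty}\le C$, $\modulus{\Curl(A_j)}\le C\mu_{2, j}$ and $\Delta A_j = 0$ in $\Omega\setminus B_{\lambda\epsilon_j}(S_j)$; recall also that $\mu_{1, j}\rightharpoonup^*\mu_1$, $\mu_{2, j}\rightharpoonup^*\mu_2$, $\mu_j\rightharpoonup^*\mu$, that $A_j\to A$ in $L^2(\Omega)$, and that $\Curl(A_j)\rightharpoonup^*\Curl(A)$, the total variations being uniformly bounded by Corollary~\ref{cor:BV}. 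First I would fix $p$ and $R$ and choose $\ol R\in[R, 2R]$ generically, so that $\modulus{\Curl(A)}(\partial B(p, \ol R)) = \mu(\partial B(p, \ol R)) = 0$ and, by an elementary averaging over $r\in[R, 2R]$, so that $\mathcal{H}^1\rB{B_{\lambda\epsilon_j}(S_j)\cap\partial B(p, \ol R)}$ is controlled by its average $\sim\lambda\epsilon_j\,\mu_{2, j}(B(p, 2R))/R$. By the weak convergences above it then suffices to establish, for all large $j$, a finite--scale bound of the form $\modulus{\Curl(A_j)(B(p, \ol R))}\le C\,\omega\rB{\mu_j(B(p, 3R))/R}\,\mu_j(B(p, R))$, with $\omega$ as in~\eqref{eq:189}, and to let $j\to\infty$.

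For the finite--$j$ estimate, when $\mu_j(B(p, 3R))/R\ge\delta_1$ the claim follows from the pointwise bound $\modulus{\Curl(A_j)}\le C\mu_{2, j}\le C\mu_j$ (taking $\omega_0$ large and $\ol R$ suitably close to $R$), so assume $\mu_j(B(p, 3R))/R<\delta_1$. Cover $\spt\Curl(A_j)\cap B(p, 2R)\subset B_{\lambda\epsilon_j}(S_j)$ by balls and extract, via Vitali, an essentially disjoint subfamily $\cB{B(x_i, r_i^{(0)})}$ with $r_i^{(0)}\sim\lambda\epsilon_j$; their number satisfies $N_0\le C\mu_{2, j}(B(p, 2R))/(\lambda\epsilon_j)$. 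Grow the radii dyadically, $r_i^{(k)} = 2^k r_i^{(0)}$, merging overlapping balls into their smallest enclosing ball as usual, and stop at scale $\sim\ol R$; write $\mathcal{B}_k = \bigcup_{\ell}B(y_{\ell}^{(k)}, \rho_{\ell}^{(k)})$ for the stage--$k$ configuration, with $N_k$ balls, choosing the $\rho_{\ell}^{(k)}$ generically so that the circles $\partial B(y_{\ell}^{(k)}, \rho_{\ell}^{(k)})$ carry no $\modulus{\Curl(A_j)}$--mass. Attach to each ball its Burgers' vector $b_{\ell}^{(k)}\coloneqq\Curl(A_j)(B(y_{\ell}^{(k)}, \rho_{\ell}^{(k)}))$; these are additive under growth and merging, satisfy $\modulus{b_{\ell}^{(k)}}\le C\mu_{2, j}(B(y_{\ell}^{(k)}, \rho_{\ell}^{(k)}))$, and by the quantization $(\mathcal{A}_{\epsilon}, \mathrm{iii})$ vanish as soon as $\modulus{b_{\ell}^{(k)}}<\tau\epsilon_j$, so we may round them down and keep only the genuinely charged clusters.

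The engine is the annular step. On a growth annulus $\mcA = B(y_{\ell}^{(k)}, 2\rho_{\ell}^{(k)})\setminus B(y_{\ell}^{(k)}, \rho_{\ell}^{(k)})$ in which no merging occurs, hypothesis~\eqref{eq:small_radiiR} holds because the boundaries of the covering balls of the singular set inside $\mcA$ have total length $\lesssim\mu_{2, j}(\mcA)\ll\delta_0\rho_{\ell}^{(k)}$ (here $\mu_j(B(p, 3R))/R<\delta_1$ and $\epsilon_j\to 0$ are used), so Corollary~\ref{lemma:foliationR} furnishes a foliation $\phi$ of $\mcA$. Following Remark~\ref{rmk:howto_foliate}, identity~\eqref{eq:foliation_result} writes the total Burgers' vector transported through $\mcA$ as a $(1-\phi_i)$--weighted sum of the charges of the inner clusters plus the harmonic--transport term $\int_0^1\de h\int_{\partial\cB{\phi>h}}(\cdots)\,t\,\de\mathcal{H}^1$; the matrix there has entries of size $\sim\modulus{x - y_{\ell}^{(k)}}$, so combining the representation~\eqref{eq:rep_burgers_harmonic}, the interior gradient estimate for the harmonic field $A_j$ together with geometric rigidity~\cite{MSZ} in the curl--free region (yielding $\modulus{\nabla A_j(x)}^2\dist^2(x, \partial U)\lesssim\fint_{B(x, \dist(x, \partial U)/2)}\dist^2(A_j, \so)\,\de y$), the co--area formula and Cauchy--Schwarz against the foliation bound~\eqref{eq:fol_energyR}, one controls that term by $C\sqrt{(1 + N)\,\tau\epsilon_j\,\mu_{1, j}(\mcA)}$. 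Iterating over the nested dyadic family of annuli produced by the construction, summing the $(1 + N)$--contributions — whose total over all scales is bounded by the total number of balls, hence by $N_0 + (\text{number of merges})\le C\mu_{2, j}(B(p, 2R))/(\lambda\epsilon_j)$ — and carefully propagating the $(1-\phi_i)$ weights inward, one reaches an inequality of the form $\modulus{\Curl(A_j)(B(p, \ol R))}^2\lesssim\tau\epsilon_j\,\mu_{1, j}(B(p, 3R))\,/\,K$, where $K\sim\modulus{\log\rB{\mu_j(B(p, 3R))/R}}$ is the effective number of dyadic scales over which a charged cluster can grow inside $B(p, 3R)$ before merging or exhausting the room available. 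Combining this with the trivial bound $\modulus{\Curl(A_j)(B(p, \ol R))}\le C\mu_{2, j}(B(p, 2R))$, the elementary inequality $\min\cB{a, b/K}\le\sqrt{ab}/\sqrt K$, and $1/\sqrt K\le C\,\omega\rB{\mu_j(B(p, 3R))/R}$ in the present regime (and reverting to $B(p, R)$ through the choice of $\ol R$) yields the pseudolinear estimate on $B(p, \ol R)$; letting $j\to\infty$ along $\ol R$ gives~\eqref{eq:density_estimate}.

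The main obstacle is precisely the accounting of the last paragraph: making rigorous the ``effective number of scales'' $K\sim\modulus{\log(\mu(B(p, 3R))/R)}$, i.e. organizing the nested foliations and the merging structure so that the $\sqrt{1 + N}$ losses of Lemma~\ref{lemma:foliationR} and the remainder terms in~\eqref{eq:foliation_result} accumulate into a genuine \emph{gain} of one logarithm — matching the Read--Shockley scaling of Theorem~\ref{thm:upper_bound} — rather than the loss that a naive Cauchy--Schwarz over scales would incur. The bookkeeping of the degrees through the merges, aided by the quantization $(\mathcal{A}_{\epsilon}, \mathrm{iii})$ and by the forest structure already set up in the proof of Lemma~\ref{lemma:foliation}, is the technical heart of the argument.
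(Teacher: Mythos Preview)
Your overall architecture is correct and matches the paper: reduce to the case $\mu_j(B(p,3R))/R<\delta_1$, run a balls construction on the connected components of $B_{\lambda\epsilon_j}(S_j)$, feed each growth annulus into the foliation of Corollary~\ref{lemma:foliationR} via Remark~\ref{rmk:howto_foliate}, control the harmonic transport term using the interior gradient estimate $\int_U\modulus{\nabla A}^2\dist^2(x,\partial U)\,\de x\le C\int_U\dist^2(A,\so)\,\de x$, and pass to the limit on a generic radius $\ol R$. Up to this point you and the paper agree.

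The gap is exactly where you flag it: your accounting over scales does not produce the full factor $1/K\sim\omega(t)=1/\modulus{\log t}$, only $1/\sqrt{K}$. Concretely, from Cauchy--Schwarz in each annulus you obtain $\mathcal{J}^{(k)}\le C\sqrt{(1+n_k)}\sqrt{\tau\epsilon_j\,\mu_{1,j}(\mathcal{A}^{(k)})}$, and the combination you propose, $\min\cB{a,\,b/K}\le\sqrt{ab}/\sqrt{K}$, yields at best a $1/\sqrt{\modulus{\log}}$ bound. As the paper notes in its final Remark, this is precisely the rate one gets \emph{without} exploiting the quantization in the iteration, so your scheme as written falls one square root short of~\eqref{eq:density_estimate}.

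What the paper does instead involves two ingredients you are missing. First, before foliating each ball $B_{k+1,i}$ it applies Lemma~\ref{lemma:find_nice_balls} to the $\tau_k$--measure inside, extracting a well--separated subfamily $\tilde I_{k,i}$ that carries a \emph{definite fraction} $\epsilon_0=(2\cdot 13^2)^{-1}$ of $\tau_k(B_{k+1,i})$. Foliating those sub--balls and using~\eqref{eq:foliation_result} yields the \emph{linear} contraction
\[
\tau_{k+1}(B(p,2R))\le (1-\epsilon_0)\,\tau_k(B(p,2R))+\mathcal{J}^{(k)},
\]
rather than a mere additive bound. Second, instead of leaving $\mathcal{J}^{(k)}$ as a square root, one applies Young's inequality to split it as $C_{\epsilon_0}\,\mu_{1,j}(\mathcal{A}^{(k)})+\tfrac{\epsilon_0}{4}(1+n_k)\tau\epsilon_j$; the second piece is then absorbed using the quantization dichotomy (either $\tau_{k+1}<n_k\tau\epsilon_j$, or $\tau_{k+1}\ge\tfrac{1}{2}(1+n_k)\tau\epsilon_j$ by $(\mathcal{A}_{\epsilon},\mathrm{iii})$). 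Iterating the resulting recursion and summing over $k=1,\dots,K$ with $\sum_k\mu_{1,j}(\mathcal{A}^{(k)})\le\mu_{1,j}(B(p,2R))$ and $\sum_k n_k\le N_0\le C\mu_{2,j}(B(p,2R))/\epsilon_j$ gives $K\,\tau_K(B(p,2R))\le C\,\mu_j(B(p,2R))$, which is the full logarithmic gain. Your ``$(1-\phi_i)$--weights propagated inward'' do not by themselves generate this contraction; Lemma~\ref{lemma:find_nice_balls} is the missing engine.
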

 \begin{proof}
  We can assume $\mu(B(p, R)) > 0$, otherwise there is nothing to prove. We relabel the competing sequence $(A_j', S_j')$ as $(A_j, S_j)$. Let $\delta_1 > 0$ to be chosen later. If $\mu_2\rB{B(p, 3R)} \ge \delta_1 R$, then by Remark~\ref{rmk:competitors} we have
  \[
	  \mu(B(p, 3R)) \ge C_2 \delta_1 \modulus{\Curl(A)(B(p, R))}.
  \]
  If $\mu_2(B(p, 3R))< \delta_1 R$,
  \[
	  \limsup_{m \to \infty} \mu_{2, \epsilon_m}\rB{B\rB{p, 2R}} \le \mu_2\rB{B\rB{p, 3R}} \le \delta_1 R.
  \]
  Hence, up to a subsequence, which we denote again by $\epsilon_m$, we have that
  \[
	  \modulus{B_{\lambda\epsilon_m}\rB{S_m} \cap B\rB{p, 2R}} \le \lambda \epsilon_m \delta_1 R,\qquad S_m\coloneqq S_{\epsilon_m}.
  \]
  Write $B_{\lambda\epsilon_m}(S_m) = \bigcup_{j \in J_{H, m}} H_{j, m}$, where $H_{j, m}$ are the (closed) connected components of $B_{\lambda\epsilon_m}(S_m)$, and consider only those ones which intersect $B(p, \frac{R}{2})$, that is 
  \[J_{H, m}'\coloneqq \cB{j\in J_{H, m}\biggr| H_{j, m}\cap B\rB{p, \frac{R}{2}}\ne \emptyset}.\]
  Next, cover these components by disjoint balls $\mathcal{B}_0\coloneqq \cB{B\rB{x_{0, i}, \rho_{0,i}}}_{i \in I_0} \equiv \cB{B_{0, i}}_{i \in I_0}$ such that $\sum_{i \in I_0}\rho_{0, i} \le \sum_{j \in J_{H, m}'}\diam(H_{j, m}) \le  \delta_1 R$.  Now, we let these balls grow. Namely, for any positive measure $\mu$ define
  \[
	  \ol{\rho}_{\mu}(x)\coloneqq \sup\cB{\rho > 0\biggr| \mu\rB{B\rB{x, 2\rho}\setminus B\rB{x, \rho}} > \delta_0\rho}.
  \]
  Set $\ol{\rho}_0\coloneqq \ol{\rho}_{\modulus{\nabla \rchi_{\bigcup\mathcal{B}_0 }}}$ and $\ol{\rho}_{i, 0}\coloneqq \ol{\rho}_0(x_{i, 0})$. We can then use Vitali in order to obtain a cover $\cB{B\rB{x_{i, 0}, 6\ol{\rho}_{i, 0}}}_{i \in I_0'}$ such that the balls $B\rB{x_{i, 0}, 2\ol{\rho}_{0, i}}$ are disjoint. Then
  \[
	  6\delta_0 \sum_{i \in I_0'} \ol{\rho}_{i, 0} = 6\sum_{i \in I_0'} \int_{B\rB{x, 2\ol{\rho}_{i, 0}} \setminus B\rB{x, \ol{\rho}_{i, 0}}}\modulus{\nabla \rchi_{\bigcup \mathcal{B}_0}} \le 6\int \modulus{\nabla \rchi_{\bigcup\mathcal{B}_0}} \le 6\sum \rho_{i, 0}.
  \]
  Then, we expand again these balls by a factor of $30$: that is, we consider $\cB{B\rB{x_{i, 0}, 180\ol{\rho}_{i, 0}}}_{i \in I_0'}$. By a merging argument (we refer to \cite{SS} for more details), we get a new family of balls (whose closures are pairwise disjoint) $\mathcal{B}_1\coloneqq \cB{B\rB{x_{i, 1}, \rho_{i, 1}}}_{i \in I_1}$ such that $\sum_{i \in I_1} \rho_{1, i}\le C_0 \sum_{i \in I_0'} \ol{\rho}_{0, i}$, where $C_0\coloneqq \frac{180}{\delta_0}$, which is in turn smaller than $\frac{1}{2}R$, provided $\delta_1$ was chosen small enough. We can then iterate this procedure in order to construct a family of coverings $\cB{\mathcal{B}_k}_{k \ge 0}$, which we can schematize as follows:
  \[
  \begin{split}
   \cdots & \xrightarrow{\text{Merge}} \mathcal{B}_k = \cB{B\rB{x_{k, i}, \rho_{k, i}}}_{i \in I_k} \xrightarrow{\text{Expand}} \cB{B\rB{x_{k, i}, \ol{\rho}_{k, i}}}_{i \in I_k} \xrightarrow{\text{Vitali}} \cB{B\rB{x_{k, i}, 6\ol{\rho}_{k, i}}}_{i \in I_k'} \xrightarrow{30\times} \\
   &\xrightarrow{30\times} \cB{B\rB{x_{k, i}, 180 \ol{\rho}_{k, i}}}_{i \in I_k'} \xrightarrow{\text{Merge}} \mathcal{B}_{k+1} = \cB{B\rB{x_{k+1, i}, \rho_{k+1, i}}}_{i \in I_{k+1}} \xrightarrow{\text{Expand}} \cdots,
   \end{split}
  \]
  where $\ol{\rho}_{k, i}\coloneqq \ol{\rho}_{\modulus{\nabla \rchi_{\bigcup\mathcal{B}_k}}}\rB{x_{k, i}}$. Notice that $\sum_{i \in I_{k+1}}\rho_{k+1, i} \le C_0\sum_{i \in I_k} \rho_{k, i}$, i.e. $\sum_{i \in I_k}\rho_{k, i} \le C_0^k \sum_{i \in I_0} \rho_{0, i}$. Moreover, by construction, each of the balls $B\rB{x_{k, i}, 180\ol{\rho}_{k, i}}$ is contained in precisely one of the $B\rB{x_{k+1, i}, \rho_{k+1, i}}$. That is, we have the inclusions
  \begin{equation}
	  \label{eq:balls_inclusions}
	   B\rB{x_{k+1, i},\rho_{k+1, i}} \supset \bigcup_{j \in I_{k, i}'} B\rB{x_{k, j}, 180\ol{\rho}_{k, j}} \supset 
	   			       \bigcup_{j \in I_{k, i}'} B\rB{x_{k, j}, 6\ol{\rho}_{k, j}} 
				      \supset \spt\rB{\tau_k \zak B\rB{x_{k+1, i}, \rho_{k+1, i}}},
  \end{equation}
  where $\tau_k$ is the measure defined by
  \[
	  \tau_k\coloneqq \sum_{i \in I_k} a_{k, j} \mathcal{L}^2 \zak B_{k, j},\text{where }a_{k, j}\coloneqq \modulus{\fint_{B_{k, j}}\Curl A_{\epsilon_m}\de x}.
  \]
  By Lemma~\ref{lemma:find_nice_balls}, for each $i\in I_{k+1}$ we find a subfamily $I_{k, i}'' \subset I_{k, i}'$ and radii $R_{k, \nu} > 18\ol{\rho}_{k, \nu}$ such that
  \begin{equation}
   \label{eq:good_balls}
   \begin{cases}
     B\rB{x_{k, \nu}, 2R_{k, \nu}} \subset B\rB{x_{k+1, i}, \rho_{k+1, i}}  &\forall \nu \in I_{k, i}'',\\
     B\rB{x_{k, \nu}, 2R_{k, \nu}} \cap B\rB{x_{k, \nu'}, 2R_{k, \nu'}} = \emptyset &\forall \nu \ne \nu',\\
     \sum_{\nu \in I_{k, i}''} \tau_k \rB{B\rB{x_{k, \nu}, R_{k, \nu}}} \ge \frac{1}{C_2} \tau_k\rB{B\rB{x_{k+1, i}, \rho_{k+1, i}}},& C_2\coloneqq 2\cdot\rB{13}^2.
	  \end{cases}
  \end{equation}
  Let
  \[
   K\coloneqq \max\cB{k \ge 1\biggr| \sum_{i \in I_k} \rho_{k, i} < R} + 1.
  \]
  From the discussion above, we have that for a universal constant $c_0 > 0$ (namely, $c_0 = \log\rB{C_0}^{-1}$)
  \[
   K \ge c_0 \log\rB{\frac{R}{\sum_{i \in I_0} \rho_{0, i}}} \ge c_0 \log\rB{\frac{R}{\mu_m(B(p, R))}},\qquad \mu_m\coloneqq \mu_{1, \epsilon_m} + \mu_{2, \epsilon_m}.
  \]
 Arguing as in Remark~\ref{rmk:howto_foliate}, passing to the absolute values in~\eqref{eq:foliation_result} and using the Fleming-Rishel formula, we obtain
 \begin{equation}
	 \label{eq:foliation_consequence}
	 \modulus{\int_{\cB{\phi > 0}} \Curl(A_m) \de x} \le \sum_{B\rB{q_i, r_i} \subset \cB{0 < \phi < 1}} \modulus{\int_{B\rB{q_i, r_i}} \Curl(A_m) \de x} + \int_{\cB{0 < \phi < 1}} \modulus{x} \modulus{\nabla A_{m, \text{sym}}} \modulus{\nabla \phi} \de x.
 \end{equation}
 We can now apply~\eqref{eq:foliation_consequence} to the balls $\tilde{B}_{k, \nu}\coloneqq B\rB{x_{k, \nu}, R_{k, \nu}}$ obtained in~\eqref{eq:good_balls} and the foliation $\phi_{\nu, i}^{(k)}$ given by Lemma~\ref{lemma:foliationR}, for every $k\ge 0$ and $\nu \in I_{k, i}''$, $i \in I_{k+1}$ as in the discussion before. This gives
 \begin{equation}
	 \label{eq:foliation_plus_balls}
	 \begin{split}
		 \sum_{\nu \in I_{k, i}''}\modulus{\int_{\cB{\phi_{\nu, i}^{(k)} > 0}} \Curl(A_m) \de x} \le &\sum_{\nu \in I_{k, i}''}\sum_{B_{k, j} \subset \cB{0 < \phi_{\nu, i}^{(k)} < 1}} \modulus{\int_{B_{k, j}} \Curl(A_m) \de x} +\\
	&+ \sum_{\nu \in I_{k, i}''}\int_{\cB{0 < \phi_{\nu, i}^{(k)} < 1}} \modulus{x} \modulus{\nabla A_{m, \text{sym}}} \modulus{\nabla \phi_{\nu, i}^{(k)}} \de x.
	\end{split}
 \end{equation}
 to both sides of~\eqref{eq:foliation_plus_balls}. Now, for $i \in I_{k+1}$, define the quantities
 \[
	 \begin{split}
		 \mathcal{I}_i^{(k)}&\coloneqq \bigcup_{\nu \in I_{k, i}''} \cB{\phi^{(k)}_{\nu, i} = 1}\text{ (inner balls) },\\
		 \mathcal{A}_i^{(k)}&\coloneqq \bigcup_{\nu \in I_{k, i}''} \cB{0 < \phi^{(k)}_{\nu, i} < 1}\text{ (annuli) },\\
		 \mathcal{R}_i^{(k)}&\coloneqq \bigcup_{B_{k, j}\subset B_{k+1, i} \setminus \bigcup_{\nu \in I_{k, i}''} \cB{\phi_{\nu, i}^{(k)} > 0}} B_{k, j} \text{ (remaining balls) },\\
		 \mathcal{J}_i^{(k)}&\coloneqq \sum_{\nu \in I_{k, i}''}\int_{\cB{0 < \phi_{\nu, i}^{(k)} < 1}} \modulus{x} \modulus{\nabla A_{\epsilon, \text{sym}}} \modulus{\nabla \phi_{\nu, i}^{(k)}} \de x.
	 \end{split}
 \] 
 Since the balls $\cB{\tilde{B}_{k, \nu}}_{\nu \in I_{k, i}''}$ were given by Lemma~\ref{lemma:find_nice_balls} we have
 \begin{equation}
	 \label{eq:gain}
	 \tau_k\rB{\mathcal{A}_i^{(k)}} =   \tau_k\rB{B_{k+1, i}} - \tau_k\rB{\mathcal{I}_i^{(k)}} - \tau_k\rB{\mathcal{R}_i^{(k)}} \le \rB{1 - \epsilon_0}  \tau_k\rB{B_{k+1, i}} - \tau_k\rB{\mathcal{R}_i^{(k)}},
 \end{equation}
 where $\epsilon_0\coloneqq C_2^{-1} = \rB{2\rB{13}^2}^{-1}<1$. We then add the term
 \[
	 P_{k, i}\coloneqq \modulus{\sum_{B_{k, j}\subset B_{k+1, i} \setminus \bigcup_{\nu \in I_{k, i}''} \cB{\phi_{\nu, i}^{(k)} > 0}} \int_{B_{k, j}} \Curl(A_{\epsilon_m})\de x}
 \]
 to both sides of~\eqref{eq:foliation_plus_balls}, which gives, using~\eqref{eq:gain},
 \begin{equation}
	 \label{eq:est_B_i}
	\begin{split}
	 \modulus{\int_{B_{k+1, i}} \Curl(A_{\epsilon_m})\de x} & \le P_{k, i} + \sum_{\nu \in I_{k, i}''}\modulus{\int_{\cB{\phi_{\nu, i}^{(k)} > 0}} \Curl(A_{\epsilon_m}) \de x} \le \\
	 &\le \sum_{\nu \in I_{k, i}''}\sum_{B_{k, j} \subset \cB{0 < \phi_{\nu, i}^{(k)} < 1}} \modulus{\int_{B_{k, j}} \Curl(A_{\epsilon_m}) \de x} + P_{k, i} + \mathcal{J}_i^{(k)}\le \\
	 &\le \tau_k\rB{\mathcal{A}_i^{(k)}} + P_{k, i} + \mathcal{J}_i^{(k)} \le \rB{1 - \epsilon_0} \tau_k\rB{B_{k+1, i}} + \mathcal{J}_i^{(k)}.
	\end{split}
 \end{equation}
 We then just need to sum up~\eqref{eq:est_B_i} for $i \in I_{k+1}$ in order to get
 \begin{equation}
	 \label{eq:est_tau_k_noiter}
	 \tau_{k+1}\rB{B\rB{p, 2R}} \le \rB{1-\epsilon_0} \tau_k\rB{B\rB{p, 2R}} + \mathcal{J}^{(k)},\qquad \mathcal{J}^{(k)}\coloneqq \sum_{i \in I_{k+1}} \mathcal{J}_i^{(k)},
 \end{equation}
which immediately implies
\footnote{
 Recall that, using a Whitney covering, one can prove the existence of a constant $c = c_n > 0$ such that for every harmonic function $u$ in an open set $U\subset \mathbb{R}^n$
  \begin{equation}
   \label{eq:whitney}
   \int_U \modulus{\nabla u}^2 \dist^2(x, \partial U) \de x \le c_n \int_U \modulus{u}^2 \de x.
  \end{equation}
}
\begin{equation}
  \label{eq:blabla}
  \tau_{k+1}(B(p, 2R)) \le \rB{1 - \epsilon_0}\tau_k(B(p, 2R)) + C_{\epsilon_0} \mu_{1, \epsilon_m}(\mathcal{A}(k)) + \frac{\epsilon_0}{4}(1 + n_k)\tau \epsilon_m,
\end{equation}
where
\[
 \mathcal{A}(k):=\bigcup_{i \in I_{k+1}} \mathcal{A}^{(k)}_i,\text{ and } n_k:=\sum_{i \in I_k} \sum_{\nu \in \tilde{I}_{k, i}} \#\cB{B^{(k)}_{\ell} \subset \cB{0 < \varphi^{(k)}_{\nu, i} < 1}}.
\]
Now, either
\begin{equation}
  \label{eq:alt1tau}
  \tau_{k+1}(B(p, 2R)) < n_k \tau \epsilon_m,
\end{equation}
or
\begin{equation}
  \label{eq:alt2tau}
  \tau_{k+1}(B(p, 2R)) \ge n_k \tau \epsilon_m.
\end{equation}
If~\eqref{eq:alt2tau} holds, we have two possible subcases: either $\tau_{k+1}(B(p, 2R)) = 0$, or $\tau_{k+1}(B(p, 2R)) > 0$. In the latter case, because of the first quantization of the Burgers vector, we have that $\tau_{k+1}(B(p, 2R)) \ge \tau \epsilon_m$. Thus
\begin{equation}
 \label{eq:consequence_alt2tau}
 \tau_{k+1}(B(p, 2R)) \ge \frac{1}{2}(1 + n_k)\epsilon_m.
\end{equation}
In particular, when~\eqref{eq:alt2tau} holds, we have from~\eqref{eq:blabla} that
\begin{equation}
 \label{eq:blabla2}
 \tau_{k+1}(B(p, 2R)) \le \rB{1 - \frac{\epsilon_0}{2}} \tau_k(B(p, 2R)) + C_{\epsilon_m} \mu_{1, \epsilon_m}(\mathcal{A}(k)).
\end{equation}
We then add~\eqref{eq:blabla2} and~\eqref{eq:alt1tau}, in order to find
\begin{equation}
 \label{eq:blabla3}
 \tau_{k+1}(B(p, 2R)) \le \rB{1 - \frac{\epsilon_0}{2}} \tau_k(B(p, 2R)) + C_{\epsilon_0} \mu_{1, \epsilon_m}(\mathcal{A}(k)) + n_k \tau \epsilon_m.
\end{equation}
An iteration gives then immediately
\begin{equation}
 \label{eq:blabla4}
 \begin{split}
   \tau_{k+1}(B(p, 2R)) &\le \rB{1 - \frac{\epsilon_0}{2}}^k \tau_0(B(p, 2R)) + C_{\epsilon_0} \sum_{\ell = 0}^k \rB{1-\frac{\epsilon_0}{2}}^{\ell}\mu_{1, \epsilon_m}(\mathcal{A}(k - \ell)) +\\
   &+ \tau \epsilon_m \sum_{\ell = 0}^k \rB{1-\frac{\epsilon_0}{2}}^{\ell} n_{k - \ell}.
 \end{split}
\end{equation}
Now, we sum up~\eqref{eq:blabla4} for $k\in {1, \cdots, K-1}$. Using the fact that $\tkpr{k}$ decreases as $k$ increases, we find
\begin{equation}
  \label{eq:blabla5}
  K \tau_K(B(p, 2R)) \le C\cB{\mu_{m}(B(p, 2R)) + \tau\epsilon \sum_{k = 1}^{K - 1} n_k}.
\end{equation}
On the other hand, by construction, all the annuli and balls are disjoint, which means
\[
  \sum_{k = 1}^{K - 1} n_k \le N_0,
\]
where $N_0$ is the number of connected components of $B_{\lambda \epsilon_m}(S_{\epsilon_m})$ contained in $B(p, 3R)$, which can be in turn be estimated by
 \[
  N_0 \le C \frac{\modulus{B_{\lambda \epsilon_m}(S_{\epsilon_m}) \cap B\rB{p, 2R}}}{\rB{\lambda \epsilon_m}^2} = C\frac{\mu_{2, \epsilon_m}\rB{B(p, 2R)}}{\epsilon_m}.
 \]
In particular, we obtain
\begin{equation}
 \label{eq:gain2}
 \tau_K(B(p, 2R)) \modulus{\log\rB{\frac{\mu_m(B(p, 2R))}{R}}} \le C_{\tau} \mu_m\rB{B(p, 2R)}.
\end{equation}
 Now, since $\modulus{B_{\lambda \epsilon_m}(S_m) \cap B\rB{p, 3R}} \le \lambda \epsilon_m \delta_1 R$, we can find an $R(m) \in [R, 2R]$ such that 
 \[
  \tau_K \rB{B(p, 2R)} \ge \modulus{\int_{B\rB{p, R(m)}} \Curl A_m \de x}.
 \]
 Up to a subsequence, we can always assume that $R(m)\to \ol{R} \in [R, 2R]$. Moreover, since $\cB{\Curl A_{\epsilon_m}}$ quasi-converges to $(\Curl A, \xi)$, with $\xi(\Omega) < \infty$, we can also assume $\xi\rB{\partial B(p, \ol{R})} = 0$ (up to increasing the constant $C_{\lambda}$ in the right hand side of~\eqref{eq:gain2} by a factor of 2). In particular, we have
 \[
  \limsup_{m \to \infty } \modulus{\int_{B(p, R(m))} \Curl A_{m} \de x} \ge \modulus{\int_{B(p, \ol{R})} \de \Curl A}.
 \]
 Taking the limit superior as $m \to \infty$ in~\eqref{eq:gain2}, we find
 \begin{equation}
  \label{eq:gain3}
  \modulus{\int_{B(p, \ol{R})} \de \Curl(A)} \modulus{\log\rB{\frac{\mu(B\rB{p, 3R})}{R}}} \le C_0 \mu\rB{B(p, 3R)}.
 \end{equation}
 In particular, we can choose $\omega$ as in~\eqref{eq:189} and obtain~\eqref{eq:density_estimate}.
\end{proof}

Theorem~\ref{thm:density} is giving an estimate of the norm of $\Curl(A)$ on balls, while in order to obtain an estimate for the derivative $DA$ we would need (by virtue of Proposition~\ref{prop:BV}) an upper bound on the total variation of $\Curl(A)$. The key observation in order to prove such an estimate is that, by the definition of supremum limit, we are allowed to take a covering with balls of the same radii.
	\begin{lemma}
	 \label{lemma:vmeas}
	 Let $T$ be a vector valued Radon measure and $\mu$ be a positive finite Radon measure, both defined on $\mathbb{R}^n$. Suppose that there exists a constant $C_0 > 1$ such that for every $x \in \Omega$ and every $R > 0$
	 \begin{equation}
	 	\label{eq:hp_density}
	 	\modulus{T\rB{B(x, R)}} \le \omega\rB{\frac{\mu\rB{B(x, C_0 R)}}{R^{\beta}}} \mu\rB{B(x, C_0 R)},
	 \end{equation}
	\end{lemma}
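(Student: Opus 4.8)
The plan is to reduce the asserted global bound to a pointwise statement about the Radon--Nikodym density of $T$ with respect to $\mu$, and to prove that by a Besicovitch covering/differentiation argument in which — as the text anticipates — one is free to work, at each scale, with balls of a single common radius. Concretely, I would prove that $|T|\ll\mu$ and that for $\mu$-a.e.\ $x$
\[
 \frac{d|T|}{d\mu}(x)\;\le\; C\,\omega\!\rB{\limsup_{r\downarrow 0}\frac{\mu\rB{B(x,C_0 r)}}{r^{\beta}}},
\]
with $C=C(n,\omega_0,C_0)$ and $\omega$ the increasing function of~\eqref{eq:189} (it is enough to know $\omega$ is nondecreasing, bounded by $\omega_0$, and $\omega(0^+)=0$); integrating this against $\mu$ yields the total-variation estimate on $|T|(\Omega)$ (and its localized version on Borel subsets of $\Omega$). \emph{Step 1 (absolute continuity).} Given a Borel $E\subset\Omega$ with $\mu(E)=0$, choose an open $U\supset E$ with $\mu(U)$ small; for $x\in E$ use the balls $B(x,r)$ with $r$ so small that $B(x,C_0 r)\subset U$, and apply the Besicovitch covering theorem to extract $\xi(n)$ subfamilies, each consisting of pairwise disjoint balls and still covering $E$. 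Along one such subfamily the dilated balls $B(x_i,C_0 r_i)$ overlap at most $c(n,C_0)$ times (the $B(x_i,r_i)$ are disjoint and lie in $U$), so the hypothesis together with $\omega\le\omega_0$ gives $|T|(E)\le C(n,C_0,\omega_0)\,\mu(U)$; letting $\mu(U)\downarrow 0$ forces $|T|(E)=0$. Hence $\varrho:=\tfrac{d|T|}{d\mu}$ is defined $\mu$-a.e., and by Besicovitch differentiation of vector measures, for $\mu$-a.e.\ $x$,
\[
 \varrho(x)=\modulus{\lim_{r\downarrow 0}\frac{T(B(x,r))}{\mu(B(x,r))}}=\lim_{r\downarrow 0}\frac{\modulus{T(B(x,r))}}{\mu(B(x,r))}.
\]

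\textbf{Step 2 (pointwise density bound).} Fix such an $x$ and put $\theta:=\limsup_{r\downarrow0}\mu(B(x,r))/r^{\beta}$. Take a sequence $s_k\downarrow 0$ realizing this $\limsup$ and set $r_k:=s_k/C_0$, so that $\mu(B(x,C_0 r_k))/r_k^{\beta}=C_0^{\beta}\,\mu(B(x,s_k))/s_k^{\beta}\to C_0^{\beta}\theta$. Since the limit defining $\varrho(x)$ exists it may be computed along $\{r_k\}$, and the hypothesis at the balls $B(x,r_k)$ gives
\[
 \varrho(x)=\lim_{k}\frac{\modulus{T(B(x,r_k))}}{\mu(B(x,r_k))}\le \limsup_{k}\,\omega\!\rB{\frac{\mu(B(x,C_0 r_k))}{r_k^{\beta}}}\cdot\frac{\mu(B(x,C_0 r_k))}{\mu(B(x,r_k))}\le C\,\omega(\theta),
\]
where I use the continuity/monotonicity of $\omega$ for the first factor and, for the second, the elementary fact that for $\mu$-a.e.\ $x$ there is a scale-sequence along which $\mu(B(x,C_0 r))\le D(n,C_0)\,\mu(B(x,r))$ (the set where this fails for \emph{all} small $r$ is $\mu$-null); intersecting this good sequence with $\{r_k\}$ one concludes. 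The role of the ``balls of the same radius'' remark is precisely here: Theorem~\ref{thm:density} only furnishes \emph{one} admissible radius $\bar R\in[R,2R]$ per centre, so to assemble a covering estimate one first fixes a single small scale and then exploits the latitude in the radius to land on the good value simultaneously at every centre (this is exactly the covering-at-fixed-scale step above); a concave, subadditive modification of $\omega$ (allowed up to enlarging $C$) legitimizes any intermediate averaging over balls, should it be convenient to phrase Step~2 through mean values rather than through differentiation points.

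\textbf{Main obstacle.} The delicate points are two, and both are of a technical (measure-theoretic) nature rather than conceptual. First, the hypothesis controls the \emph{norm} $\modulus{T(B(x,r))}$ of the vector $T(B(x,r))\in\mathbb{R}^m$, whereas the quantity one ultimately wants to bound is the \emph{total variation} $|T|(B(x,r))$ (and hence $|T|(\Omega)$); this gap is bridged only because $T\ll\mu$ (Step~1) and Besicovitch differentiation identifies $\tfrac{d|T|}{d\mu}$ with $\modulus{\tfrac{dT}{d\mu}}$, i.e.\ with the limit of $\modulus{T(B(x,r))}/\mu(B(x,r))$ — so one really must pass through the density. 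Second, the dilation factor $C_0$ in the hypothesis forces a comparison of $\mu(B(x,C_0 r))$ with $\mu(B(x,r))$ as $r\downarrow 0$; $\mu$ need not be doubling, so one cannot control this ratio for \emph{every} small $r$, only along a $\mu$-a.e.\ defined subsequence of scales, and one must check that the various exceptional sets (where differentiation fails, or where the ratio is unbounded at all scales) are $\mu$-negligible and therefore invisible after integration against $\mu$. Handling these two points carefully — and making sure the final integrated inequality is stated at the right density $\limsup_r\mu(B(x,C_0 r))/r^{\beta}$ rather than at $\limsup_r\mu(B(x,r))/r^{\beta}$, the two differing only by the harmless factor $C_0^{\beta}$ — is the bulk of the work; everything else is routine covering.
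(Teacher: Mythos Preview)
Your proposal contains two genuine gaps, and both are exactly where the paper's ``same radius'' remark is doing real work.

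\textbf{Step 1 is wrong as written.} The claim that if the balls $B(x_i,r_i)$ in one Besicovitch subfamily are pairwise disjoint then the dilates $B(x_i,C_0 r_i)$ have overlap bounded by some $c(n,C_0)$ is \emph{false} when the radii vary. In one dimension take $x_k=3\cdot 4^{-k}$, $r_k=4^{-k}$: the intervals $(2\cdot 4^{-k},4^{1-k})$ are disjoint, but their $4$-fold dilates $(-4^{-k},7\cdot 4^{-k})$ all contain the origin. Bounded overlap of the dilated family is automatic only when all $r_i$ are equal, and that is precisely why the paper insists on covering $E\cap G_{s,R,\delta}$ by balls of one common radius $\rho$.

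\textbf{Step 2 has an unjustified doubling assumption.} You correctly observe that $\Theta^*(x)=\theta$ forces $\mu(B(x,C_0 r))/r^\beta\le C_0^\beta(\theta+\epsilon)$ for \emph{all} small $r$, so the factor $\omega(\cdots)$ is under control along any sequence. The problem is the second factor $\mu(B(x,C_0 r))/\mu(B(x,r))$. Your ``elementary fact'' that for $\mu$-a.e.\ $x$ there is a sequence $r_j\downarrow 0$ with $\mu(B(x,C_0 r_j))\le D(n,C_0)\,\mu(B(x,r_j))$ with a \emph{universal} $D$ is neither elementary nor, in this form, true for general Radon measures (and ``intersecting'' two null sequences of radii makes no sense). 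At best one could hope for an $x$-dependent bound $D(x)=\liminf_r \mu(B(x,C_0 r))/\mu(B(x,r))$, but then your density inequality reads $\varrho(x)\le D(x)\,\omega(C_0^\beta\theta)$ and you have no control on $D(x)$.

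\textbf{What the paper does instead.} The paper never attempts a pointwise bound on $d|T|/d\mu$, and never compares $\mu(B(x,C_0 r))$ with $\mu(B(x,r))$. It fixes a level $s$, covers $E\cap\{\Theta^*\le s\}$ (more precisely the approximating set $G_{s,R,\delta}$) by $K(n)$ disjoint families of balls of \emph{one common} radius $\rho$; then the dilates $B(x_i^{(k)},C_0\rho)$ do have bounded overlap, by volume. To pass from $|T(B_i)|$ to $|T|(B_i)$ it uses the polar decomposition $T=f\,|T|$, approximates $f$ by a continuous $\phi$, and writes
\[
 |T|(B_i)=\int_{B_i}\langle f,\de T\rangle
 \le \int_{B_i}|f-\phi|\,\de|T|+\int_{B_i}|\phi-\phi(x_i)|\,\de|T|+|\phi(x_i)|\,|T(B_i)|,
\]
summing over $i,k$; the first two error terms vanish as $\rho\to 0$ and $\phi\to f$, and only the last one invokes the hypothesis~\eqref{eq:hp_density}. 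This gives directly $|T|(E\cap G_s)\le C_n\,\omega(s)\,\mu(E\cap G_s)$, from which (a), (b), (c) and the $BV$ consequences follow by slicing in $s$ and a Lusin argument. No doubling of $\mu$, pointwise or otherwise, is ever used.
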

	where $\beta \in \cB{1, \cdots, n-1}$ and $\omega: (0, \infty) \to (0, \infty)$ is an increasing function such that $\omega(\delta)\to 0$ as $\delta \to 0$. Then
	\begin{enumerate}[(a)]
	 \item $\modulus{T}\rB{\Omega \setminus S} = 0$, where 
	 \[
	 	S\coloneqq \cB{x \in \Omega \biggr| \Theta^*(x) > 0},\qquad \Theta^*(x)\coloneqq \Theta^*_{\beta}(\mu, x)\coloneqq \limsup_{R\downarrow 0} \frac{\mu\rB{B(x, R)}}{R^{\beta}};
	 \]
	 \item $\mathcal{H}^{\beta} \zak S$ is $\sigma$-finite;
	 \item $\modulus{T} \le C_n \rB{\omega\circ \Theta^*} \mu\zak S$, where $C_n > 0$ is a constant depending only on the dimension.
	\end{enumerate}
	In particular, if $T = DA$ for some $A \in BV(\Omega)^n$, then $DA = D^J A = \modulus{A^+ - A^-}\otimes \nu_A \mathcal{H}^{n-1}\zak S_A$ and
	 \begin{equation}
	  \label{eq:final}
	  g^{-1}\rB{\modulus{A^+ - A^-}} \mathcal{H}^{n-1}\zak S_A \le C\mu,
	 \end{equation}
	 where $g(t)\coloneqq t\omega(t)$.
	\begin{proof}
	 From the definition of limit superior,
	 \[
	  G_s\coloneqq \cB{x \in \Omega\biggr| \Theta^*(x) \le s} \subset \bigcap_{\delta > 0} \bigcup_{R > 0} G_{s, R, \delta},
	 \]
	 where
	 \[
	  G_{s, R, \delta}\coloneqq \cB{x \in \Omega\biggr| \frac{\mu\rB{B(x, C_0 \rho)}}{\rho^{\beta}}< s + \delta \quad \forall \rho < R}.
	 \]
	 For any $\mu$-measurable set $E$, consider the $r$-tubular neighborhood $U_r = B_r\rB{E\cap G_{s, R, \delta}}$. Fix $\rho < \min\cB{R, \frac{r}{C_0}}$. We can find $K = K(n)$ (depending only on the dimension $n$) disjoint families of balls balls $\mathcal{B}_k\coloneqq \cB{B_i^{(k)}}_{i \in I_k}\equiv\cB{B\rB{x_i^{(k)}, \rho}}_{i \in I_k}$, $k = 1, \cdots, K$ whose union covers $E_{s, R, \delta}\coloneqq E\cap G_{s, R, \delta}$, that is
	 \[
	  E_{s, R, \delta} \subset \bigcup_{k = 1}^K \bigcup_{i \in I_k} B_i^{(k)},\qquad B_i^{(k)} \cap B_j^{(k)} = \emptyset \forall i\ne j.
	 \]
	 Moreover, the choice of $\rho$ ensures
	 \[
	  \frac{\mu\rB{B(x_i^{(k)}, C_0 \rho)}}{\rho^{\beta}} < s + \delta,\quad \forall i\in I_k,\quad k\in \cB{1, \cdots K}
	 \]
	 and
	 \[
	  C_0 B_i^{(k)} \subset U_r.
	 \]
	 Let $f\coloneqq \frac{\de T}{\de \modulus{T}}$, $\modulus{f} = 1$ $\modulus{T}$-a.e., and $\phi \in \mathcal{C}_c(\Omega)$. Then, using~\eqref{eq:hp_density},
	 \[
	 	  \begin{split}
		  \modulus{T}\rB{E_{s, R, \delta}} &\le \sum_{k = 1}^K \sum_{i \in I_k} \modulus{\int_{B_i^{(k)}} \scal{f}{\de T}} \le \\
	 &\le \sum_{k = 1}^K \sum_{i \in I_k} \left\{\int_{B_i^{(k)}} \modulus{f - \phi} \de \modulus{T} + \int_{B_i^{(k)}} \modulus{\phi(x) - \phi\rB{x_i^{(k)}}} \de \modulus{T}(x) \right. +\\
	 &\qquad\qquad\quad+ \left. \modulus{\scal{\phi\rB{x_i^{(k)}}}{T\rB{B_i^{(k)}}}}\right\} \le \\
	 &\le K\cB{\int_{U_r} \modulus{f - \phi} \de \modulus{T} + \rB{\sup_{\modulus{x - y} < \rho} \modulus{\phi(x) - \phi(y)}} \modulus{T}\rB{U_r}} +\\
	 &\quad+ C_n \norm{\phi}_{\infty}\omega\rB{\delta + s}\mu(U_r).
	  \end{split}
	 \]
	 Define
	 \[
	  (I)\coloneqq \int_{U_r} \modulus{f - \phi} \de \modulus{T},\qquad (II)\coloneqq \rB{\sup_{\modulus{x - y} < \rho} \modulus{\phi(x) - \phi(y)}} \modulus{T}\rB{U_r}.
	 \]
	 As $\rho \to 0$, we see that $(II) \to 0$, while if we consider a sequence of functions $\phi$ converging to $f$, also $(I)\to 0$. As $G_{s, R, \delta}$ is increasing in $R$, taking $R\to \infty$ we can replace $E_{s, R, \delta}$ on the left hand side with the union $E_{s, \delta}\coloneqq \bigcup_{R > 0} E_{s, R, \delta}$. Since this holds for every $\delta > 0$, we can let $\delta \to 0$ and recover $E_s = E \cap \cB{\Theta^* > s}$ on the left hand side. Finally, taking $r \to 0$ and using the fact that $\mu$ is a Radon measure, we find that for every $\mu$-measurable set $E$
	 \begin{equation}
	  \label{eq:666}
	  \modulus{T}(E \cap G_{s}) \le \omega(s) \mu\rB{E \cap G_{s}}.
	 \end{equation}
	 Since $\omega(s)\to 0$ as $s\to 0$, we have that
	 \[
	  \modulus{T}(\Omega\setminus S) = 0,
	  	 \]
	 i.e. (a). Set $S_{\delta}\coloneqq \cB{x \biggr| \Theta^*(x) > \delta}$. Then clearly $\mathcal{H}^{\beta}(S_{\delta}) \le C_n\frac{1}{\delta}\mu(\mathbb{R}^n) < \infty$. In particular, $\mathcal{H}^{\beta} \zak S$ is $\sigma$-finite, thus (b) is proven.\\
	 Now, for every $\zeta > 0$, we can find a compact set $H = H(\zeta)$ such that $\Theta^*|_H$ is continuous and $\modulus{T}(\Omega\setminus H)<\zeta$. For $\eta > 0$, let $\Phi^*(\eta) > 0$ be such that
	 \[
	  x, y \in H, \modulus{x - y} \le \Phi^*(\eta) \Longrightarrow \modulus{\Theta^*(x) - \Theta^*(y)} \le \eta.
	 \]
	 Consider a sequence $\cB{a_i}_{i\ge 1}$ such that $(0, \infty) = \bigcup_{i \ge 1}(a_i, a_{i+1}]$ and $\modulus{a_{i+1} - a_i} < \Phi^*(\eta)$. For any Borel set $F$, let
	 \[
	  F_i\coloneqq F\cap \cB{x\biggr| \Theta^*(x) \in (a_i, a_{i+1}]}.
	 \]
	 Let $\mu_0\coloneqq \mu\zak S$. Using~\eqref{eq:666} with $E = F \cap \cB{x\biggr| \Theta^*(x) > a_i}$,
	 \[
	  \begin{split}
	   \modulus{T}(F) &\le \zeta + \modulus{T}(F \cap H) \le \zeta + \sum_{i\ge 1}\modulus{T}(F_i\cap H) \le\\
	   				  &\le \zeta + \sum_{i\ge 1}\modulus{T}\rB{F \cap H \cap \cB{x\biggr| \Theta^*(x) > a_i} \cap G_{a_{i+1}}} \le \\
	   				  &\le \zeta + C_n \sum_{i \ge 1}\omega\rB{a_{i + 1}} \mu_0(F_i\cap H) = \zeta + C_n \sum_{i \ge 1}\int_{F_i\cap H} \omega\rB{a_{i + 1}} \de \mu_0 \le \\
	   				  &\le \zeta + C_n \sum_{i \ge 1} \int_{F_i\cap H} \omega\rB{\Theta^*(x)} \de \mu_0 + C_n \sum_{i \ge 1} \int_{F_i\cap H} \modulus{\omega\rB{\Theta^*(x)} - \omega\rB{a_{i+1}}} \de \mu_0 \le \\
	   				  &\le \zeta + C_n \sum_{i \ge 1} \int_{F_i\cap H} \omega\rB{\Theta^*(x)} \de \mu_0 + C_n \eta \mu_0(F\cap H) =\\
	   				  &= \zeta + C_n \int_{F\cap H} \omega\rB{\Theta^*(x)} \de \mu_0 + C_n \eta \mu_0(F\cap H ).
	  \end{split}
	 \]
	 By the arbitrariness of $\zeta$, $\eta$ and the Borel set $F$, we infer that
	 \[
	  \modulus{T} \le C_n \rB{\omega\circ \Theta^*} \mu\zak S,
	 \]
	 i.e. (c). Now, suppose $T = DA$ for some $A \in BV(\Omega)^m$. Then from (c), we see that $DA = \modulus{A^+(x) - A^-(x)} \otimes \nu_A \mathcal{H}^{\beta}\zak (S\cap S_A)$, where $\beta\coloneqq n-1$. Our first claim is that
	 \[
	  \modulus{A^+(x) - A^-(x)} \le C\Theta^*(x)\omega\rB{\Theta^*(x)}\qquad \text{ for }\mathcal{H}^{\beta}-{\text{a.e. }} x\in S\cap S_A.
	 \]
	 Let $E\subset \mathbb{R}^n$ be a Borel set. For any $\zeta > 0$, we can find $H = H(\zeta)$ compact such that $\Theta^*|_H$ is continuous and $\mu\rB{\mathbb{R}^n \setminus H} \le \zeta$. Since $S$ is rectifiable, we can assume without loss of generality that the $\beta$-density of each $x \in S \cap H \cap E$ is $1$, namely
  \[
   \lim_{\rho \downarrow 0} \frac{\mathcal{H}^{\beta}\rB{S\cap H \cap E\cap B\rB{x, \rho}}}{c_{\beta}\rho^{\beta}} = 1,
  \]
  where $c_{\beta} > 0$ is a constant dependent only on $\beta > 0$. From this and the definition of limit superior, for every $\eta > 0$, $k \in\mathbb{N}$ and $x \in E\cap S_{\xi} \cap H=:G_{\xi}$, $\xi>0$, we can find a radius $\rho_k(x)\le k^{-1}$ such that, for a constant $C = C(\beta) > 0$,
  \begin{equation}
   \label{eq:rho}
   \begin{cases}
   	C\rB{1-\eta}\rho_k(x)^{\beta} \le \mathcal{H}^{\beta}\rB{G_{\xi}\cap \ol{B\rB{x, \rho_k(x)}}} \le C(1+\eta)\rho_k(x)^{\beta},\\
   	\Theta^*(x) \ge \frac{\mu\rB{B(x, \rho_k(x))}}{\rho_k(x)^{\beta}} - \eta.
   \end{cases}
  \end{equation}
  We then consider, for $N>1$, the fine cover of $G_{\xi}$
  \[
   \mathcal{F}_N\coloneqq \cB{\ol{B\rB{x, \rho_k(x)}}\biggr| x \in G_{\xi},\quad k\ge N}.
  \]
  from which, by Vitali-Besicovitch Theorem, we can extract a disjoint family $\mathcal{F}'_N = \cB{B(x_i, \rho_i)}_{i \ge 1}$ such that
  \[
   \mu\rB{G_{\xi}\setminus \bigcup\mathcal{F}'_N} = 0.
  \]
  Then
  \[
   \begin{split}
    \int_{E \cap S_{\xi}} \omega\rB{\Theta^*(x)} \de \mu(x) &\le C\zeta + \int_G  \omega\rB{\Theta^*(x)} \de \mu(x) = \\
    			&= C\zeta + \sum_i \int_{\ol{B\rB{x_i, \rho_i}}\cap G_{\xi}} \omega\rB{\Theta^*(x)} \de \mu (x) \le\\
    			&\le C\zeta + \sum_i \omega\rB{\Theta^*(x_i)} \mu\rB{\ol{B\rB{x_i, \rho_i}} \cap G} +\\
    			&\quad + \rB{\sup_{\substack{x, y \in G\\ \modulus{x - y} \le N^{-1}}}\modulus{\omega\rB{\Theta^*(x)} - \omega\rB{\Theta^*(y)}}} \mu(G)\le\\
    			&\le C\zeta + \sum_i \omega\rB{\Theta^*(x_i)} \rho_i^{\beta}\rB{\eta + \Theta^*(x_i)} + o_N(1) \le\\
    			&\le C\zeta + \sum_i \Theta^*(x_i) \omega\rB{\Theta^*(x_i)} \rho_i^{\beta} + \eta\sum_i \omega\rB{\Theta^*(x_i)} \rho_i^{\beta} + o_N(1).
   \end{split}
  \]

  Using~\eqref{eq:rho}, we find (setting $g(s)\coloneqq s\omega(s)$ and $\tilde{g}\coloneqq g\circ \Theta^*$),
  \[
   \begin{split}
   \sum_i \tg(x_i) \rho_i^{\beta} &\le \frac{C}{1-\eta} \sum_i \tg(x_i) \mathcal{H}^{\beta}(G_{\xi}\cap \ol{B\rB{x_i, \rho_i}}) \le \\
   &\le \int_{S\cap E} \tg(y) \de \mathcal{H}^{\beta}(y) + \rB{\sup_{\substack{x, y \in G,\\\modulus{x-y}\le N^{-1}}} \modulus{\tg(x) - \tg(y)}} \mathcal{H}^{\beta}\rB{S_{\xi}} \le \\
   &\le \int_{S\cap E} \tg(y) \de \mathcal{H}^{\beta}(y) + o_N(1) \frac{\mu(\mathbb{R}^n)}{\xi}.
   \end{split}
  \]
  and, since $\omega$ is bounded,
  \[
   \begin{split}
    \eta \sum_i \omega\rB{\Theta^*(x_i)} \rho^{\beta} \le C\eta \norm{\omega}_{\infty} \frac{\mathcal{H}^{\beta}(S_{\xi})}{1-\eta}.
   \end{split}
  \]
  That is,
  \begin{equation}
   \label{eq:fred}
   \int_{E\cap S_{\xi}} \omega\rB{\Theta^*(x)} \de \mu(x) \le C\zeta + o_N(1)\frac{1}{\xi} + C\eta\frac{\norm{\omega_{\infty}}}{\xi(1-\eta)} + \int_{S\cap E} \tg(x) \de \mathcal{H}^{\beta}(x).
  \end{equation}
  Then, in~\eqref{eq:fred} we first let $N\to \infty$, then $\zeta \to 0$ and $\eta\to 0$. By the arbitrariness of $\xi > 0$ and the set $E$, we finally get
  \[
   \rB{\omega\circ\Theta^*} \mu\zak\cB{x\in S\biggr| \Theta_1(S, x) = 1} \le \tg \mathcal{H}^{\beta}\zak \cB{x\in S\biggr| \Theta_1(S, x) = 1}.
  \]
  That is, since $S$ is rectifiable,
  \begin{equation}
   \label{eq:est11}
   \modulus{A^+(x) - A^-(x)}\le C\Theta^*(x)\omega\rB{\Theta^*(x)},\qquad \text{ for }\mathcal{H}^{\beta}-{\text{a.e. }}x\in S.
  \end{equation}
  We rewrite~\eqref{eq:est11} as
  \begin{equation}
   \label{eq:est12}
   f\rB{\modulus{A^+(x) - A^-(x)}}\le C\Theta^*(x),\qquad \text{ for }\mathcal{H}^{\beta}-{\text{a.e. }}x\in S.
  \end{equation}
  where $f\coloneqq g^{-1}$. We proceed now with the proof of the second step. Let $E\subset\mathbb{R}^n$ Borel and $\xi > 0$. We re-define $G_{\xi}$ as 
  \[G_{\xi}\coloneqq E\cap \cB{x \in S\biggr|\Theta_1(S, x) = 1\text{ and } \Theta^*(x) > \xi}.\]
  For every $\eta > 0$ and $k \in \mathbb{N}$, we can find a $\rho_k(x) \le k^{-1}$ such that
  \begin{equation}
   \label{eq:rho2}
   \begin{cases}
   	C\rB{1-\eta}\rho_k(x)^{\beta} \le \mathcal{H}^{\beta}\rB{G_{\xi}\cap \ol{B\rB{x, \rho_k(x)}}} \le C(1+\eta)\rho_k(x)^{\beta},\\
   	\Theta^*(x) \le \frac{\mu\rB{B(x, \rho_k(x))}}{\rho_k(x)^{\beta}} + \eta,\\
   	A^+(y) - A^-(y) = A^+(x) - A^-(x),\qquad \forall y \in \ol{B\rB{x, \rho_k(x)}} \cap G_{\xi},\\
   	f\rB{\modulus{A^+(x) - A^-(x)}}\le C\Theta^*(x)\qquad \forall x \in G_{\xi}.
   \end{cases}
  \end{equation}
  As before, for $N>1$, we define the fine cover
  \[
   \mathcal{F}_N\coloneqq \cB{\ol{B\rB{x, \rho_k(x)}}\biggr| x \in G_{\xi},\quad k\ge N},
  \]
  from which we extract a disjoint family $\mathcal{F}'_N = \cB{B\rB{x_i, \rho_i}}_{i \ge 1}$ such that
  \[
   \mathcal{H}^{\beta}\rB{G_{\xi}\setminus \bigcup \mathcal{F}'_N} = 0.
  \]
  We have
  \[
   \begin{split}
    \int_{E\cap S_{\xi}} f\rB{\modulus{A^+(x) - A^-(x)}} \de \mathcal{H}^{\beta} &= \int_{G_{\xi}} f\rB{\modulus{A^+(x) - A^-(x)}} \de \mathcal{H}^{\beta} =\\
    &= \sum_i \int_{G_{\xi} \cap \ol{B\rB{x_i, \rho_i}}} f\rB{\modulus{A^+(x) - A^-(x)}} \de \mathcal{H}^{\beta}  = \\
    &= \sum_i f\rB{\modulus{A^+(x_i) - A^-(x_i)}}\mathcal{H}^{\beta}(G_{\xi} \cap \ol{B\rB{x_i, \rho_i}}) \le\\
    &\le C \rB{1+\eta}\sum_i \Theta^*(x_i) \rho_i^{\beta} \le\\
    &\le C \rB{1+\eta}\sum_i \rB{\frac{\mu\rB{B(x, \rho_i)}}{\rho_i^{\beta}} + \eta}\rho_i^{\beta} \le\\
    &\le C \rB{1+\eta} \mu\rB{B_{\frac{1}{N}}\rB{G_{\xi}}} + C\eta\frac{1+\eta}{1-\eta} \mathcal{H}^{\beta}(S_{\xi}).
   \end{split}
  \]
  As  $N \to \infty$ and $\eta \to 0$, by the arbitrariness of $\xi > 0$ and $E$ we have
  \[
   f\rB{\modulus{A^+ - A^-}} \mathcal{H}^{\beta}\zak S \le C \mu.\qedhere
  \]
	\end{proof}
	From Theorem~\ref{thm:density}, Lemma~\ref{lemma:vmeas},\cite[Proposition 1]{LL} and a slicing argument, we immediately infer the following
\begin{corollary}
 \label{cor:microrot}
 There exist a constant $C > 0$ and $\alpha_0 > 0$ such that for every $0 < \alpha \le \alpha_0$ and every sequence of pairs $(A_j, S_j) \in \mathcal{P}\rB{\epsilon_j, \alpha, L, \tau, \lambda, \ell}$, $\epsilon_j \to 0$ with $\mathcal{F}_{\epsilon_j}(A_j, S_j) \le E_{\text{gb}}(\epsilon_j)$, there exists another sequence $(A_j', S_j') \in \mathcal{P}\rB{\epsilon_j, \alpha, L, \tau, \lambda, \frac{\ell}{2}}$ such that $\mathcal{F}_{\epsilon_j}(A_j', S_j') \le C\mathcal{F}_{\epsilon_j}(A_j', S_j')$ which, up to a subsequence, converges strongly in $L^2(\Omega)$ to a microrotation $A$ and
 \[
  \modulus{A^+ - A^-}\modulus{\log(\modulus{A^+ - A^-})} \mathcal{H}^1\zak S_A \le C \mu,
 \]
 where $\mu$ is the weak$^*$ limit of the measures
 \[
  \mu_j \coloneqq  \frac{1}{\tau \epsilon_j} \dist^2(A_j', \so)\mathcal{L}^2\zak \Omega + \frac{1}{\lambda\epsilon_j} \mathcal{L}^2 \zak S_j'.
 \]
 In particular,
 \[
  C\alpha L \epsilon_j\modulus{\log(\alpha)} \le \mathcal{F}_{\epsilon_j}(A_j, S_j).
 \]
\end{corollary}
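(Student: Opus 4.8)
Here is a plan to prove Corollary~\ref{cor:microrot}, following the scheme of Figure~\ref{pic:strategy}.

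\emph{Step 1 (competitor and compactness).} Given $(A_j,S_j)$ with $\mathcal{F}_{\epsilon_j}(A_j,S_j)\le E_{\text{gb}}(\epsilon_j)$, I would pass to the competing sequence of Remark~\ref{rmk:competitors} (this is exactly the sequence manufactured inside the proof of Theorem~\ref{thm:density}): it satisfies $\mathcal{F}_{\epsilon_j}(A_j',S_j')\le C\mathcal{F}_{\epsilon_j}(A_j,S_j)$, is harmonic off its core, and obeys $\modulus{\Curl A_j'}\le C\mu_{2,A_j'}$. By Corollary~\ref{cor:BV}, after extracting a subsequence, $A_j'\to A$ strongly in $L^2(\Omega)$ with $A\in BV(\Omega)$ and $A(x)\in\so$ for a.e.\ $x$; after a further extraction the measures $\mu_j$ of the statement converge weakly$^*$ to a finite measure $\mu$, and since $A_j'\equiv\mathrm{id}$ on its core one gets $\mu_j(\Omega)\le C\epsilon_j^{-1}\mathcal{F}_{\epsilon_j}(A_j',S_j')$, hence $\mu(\Omega)\le C\liminf_j\epsilon_j^{-1}\mathcal{F}_{\epsilon_j}(A_j,S_j)\le C\alpha L(\modulus{\log\alpha}+1)$.

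\emph{Step 2 (microrotation structure and the weighted jump bound).} Theorem~\ref{thm:density} gives, for each $p$ and $R>0$, some $\ol R\in[R,2R]$ with $\modulus{\Curl A(B(p,\ol R))}\le C\,\omega(\mu(B(p,3R))/R)\,\mu(B(p,R))$. Since $\omega$ is increasing and $\omega(2t)\le C\omega(t)$, after rescaling radii and enlarging constants this is the hypothesis~\eqref{eq:hp_density} of Lemma~\ref{lemma:vmeas} for $T=\Curl A$, $\beta=1$ (this is the ``cover by balls of a common radius'' observation made before that Lemma). Lemma~\ref{lemma:vmeas} then yields that $\modulus{\Curl A}$ is concentrated on $S\coloneqq\{\Theta^*(\mu,\cdot)>0\}$, that $\mathcal{H}^1\zak S$ is $\sigma$-finite, and that $\modulus{\Curl A}\le C(\omega\circ\Theta^*)\mu\zak S$. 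By \cite[Proposition~1]{LL} (as used in Proposition~\ref{prop:BV}) one has $A\in BV(\Omega,\so)$ and $\modulus{DA}\le C\modulus{\Curl A}$, so $\modulus{DA}$ is concentrated on a set of $\sigma$-finite $\mathcal{H}^1$-measure; its absolutely continuous and Cantor parts therefore vanish and $DA=D^JA=(A^+-A^-)\otimes\nu_A\,\mathcal{H}^1\zak S_A$, i.e.\ $A$ is a microrotation. Feeding $\modulus{DA}\le C(\omega\circ\Theta^*)\mu\zak S$ together with $DA=D^JA$ into the final part of the proof of Lemma~\ref{lemma:vmeas} produces $g^{-1}(\modulus{A^+-A^-})\,\mathcal{H}^1\zak S_A\le C\mu$, with $g(t)=t\omega(t)$. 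Since $A^\pm\in\so$ we have $\modulus{A^+-A^-}\in(0,2\sqrt2]$, and on this compact range $t\modulus{\log t}\le Cg^{-1}(t)$ (because $g^{-1}(t)=t/\omega_0$ for $t$ above a fixed threshold, while $g^{-1}(t)\sim t\modulus{\log t}$ as $t\downarrow0$, and both functions are continuous and positive on $(0,\infty)$); this gives $\modulus{A^+-A^-}\modulus{\log\modulus{A^+-A^-}}\,\mathcal{H}^1\zak S_A\le C\mu$.

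\emph{Step 3 (the logarithmic lower bound).} The algebraic point is that $g(t)/t=\omega(t)$ is non-decreasing on $(0,\infty)$ ($=1/\modulus{\log t}$ for small $t$, then the constant $\omega_0$), so $g$ is superadditive and $g^{-1}$ is subadditive. None of the reductions of Step~1 alters the boundary layers once $\epsilon_j$ is small, so $A\equiv R_\alpha$ a.e.\ on $[-L,-L+\tfrac{\ell}{2}]\times[-L,L]$ and $A\equiv R_{-\alpha}$ a.e.\ on $[L-\tfrac{\ell}{2},L]\times[-L,L]$. For $\mathcal{H}^1$-a.e.\ slice $\{y=c\}$ the trace $x\mapsto A(x,c)$ is a one-dimensional, purely-jump $BV$ map taking the two boundary values $R_{\pm\alpha}$; if its jump sizes are $\sigma_i(c)$ then $\sum_i\sigma_i(c)\ge\modulus{R_\alpha-R_{-\alpha}}=2\sqrt2\sin\alpha\ge c_0\alpha$, hence by subadditivity and monotonicity of $g^{-1}$ and the bound $g^{-1}(s)\ge\tfrac12 s\modulus{\log s}$ for small $s$,
\[
 \sum_i g^{-1}(\sigma_i(c))\ \ge\ g^{-1}\!\Bigl(\textstyle\sum_i\sigma_i(c)\Bigr)\ \ge\ g^{-1}(c_0\alpha)\ \ge\ c\,\alpha\modulus{\log\alpha}\qquad(\alpha\le\alpha_0).
\]
Integrating in $c$ and using the coarea formula for the projection $(x,y)\mapsto y$ restricted to the rectifiable set $S_A$,
\[
 c\,\alpha L\modulus{\log\alpha}\ \le\ \int_{-L}^L\sum_i g^{-1}(\sigma_i(c))\,\de c\ =\ \int_{S_A}g^{-1}(\modulus{A^+-A^-})\,\modulus{\nu_A\cdot e_1}\,\de\mathcal{H}^1\ \le\ C\mu(\Omega).
\]
Together with $\epsilon_j^{-1}\mathcal{F}_{\epsilon_j}(A_j,S_j)\ge c\,\mu_j(\Omega)$ and $\liminf_j\mu_j(\Omega)\ge\mu(\Omega)$ this gives $\mathcal{F}_{\epsilon_j}(A_j,S_j)\ge C\alpha L\epsilon_j\modulus{\log\alpha}$ for all large $j$ (hence along the whole sequence, by applying the argument to an arbitrary subsequence).

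The main obstacle is the opening of Step~2: upgrading the ``for some $\ol R\in[R,2R]$'' conclusion of Theorem~\ref{thm:density} to the ``for every $R$'' hypothesis of Lemma~\ref{lemma:vmeas}, and keeping track of how the modulus $\omega$ of~\eqref{eq:189} transforms under the accompanying covering and rescaling (precisely the remark preceding Lemma~\ref{lemma:vmeas}). A secondary, but delicate, point is the sharp asymptotics $g^{-1}(t)\sim t\modulus{\log t}$ near $0$ and the superadditivity of $g$: these are exactly what make both the intermediate weighted estimate and the slicing lower bound logarithm-accurate, and they hinge on the specific form of $\omega$. One should also verify that Lemmas~\ref{lemma:wlog_bdd},~\ref{lemma:wlog_bdd_totvar_curl} and Proposition~\ref{proposition:wlog_harmonic} leave the boundary strips untouched for small $\epsilon_j$, so that the limit $A$ inherits the fixed rotation gap $\sim\alpha$ that the slicing detects.
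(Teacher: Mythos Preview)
Your proposal is correct and follows exactly the route the paper indicates (``From Theorem~\ref{thm:density}, Lemma~\ref{lemma:vmeas}, \cite[Proposition 1]{LL} and a slicing argument''): competitor plus Corollary~\ref{cor:BV} for compactness, the density estimate fed into Lemma~\ref{lemma:vmeas} together with $\modulus{DA}\le C\modulus{\Curl A}$ to obtain the microrotation structure and the weighted jump inequality, then slicing with the subadditivity of $g^{-1}$ for the lower bound. Your identification of the one genuine technical wrinkle---passing from ``some $\ol R\in[R,2R]$'' in Theorem~\ref{thm:density} to the ``every $R$'' hypothesis of Lemma~\ref{lemma:vmeas}---is exactly the point the paper flags in the paragraph preceding that lemma, and your doubling/rescaling remedy (or, equivalently, choosing the good radii inside the covering argument of Lemma~\ref{lemma:vmeas}) is the intended fix.
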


\begin{remark}
	Without imposing the first quantization of the Burgers vector, one could obtain another estimate involving the square root of the logarithm instead, which would be again optimal in such a different context.
\end{remark}


\begin{thebibliography}{}

\bibitem{LM}   
S.~Luckhaus and L.~Mugnai.
\emph{On a mesoscopic many-body {H}amiltonian describing elastic shears and dislocations.}
Continuum Mechanics and Thermodynamics {\bf 22}, 251--290 (2010).
\bibitem{LW}
S.~Luckhaus and J.~Wohlgemuth.
\emph{Study of a model for reference-free plasticity.}
arXiv.org (August 2014).
\bibitem{LL}
G.~Lauteri and S.~Luckhaus.
\emph{A geometric rigidity estimate for incompatible fields in dimension $\ge 3$.}
arXiv.org (February 2017).
\bibitem{GLP}
A.~Garroni, G.~Leoni and M.~Ponsiglione.
\emph{Gradient theory for plasticity via homogenization of discrete dislocations.}
Journal of the European Mathematical Society (JEMS) {\bf 12}, 1231--1266 (2010).
\bibitem{DLGP}
L.~De Luca, A.~Garroni and M.~Ponsiglione.
\emph{{$\Gamma$}-convergence analysis of systems of edge dislocations: the self energy regime.}
Archive for Rational Mechanics and Analysis {\bf 206}, 885-910 (2012).
\bibitem{CE}
P.~Cermelli.
\emph{Material symmetry and singularities in solids.}
The Royal Society of London. Proceedings. Series A. Mathematical, Physical and Engineering Sciences {\bf 455}, 299--322 (1999).
\bibitem{KO}
K.~Kondo.
\emph{On the analytical and physical foundations of the theory of dislocations and yielding by the differential geometry of continua.}
International Journal of Engineering Science {\bf 2}, 219--251, (1964).
\bibitem{KR}
E.~Kr\"oner.
\emph{Allgemeine Kontinuumstheorie der Versetzungen und Eigenspannungen.}
Archive for Rational Mechanics and Analysis {\bf 4}, 273--334 (1959).
\bibitem{BBS}
B.A.~Bilby, R.~Bullough and E.~Smith.
\emph{Continuous Distributions of Dislocations: A New Application of the Methods of Non-Riemannian Geometry.}
Proceedings of the Royal Society. London. Series A. Mathematical, Physical and Engineering Sciences {\bf 231}, 263--273, (1955).
\bibitem{GO}
G.~Gottstein.
\emph{Physical foundations of materials science.}
Springer Science \& Business Media, (2013).
\bibitem{RS}
W.T.~Read and W.~Shockley.
\emph{Dislocation models of crystal grain boundaries.}
Physical Review {\bf 78}, 275, (1950).
\bibitem{CC}
E.~Cosserat and F.~Cosserat.
\emph{Théorie des corps déformables.}
Hermann et fils, Paris, (1909).
\bibitem{SS}
E.~Sandier and S.~Serfaty.
\emph{Vortices in the magnetic {G}inzburg-{L}andau model.}
Birkh\"auser Boston, Inc., Boston, MA, (2007).
\bibitem{FJM}
G.~Friesecke, R.~James and S.~M\"uller.
\emph{A theorem on geometric rigidity and the derivation of nonlinear plate theory from three-dimensional elasticity.}
Communications on Pure and Applied Mathematics {\bf 55}, 1461--1506, (2002).
\bibitem{MSZ}
S.~M{\"u}ller, L.~Scardia and C.I.~Zeppieri.
\emph{Geometric rigidity for incompatible fields, and an application to strain-gradient plasticity.}
Indiana University Mathematics Journal {\bf 63}, 1365--1396, (2014).
\bibitem{EG}
L.C.~Evans and R.~Gariepy.
\emph{Measure theory and fine properties of functions.}
CRC Press, Boca Raton, FL, (2015).
\bibitem{AFP}
L.~Ambrosio, N.~Fusco and D.~Pallara.
\emph{Functions of bounded variation and free discontinuity problems.}
The Clarendon Press, Oxford University Press, New York, (2000).
\bibitem{DL}
R.~Dautray and J-L.~Lions.
\emph{Analyse math\'ematique et calcul num\'erique pour les sciences et les techniques. {T}ome 3.}
Masson, Paris, (1985).
\end{thebibliography}
\end{document}